\def\qu#1 {\fbox {\footnote {\ }}\ \footnotetext { From Qu: {\color{red}#1}}}
\def\hqu#1 {}
\def\li#1 {\fbox {\footnote {\ }}\ \footnotetext { From Li: {\color{blue}#1}}}
\def\hyin#1 {}
\newtheorem{Th}{Theorem}[section]
\newtheorem{Cor}[Th]{Corollary}
\newtheorem{Lemma}[Th]{Lemma}
\newtheorem{example}{Example}
\newtheorem{Rem}[Th]{Remark}
\newtheorem{Alg}[Th]{Algorithm}
\newcommand{\tr}{{\rm Tr}}
\newcommand{\gf}{{\mathbb F}}
\newcommand{\figcaption}{\def\@captype{figure}\caption}
\newcommand{\tabcaption}{\def\@captype{table}\caption}
\begin{document}
	\title{New Constructions of Permutation Polynomials of the Form  $x^rh\left(x^{q-1}\right)$ over $\gf_{q^2}$}	
	\author{{Kangquan Li, Longjiang Qu and Qiang Wang}
		\thanks{Kangquan Li and Longjiang Qu are with the College of Science,
			National University of Defense Technology, Changsha, 410073, China.
			Qiang Wang is with School of Mathematics and Statistics, 			Carleton University, 1125 Colonel By Drive, Ottawa, Ontario, K1S 5B6,
			Canada.			
			The research of Longjiang Qu is partially supported by the National Basic Research Program of China (Grant No. 2013CB338002),
			the Nature Science Foundation of China (NSFC) under Grant 61272484, 11531002, 61572026,
			the Program for New Century Excellent Talents in University (NCET)
			and the Basic Research Fund of National University of Defense Technology (No.CJ 13-02-01). 
			The research of Qiang Wang is partially supported by NSERC of Canada.
			
			E-mail: likangquan11@nudt.edu.cn, ljqu\_happy@hotmail.com, wang@math.carleton.ca.		
			
			Corresponding author: Longjiang Qu\newline}}
	\maketitle{}

\begin{abstract}
	Permutation polynomials over finite fields have been studied extensively recently due to  their wide applications in cryptography, coding theory, communication theory,  among others.  
	Recently, several authors have studied permutation trinomials of the form $x^rh\left(x^{q-1}\right)$ over $\gf_{q^2}$, where $q=2^k$, $h(x)=1+x^s+x^t$ and $r, s, t, k>0$ are integers.
     Their methods are  essentially usage of a multiplicative version of AGW Criterion  because  they all transformed the problem of proving permutation polynomials over $\gf_{q^2}$ into that of showing the corresponding fractional polynomials permute a smaller set $\mu_{q+1}$, where $\mu_{q+1}:=\{x\in\mathbb{F}_{q^2} : x^{q+1}=1\}$.
	 Motivated by these results, we characterize the permutation polynomials of the form  $x^rh\left(x^{q-1}\right)$ over $\gf_{q^2}$ such that $h(x)\in\gf_q[x]$ is arbitrary and  $q$ is also an arbitrary prime power. Using AGW Criterion twice, one is multiplicative and the other is additive, we reduce the problem of proving permutation polynomials over $\gf_{q^2}$ into that  of showing permutations over a  small subset $S$ of a proper subfield $\gf_{q}$, which is significantly different from previously known methods.  In particular,  we demonstrate our method by constructing many new explicit classes of permutation polynomials of the form $x^rh\left(x^{q-1}\right)$ over $\gf_{q^2}$. {Moreover, we can explain most of the known permutation trinomials, which are in \cite{DQ,LiNian1,LikangquanFFa,LikangquanConstructed,Zhazhengbang,RS}, over finite field with even characteristic}. 
	 \newline \indent {\bfseries MSC:} 06E30, 11T06, 94A60
\end{abstract}

\begin{IEEEkeywords}
	Finite Fields, Permutation Polynomials, Rational Function, AGW Criterion
\end{IEEEkeywords}

\section{Introduction}

Let $\gf_q$ be the finite field with $q =p^k$ elements. A polynomial $f\in\gf_q[x]$ is called a \emph{permutation polynomial} (PP) if the induced mapping $x\to f(x)$ is a permutation of $\gf_q$. The study of permutation polynomials over finite fields attracts a lot of interest for many years due to their wide applications in coding theory \cite{LC,ST,CD}, cryptography \cite{RSL} and combinatorial designs  \cite{DJ}. For example,  interesting cycle codes from several classes of permutation monomials and trinomials are constructed by Ding  in \cite{CD}.  Moreover,   a family of skew Hadamard difference sets via the Dickson permutation polynomial of order five  are constructed by Ding and Yuan in \cite{DJ}.  The latter discovery disproved the longstanding conjecture on skew Hadamard difference sets. 

In 2011, Akbary, Ghioca and Wang \cite{AGW} gave the following result, which is called AGW Criterion. 


\begin{Th}
	(\cite{AGW}, AGW Criterion)
	Let $A, S$ and $\overline{S}$ be finite sets with $\# S=\# \overline{S}$, and let $f: A\to A,$ $h: S\to \overline{S}$, $\lambda: A\to S$ and $\overline{\lambda}: A\to\overline{S}$ be maps such that $\bar{\lambda}\circ f=h\circ \lambda$. If both $\lambda$ and $\bar{\lambda}$ are surjective, then the following statements are equivalent:
	\begin{enumerate}[(i)]
		\item $f$ is a bijection; and
		\item $h$ is a bijection from $S$ to $\overline{S}$ and $f$ is injective on $\lambda^{-1}(s)$ for each $s\in S$.
	\end{enumerate}
\end{Th}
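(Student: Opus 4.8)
The plan is to read the hypothesis $\overline{\lambda}\circ f=h\circ\lambda$ as a commutative square, with $A$ on the top row (self-mapped by $f$), the equicardinal sets $S$ and $\overline{S}$ on the bottom row (joined by $h$), and the vertical surjections $\lambda,\overline{\lambda}$ connecting them, and then to prove the equivalence by a short element chase around this square. I would treat the two implications separately. Throughout, the organizing simplification is finiteness: a map between two finite sets of equal cardinality is a bijection as soon as it is injective or surjective. Since $A$ is self-mapped and $\#S=\#\overline{S}$ by hypothesis, every bijectivity claim will reduce to verifying just injectivity or just surjectivity, whichever is cheaper.

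For (i)$\Rightarrow$(ii), I assume $f$ is a bijection. The second half of (ii), that $f$ is injective on each fiber $\lambda^{-1}(s)$, is immediate because $f$ is injective on all of $A$. To obtain that $h$ is a bijection it then suffices, by $\#S=\#\overline{S}$, to prove $h$ surjective: given $\bar{s}\in\overline{S}$, surjectivity of $\overline{\lambda}$ produces $a\in A$ with $\overline{\lambda}(a)=\bar{s}$, surjectivity of $f$ produces $b\in A$ with $f(b)=a$, and the commutation relation then gives $h(\lambda(b))=\overline{\lambda}(f(b))=\overline{\lambda}(a)=\bar{s}$, exhibiting $\bar{s}$ as a value of $h$.

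For (ii)$\Rightarrow$(i), I assume $h$ is a bijection and $f$ is injective on each $\lambda^{-1}(s)$, and I prove $f$ injective, which by finiteness of $A$ upgrades to bijective. Suppose $f(a)=f(b)$. Applying $\overline{\lambda}$ and invoking commutativity gives $h(\lambda(a))=\overline{\lambda}(f(a))=\overline{\lambda}(f(b))=h(\lambda(b))$; since $h$ is injective, $\lambda(a)=\lambda(b)=:s$, so $a$ and $b$ lie in the single fiber $\lambda^{-1}(s)$, on which $f$ is injective by hypothesis, forcing $a=b$.

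I do not expect a genuine obstacle, since the argument is a routine diagram chase; the only point demanding care is the repeated, correct use of finiteness together with $\#S=\#\overline{S}$ to pass from one-sided injectivity or surjectivity to full bijectivity. It is worth recording which hypotheses actually do the work: the commutativity $\overline{\lambda}\circ f=h\circ\lambda$ is the engine transporting information between the top and bottom rows in both directions, surjectivity of $\overline{\lambda}$ together with bijectivity of $f$ drives the surjectivity of $h$ in the first implication, and injectivity of $h$ drives the fiberwise reduction in the second. Notably, the surjectivity of $\lambda$ is not consumed by this particular route.
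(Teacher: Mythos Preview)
Your proof is correct. However, the paper does not supply its own proof of this statement: Theorem~1.1 is quoted from \cite{AGW} and stated without proof, so there is no in-paper argument to compare against. Your diagram chase is the standard one, and your observation that surjectivity of $\lambda$ is not actually consumed is accurate for the route you chose.
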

AGW Criterion can be represented by the following simple diagram.
\begin{equation*}
\xymatrix{
	A \ar[rr]^{f}\ar[d]_{\lambda} &   &  A  \ar[d]^{\overline{\lambda}} \\
	S	 \ar[rr]^{h} &  & \overline{S} }
\end{equation*}

The importance of the AGW Criterion depends on that it can be used not only to explain some previous constructions of PPs, but also to construct numerous new classes. For example,  Akbary, Ghoica and Wang \cite{AGW} applied their approach into different cases (i.e., multiplicative group case, elliptic curve case, additive group case) and obtained many interesting results. Furthermore, Yuan and Ding \cite{YuanDing} obtained descriptions of some permutation polynomials with special forms, such as $f(x)=g(B(x))+\sum_{i=1}^{r}\left(L_i(x)+\delta_i\right)h_i(B(x))$, $p(x)=f(x)g(\lambda(x))$ and so on. More classes of PPs of the form $L(x) +  g(x^q -x + \delta) \in \mathbb{F}_{q^n}[x]$, where $L(x) $ is a linearized polynomial and  $g(x)^q = g(x)$ were given in \cite{YuanDing:14}.   Recently, by employing the AGW Criterion two times, Zheng, Yuan and Pei \cite{Zheng} found a series of simple conditions for $$f(x)=\left(ax^q+bx+c\right)^r\phi\left(\left(ax^q+bx+c\right)^{\left.\left(q^2-1\right)\middle/d\right.}+ux^q+vx\right)\in\gf_{q^2}[x]$$ to permute $\gf_{q^2}$. 
Readers can consult \cite{XH1} for a recent survey on constructions of permutation polynomials.

 Permutation trinomials over finite fields are in particular interesting for their simple algebraic forms and additional extraordinary properties. For instances, Dobbertin \cite{HD} proved that the power function $x^{2^m+3}$ on $\gf_{2^{2m+1}}$ is an {APN} function and  the key of his proof was the discovery of a class of permutation trinomials. The discovery of {another} class of permutation trinomials by Ball and Zieve \cite{SM} provided a way to prove the construction of the Ree-Tits symplectic spreads of $\mathbf{PG}(3,q)$. Hou \cite{XH4} acquired a necessary and sufficient condition about determining a special permutation trinomial ($ax+bx^q+x^{2q-1}$ over $\gf_{q^2}$) through the Hermite Criterion.  In 2015, Ding et al. \cite{DQ}, presented a few class of permutation trinomials over finite fields with even characteristic. Since then, there have been increasingly attention on constructing permutation trinomials over finite fields, in particular, over finite fields with even characteristic. 

Recently, several authors \cite{LiNian1,LiNian2, LikangquanConstructed,LikangquanFFa,Zhazhengbang,RS} constructed permutation trinomials of the form  $x^rh\left(x^{q-1}\right)$ over $\gf_{q^2}$, where $q=2^k$, $h(x)=1+x^s+x^t$ and $r, s, t, k>0$ are integers. The main methods they used were similar, depending on the following criterion about permutation polynomials of the form {$x^rh\left(x^{(q-1)/d}\right)$} over $\gf_q$. In fact, the AGW Criterion is a generalization of the following lemma obtained by several authors \cite{PL,WQ,Zieve}.
\begin{Lemma}
	\cite{PL,WQ,Zieve}
	\label{lem1}
	Pick $d,r > 0$ with $d\mid (q-1)$, and let $h(x)\in\mathbb{F}_q[x]$. Then $f(x)=x^rh\left(x^{\left.(q-1)\middle/d\right.}\right)$ permutes $\gf_q$ if and only if both
	\begin{enumerate}[(1)]
		\item $\mathrm{gcd}\left(r,\left.(q-1)\middle/d\right.\right)=1$ and
		\item $x^rh(x)^{\left.(q-1)\middle/d\right.}$ permutes  $\mu_d :=\{x\in\mathbb{F}_q : x^d=1\}$. 
	\end{enumerate}
\end{Lemma}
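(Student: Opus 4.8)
The plan is to apply the AGW Criterion (the Theorem above) in its multiplicative form, with $A=\gf_q^*$, $S=\overline{S}=\mu_d$, and both $\lambda$ and $\overline{\lambda}$ taken to be the power map $\lambda(x)=x^{(q-1)/d}$. First I would handle the normalization at $0$: since $r>0$ we have $f(0)=0$, so $f$ permutes $\gf_q$ if and only if $f$ maps $\gf_q^*$ bijectively onto $\gf_q^*$. Writing $e=(q-1)/d$, the map $x\mapsto x^e$ surjects $\gf_q^*$ onto $\mu_d$ (because $\gf_q^*$ is cyclic of order $q-1$ and $e\mid q-1$, so the image is the unique subgroup of order $d$), and hence $f(x)=x^rh(x^e)$ vanishes at some $x\neq 0$ precisely when $h$ has a root in $\mu_d$. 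In that degenerate situation $f$ is not injective (two preimages of $0$) and, simultaneously, $x^rh(x)^e$ sends that root to $0\notin\mu_d$, so condition (2) fails as well; both sides of the asserted equivalence are then false, and there is nothing to prove. So I may assume $h$ has no root in $\mu_d$, whence $f\colon\gf_q^*\to\gf_q^*$ and the induced map $\overline{f}(y):=y^rh(y)^e$ really does send $\mu_d$ into $\mu_d$, since $\overline{f}(y)^d=(y^d)^rh(y)^{q-1}=1$.

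Next I would verify the intertwining relation $\lambda\circ f=\overline{f}\circ\lambda$ by a one-line computation, $f(x)^e=(x^e)^r\,h(x^e)^e=\overline{f}(\lambda(x))$, which places us exactly in the AGW diagram with $\overline{f}$ in the role of the bottom map. As $\lambda$ is surjective, the AGW Criterion yields that $f$ permutes $\gf_q^*$ if and only if (a) $\overline{f}$ permutes $\mu_d$ and (b) $f$ is injective on each fiber $\lambda^{-1}(s)$, $s\in\mu_d$. Statement (a) is verbatim condition (2), so the only remaining task is to identify (b) with condition (1).

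For the fiber analysis I would observe that each fiber $\lambda^{-1}(s)=\{x:x^e=s\}$ is a coset of $\ker\lambda=\mu_e$, and that on such a fiber $h(\lambda(x))=h(s)$ is a fixed nonzero constant, so $f$ restricts to $x\mapsto h(s)\,x^r$. Injectivity of this restriction is equivalent to injectivity of the $r$-th power map on the coset, i.e.\ to the implication $\zeta^r=1\Rightarrow\zeta=1$ for $\zeta\in\mu_e$. Since $\mu_e$ is cyclic of order $e$, this holds uniformly over all fibers exactly when $\gcd(r,e)=\gcd\!\left(r,(q-1)/d\right)=1$, which is precisely condition (1). Combining the two equivalences with the degenerate case disposed of above completes the argument. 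The step requiring the most care is not any single computation but the bookkeeping in the degenerate case where $h$ vanishes on $\mu_d$ (which must be excluded before $f$ can legitimately be viewed as a self-map of $\gf_q^*$), together with the clean reduction of fiber-injectivity to the statement that the power map $\zeta\mapsto\zeta^r$ on the cyclic group $\mu_e$ is injective iff $\gcd(r,e)=1$; everything else is a direct unwinding of the AGW diagram.
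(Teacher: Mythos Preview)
Your argument is correct. Note, however, that the paper does not supply its own proof of this lemma: it is quoted as a known result with citations to \cite{PL,WQ,Zieve}, and the paper merely remarks that ``the AGW Criterion is a generalization of the following lemma.'' Your proof via the multiplicative AGW diagram is precisely the derivation that remark alludes to, so there is no divergence to discuss.
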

According to Lemma \ref{lem1}, the problem of proving permutation trinomials $f(x)=x^rh\left(x^{q-1}\right)$ over $\gf_{q^2}$ is transformed into that of showing $g(x)=x^rh(x)^{q-1}$ permutes $\mu_{q+1}$. In this paper, $g(x)$   is called as {\emph{the corresponding fractional polynomials} } of $f(x)$. Generally, verifying that $g(x)$  permutes $\mu_{q+1}$  is also difficult. However, when the degree is not high  and the number of terms in the corresponding fractional polynomial $g(x)$ is not large, this problem may be solved 
by proving that $g(x)\neq g(y)$ for any $x\neq y\in\mu_{q+1}$. For convenience, we call this method as \emph{the fractional approach}. However, with the increment of the degree or the number of terms in the corresponding fractional polynomial, this problem becomes more difficult.  

Motivated by these constructions  \cite{DQ,LiNian1,LiNian2,LikangquanCCDS,LikangquanConstructed,LikangquanFFa,Zhazhengbang,RS}, we want to unify and generalize those results as widely as possible. In this paper, we characterize permutation polynomials of the form  $f(x)=x^rh\left(x^{q-1}\right)\in\gf_q[x]$ over $\gf_{q^2}$,  such as $h(x) \in \mathbb{F}_q[x]$ is arbitrary,  regardless of the value of characteristic of $\gf_{q^2}$.  We also provide a general way to construct permutation polynomials of these forms and demonstrate our method by constructing many explicit classes of permutation polynomials.

Let $\tr$ denote the trace function from $\gf_q$ to $\gf_p$ and  $\eta$ be a quadratic character over $\gf_{q}$ throughout this paper. Let 
\begin{equation}
\label{S}
S :=\begin{cases}
\{a\in\gf_q^{*}:\tr\left(\frac{1}{a}\right)=1\} & \text{ if $\mathrm{char}\gf_{q^2}$ is even},\\
\{a\in\gf_q:\eta\left(a^2-4\right)=-1\} & \text{ if $\mathrm{char}\gf_{q^2}$ is odd}.
\end{cases}
\end{equation}
The key point of our new method  is the following commutative diagram which is based on the AGW Criterion. 
\begin{equation}
\label{Diagram}
\begin{split}
\xymatrix{
	\gf_{q^2} \ar[rr]^{f(x)} \ar[d]_{x^{q-1}} &  &  \gf_{q^2} \ar[d]^{x^{q-1}}\\
	\mu_{q+1} \ar[rr]^{g(x)}\ar[d]_{x+x^q} &   & \mu_{q+1}  \ar[d]^{x+x^q} \\
	\{2,-2\}\cup S	 \ar[rr]^{R(x)} &  & \{2,-2\} \cup S } 
\end{split}
\end{equation}
We note that when $q$ is even, $\{2, -2\} \cup S$ is actually $\{0\} \cup S$.   Here we  employ the AGW criterion twice. The first one is the application of Lemma \ref{lem1}, which reduces the permutation of $\gf_{q^2}$ to a permutation of $\mu_{q+1} := \{ x \in \gf_{q^2}:  x^{q+1} =1 \}$.  On the basis of this, we use AGW criterion again, transforming the problem of determining whether $g(x)=x^rh(x)^{q-1}$ permutes $\mu_{q+1}$ into that of verifying whether a rational function $R(x)$ (defined later) permutes a subset of $\gf_q$.  Our method is different from the fractional approach. Not only the size of subset that we consider is further reduced (from $q+1$ to $\lceil\frac{q-1}{2}\rceil$), but also the subset can be contained in the subfield $\gf_{q}$. We demonstrate that a permutation of such subset can be easily verified in many occasions.  As consequences, we can reconstruct several known results easily and also construct many new explicit classes of permutation polynomials with more terms over $\gf_{q^2}$.

The remainder of this paper is organized as follows.  In Section 2, we explain our method to characterize or to construct permutation polynomials of the form  $x^rh\left(x^{q-1}\right)$ over $\gf_{q^2}$. Then we demonstrate our method by constructing many explicit classes of permutation polynomials in Section 3 and 4 respectively, when the characteristic of $\gf_{q^2}$ is even and odd respectively.  In the case of even characteristic, we reduce the problem into a problem of constructing rational functions $L(b) = b + l(b) + l(b)^2$ that permutes $T := \{ b\in \gf_q: \tr(b) = 1\}$. In particular, if $l(b)$ is a linearized polynomial, then the latter is equivalent to constructing rational functions $L(b) = b + l(b) + l(b)^2$ that permutes the subfield $\gf_q$.   In the case of odd characteristic, we focus on rational functions $L(b) = bl(b)^2$ permuting 
$T :=\{ b \in \gf_q : \eta(b) = -1, \eta(b+4)=1 \}$, where $\eta$ is a quadratic character over $\gf_{q}$.  Many explicit classes of permutation polynomials using simple choices of $l(b)$'s are constructed in both cases. 
In Section 5, we use this new method  to derive several recent constructions of permutation trinomials over finite fields with even characteristic. It is quite interesting that most of the known classes of permutation trinomials in \cite{DQ,LiNian1,LikangquanFFa,LikangquanConstructed,Zhazhengbang,RS} can be obtained in this way  (see TABLE V).  Finally Section 6 is the conclusion.


\section{A New method of constructing permutation polynomials over $\gf_{q^2}$}

Let $q=p^k$, where $p$ is a prime, and $f(x)=x^rh\left(x^{q-1}\right)\in\gf_q[x]$. In this section, we consider the permutation property of $f(x)$ over $\gf_{q^2}$. According to Lemma \ref{lem1}, it suffices to consider whether $g(x)$ permutes $\mu_{q+1}$. For $x\in\mu_{q+1}\backslash\{1,-1\},$ let $a=x+x^q=x+x^{-1}\in\gf_q$ and $S:=\{a=x+x^{-1}: x\in\mu_{q+1}\backslash\{1,-1 \}\}$. In the following, we determine the set $S$. On one hand, when $\mathrm{char}\gf_{q^2}=2$,  we know that the equation $x^2+ax+1=0$ has no solution in $\gf_{q}$ since $x\in\mu_{q+1}\backslash\{1\}$. Due to the following lemma, we have that $\tr\left(\frac{1}{a}\right)=1$. Therefore, in this case, $S:=\{a\in\gf_q^{*}:\tr\left(\frac{1}{a}\right)=1\}$.

\begin{Lemma}
	\cite{LN}
	\label{lem2}
	Let $q=2^k$, where $k$ is a positive integer. The quadratic equation $x^2+ux+v=0$, where $u,v\in\gf_{q}$ and $u\neq 0$, has roots in $\gf_q$ if and only if $\tr\left(\frac{v}{u^2}\right)=0$.
\end{Lemma}

 On the other hand, when $\mathrm{char}\gf_{q^2}$ is odd, we know that the equation $x^2-ax+1=0$ has no solution in $\gf_q$, neither. Then $\Delta=a^2-4$ is a nonsquare in $\gf_q$, i.e., $\eta\left(a^2-4\right)=-1$, where $\eta(\cdot)$ is  the quadratic residue. Hence, we have

\begin{equation}
\label{S}
S :=\begin{cases}
\{a\in\gf_q^{*}:\tr\left(\frac{1}{a}\right)=1\} & \text{ if $\mathrm{char}\gf_{q^2}$ is even},\\
\{a\in\gf_q:\eta\left(a^2-4\right)=-1\} & \text{ if $\mathrm{char}\gf_{q^2}$ is odd}.
\end{cases}
\end{equation}

According to the diagram (\ref{Diagram}), for verifying the permutation property of $g(x)$ over $\mu_{q+1}$, the key point is to prove the permutation property of $R(a)$ over $S$. But it is not trivial to obtain $R(a)$ from $g(x)$. We need the following algorithm to reduce the degree of $h(x)$.  Then we derive $R(a)$ from $h(x)$.

\begin{Alg}
	\label{alg1} 
    Let $x\in\mu_{q+1}$ and  $a=x+x^{-1}$. 
    \begin{enumerate}[1:]
    	\item {\bfseries Input} $h(x)$;
    	\item Plugging $x^2=ax-1$ into $h(x)$ and simplifying $h(x)$;
    	\item while $\mathrm{deg}h(x)>1$ then
    	\item \quad repeat \emph{2};
    	\item end while;
    	\item {\bfseries Output} the coefficients $h_1(a),h_2(a)$  of $h(x)$;
    \end{enumerate}
\end{Alg}
 
\begin{example}
 Let $\mathrm{char}\gf_q=2$ and $h(x)=1+x^2+x^{-1}$. Then $h(x)=1+ax+1+a+x=(a+1)x+a$. Therefore, $h_1(a)=a+1$ and $h_2(a)=a$ in the case.
\end{example}

{\begin{Rem}
		Let $n>0$ be an integer.  Assume $x^n=\phi_n(a) x+\chi_n(a)$, where $\phi_n(a), \chi_n(a)\in\gf_q[a]$ according to Algorithm \ref{alg1}.  Then it is trivial that $\phi_1(a)=1, \phi_2(a)=a$, $\chi_1(a)=0$ and $\chi_2(a)=-1$. Moreover, $x^{n+1}=\phi_{n+1}(a)x+\chi_{n+1}(a)=\left(a\phi_n(a)+\chi_n(a)\right)x-\phi_n(a).$ Therefore, $\phi_{n+1}(a)=a\phi_n(a)+\chi_n(a)$ and $\chi_{n+1}(a)=-\phi_n(a)$. 
\end{Rem}}

	Let $e\ge 1$ be an integer. It is trivial that $x^e+x^{-e}$ can be expressed as a function of $a$,  denoted by $D_e(a)$, through several iterations, where $a=x+x^{-1}$.  We remark $D_e(a)$ is known as Dickson polynomial of the first kind.  For example, $D_3(a)=a^3-3a$. And for any $e$, it is clear that $D_e(2)=2$.  The following result is the main theorem in this paper. 

\begin{Th}
	\label{Maintheorem}
	Let $f(x)=x^rh\left(x^{q-1}\right)\in\gf_{q^2}[x]$ such that all coefficients of $h(x)$ belong to $\gf_q$ and $S$ be the set defined in (\ref{S}). Let $a= x+x^{-1}$.  By Algorithm \ref{alg1}, we get $h_1(a)$ and $h_2(a)$ from $h(x)=h_1(a)x+h_2(a)$. Assume that $$R(a)=\frac{h_1^2(a)D_{r-2}(a)+h_2^2(a)D_r(a)+2h_1(a)h_2(a)D_{r-1}(a)}{h_1^2(a)+h_1(a)h_2(a)a+h_2^2(a)}.$$ Then $f(x)$ permutes $\gf_{q^2}$ if and only if the following conditions hold simultaneously:
	\begin{enumerate}[(i)]
		\item $\gcd(r,q-1)=1$;
		\item for the corresponding fractional polynomial $g(x)=x^rh(x)^{q-1}$, $g(x)=1$ has only one solution $x=1$ in $\mu_{q+1}$ and $g(x)=-1$ has only one solution $x=-1$ in $\mu_{q+1}$;
	    \item $h(x)\neq0$ for any $x\in\mu_{q+1}$;
		\item $R(a)$ permutes $\{2,-2\}\cup S$.
	\end{enumerate}
\end{Th}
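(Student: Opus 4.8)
The plan is to make the commutative diagram (\ref{Diagram}) rigorous by applying the AGW Criterion on each of its two horizontal layers. First I would invoke Lemma \ref{lem1} with $q$ replaced by $q^2$ and $d=q+1$, so that $(q^2-1)/d=q-1$: this shows at once that $f(x)=x^rh(x^{q-1})$ permutes $\gf_{q^2}$ if and only if $\gcd(r,q-1)=1$ (condition (i)) together with the statement that the corresponding fractional polynomial $g(x)=x^rh(x)^{q-1}$ permutes $\mu_{q+1}$. The remaining task is therefore to translate ``$g$ permutes $\mu_{q+1}$'' into conditions (ii)--(iv), which is the content of the lower square of (\ref{Diagram}).

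Before applying AGW a second time I would check that $g$ is a well-defined self-map of $\mu_{q+1}$. Since every coefficient of $h$ lies in $\gf_q$ and $x^q=x^{-1}$ on $\mu_{q+1}$, one has $h(x)^q=h(x^{-1})$, hence $g(x)=x^rh(x^{-1})/h(x)$ and $g(x)^{q+1}=\bigl(h(x)h(x^{-1})\bigr)^{q-1}$. As $h(x)h(x^{-1})$ is fixed by the Frobenius map $y\mapsto y^q$ it lies in $\gf_q$, so $g(x)\in\mu_{q+1}$ precisely when $h(x)h(x^{-1})\neq 0$; this is where condition (iii) enters, and its necessity is clear because $g(x)=0\notin\mu_{q+1}$ whenever $h(x)=0$. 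With (iii) in force I would set $\lambda=\overline{\lambda}\colon x\mapsto x+x^{-1}$, which by the definition of $S$ surjects $\mu_{q+1}$ onto $\{2,-2\}\cup S$, and verify the commutativity $\lambda\circ g=R\circ\lambda$. This is the computational heart of the proof: writing $h(x)=h_1(a)x+h_2(a)$ via Algorithm \ref{alg1} and using $x^{-1}=a-x$ gives $h(x)h(x^{-1})=h_1^2+h_1h_2a+h_2^2$ for the denominator, while expanding $g(x)+g(x)^{-1}=\bigl(x^rh(x^{-1})^2+x^{-r}h(x)^2\bigr)/\bigl(h(x)h(x^{-1})\bigr)$ and collecting the symmetric pairs $x^e+x^{-e}=D_e(a)$ produces exactly the claimed numerator $h_1^2D_{r-2}(a)+2h_1h_2D_{r-1}(a)+h_2^2D_r(a)$. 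Thus $g(x)+g(x)^{-1}=R(a)$ depends only on $a$, so $R$ is a well-defined self-map of $\{2,-2\}\cup S$ and the lower square commutes.

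With the diagram established, the AGW Criterion says $g$ permutes $\mu_{q+1}$ if and only if $R$ permutes $\{2,-2\}\cup S$ (condition (iv)) and $g$ is injective on each fiber $\lambda^{-1}(a)$. The fibers over $a=\pm 2$ are the singletons $\{1\}$ and $\{-1\}$, on which injectivity is automatic; for $a\in S$ the fiber is $\{x,x^{-1}\}$ with $x\neq x^{-1}$. Here I would exploit the identity $g(x)g(x^{-1})=1$, immediate from $g(x)=x^rh(x^{-1})/h(x)$, which gives $g(x^{-1})=g(x)^{-1}$; hence $g$ fails to be injective on this fiber exactly when $g(x)=g(x)^{-1}$, i.e. $g(x)=\pm 1$. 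Therefore fiber-injectivity is equivalent to ``$g(x)\neq\pm 1$ for every $x\in\mu_{q+1}\setminus\{1,-1\}$'', which, together with the easily checked boundary values $g(1)=1$ and $g(-1)=(-1)^r=-1$ (note that $\gcd(r,q-1)=1$ forces $r$ odd when $q$ is odd, while $1=-1$ when $q$ is even), is precisely condition (ii). Combining the four equivalences completes the proof.

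The step I expect to be the main obstacle is the explicit derivation of $R(a)$: one must run Algorithm \ref{alg1} carefully to reduce $h$ to the linear form $h_1(a)x+h_2(a)$, track the substitution $x^{-1}=a-x$ consistently in both $h(x)$ and $h(x^{-1})$, and recognize each symmetric sum $x^e+x^{-e}$ as the Dickson value $D_e(a)$. The bookkeeping differs slightly in even and odd characteristic (where $\{2,-2\}$ collapses to $\{0\}$), and one must simultaneously confirm that condition (iii) keeps the denominator $h_1^2+h_1h_2a+h_2^2=h(x)h(x^{-1})$ nonzero so that $R$ is genuinely defined on all of $\{2,-2\}\cup S$.
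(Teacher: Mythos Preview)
Your proposal is correct and follows essentially the same route as the paper: both apply Lemma~\ref{lem1} for the top square, then verify the commutativity $g(x)+g(x)^{-1}=R(a)$ by the same denominator/numerator computation, and finish with the AGW fiber analysis using $g(x^{-1})=g(x)^{-1}$ to reduce fiber-injectivity to condition~(ii). The only cosmetic difference is that the paper checks the boundary values $R(\pm 2)$ by hand before deriving the general formula, whereas you rely on the identity $h_1^2+h_1h_2a+h_2^2=h(x)h(x^{-1})$ together with condition~(iii) to handle all $a$ uniformly.
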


\begin{proof}
	Firstly, let us recall the following diagram.  
	\begin{equation*}
	\xymatrix{
		\gf_{q^2} \ar[rr]^{f(x)} \ar[d]_{x^{q-1}} &  &  \gf_{q^2} \ar[d]^{x^{q-1}}\\
		\mu_{q+1} \ar[rr]^{g(x)}\ar[d]_{x+x^q} &   & \mu_{q+1}  \ar[d]^{x+x^q} \\
		\{2,-2\}\cup S	 \ar[rr]^{R(a)} &  & \{2,-2\} \cup S }
	\end{equation*}
	
	\emph{(1) The reduction from $\gf_{q^2}$ to $\mu_{q+1}$.}
	
	 According to Lemma \ref{lem1}, we know that $f(x)=x^rh\left(x^{q-1}\right)$ is a permutation polynomial over $\gf_{q^2}$ if and only if $\gcd(r,q-1)=1$ and $g(x)=x^rh(x)^{q-1}$ permutes $\mu_{q+1}$.
	
	\emph{(2) The reduction from $\mu_{q+1}$ to $\{2,-2\}\cup S$.}
	
	 It is clear that $h(x)\neq0$ for any $x\in\mu_{q+1}$ if $g(x)$ permutes $\mu_{q+1}$.  In the following, we claim that $R\circ \left(x+x^q\right)=\left(x+x^q\right) \circ g(x)$, i.e., $R(a)=g(x)+g(x)^q$ for $x\in\mu_{q+1}$. 
	 Firstly, we show that for $x=1,-1$, the above equation holds. When $x=1$, $a=2$ and $R(2)=\frac{2\left(h_1^2(2)+h_2^2(2)+2h_1(2)h_2(2)\right)}{h_1^2(2)+h_2^2(2)+2h_1(2)h_2(2)}=2$. On the other hand, since $g(1)=h(1)^{q-1}=1$ due to $h(x)\in\gf_{q}[x]$ and $h(1)\neq0$, so $g(1)+g(1)^q=2=R(2)$. Therefore, when $x=1$, $R(a)=g(x)+g(x)^q$. The case $x=-1$ is similar because 
$g(-1) = (-1)^r h(-1)^{q-1} = -1$ when $q$ is odd.   
	 Now, we consider the case $x\in\mu_{q+1}\backslash\{1,-1\}$, i.e., $a=x+x^{-1}\in S$ according to the definition of $S$. In fact, 
	\begin{eqnarray*}
	g(x)+g(x)^q &=& x^rh(x)^{q-1}+x^{-r}h(x)^{1-q} \\
	&=& \frac{x^{2r}h(x)^{2q}+h(x)^2}{x^rh(x)^{q+1}}.
	\end{eqnarray*}
	From $h(x)=h_1(a)x+h_2(a)$ obtained by Algorithm \ref{alg1}, we have 
	\begin{eqnarray*}
	g(x)+g(x)^q &=& \frac{x^{2r}\left(h_1(a)x^{-1}+h_2(a)\right)^2+\left(h_1(a)x+h_2(a)\right)^2}{x^r\left(h_1(a)x^{-1}+h_2(a)\right)\left(h_1(a)x+h_2(a)\right)}\\
	&=& \frac{x^{2r}\left(h_1^2(a)x^{-2}+h_2^2(a)+2h_1(a)h_2(a)x^{-1}\right)+h_1^2(a)x^2+h_2^2(a)+2h_1(a)h_2(a)x}{x^r\left(h_1^2(a)+h_1(a)h_2(a)x^{-1}+h_1(a)h_2(a)x+h_2^2(a)\right)}\\
	&=& \frac{x^{2r-2}h_1^2(a)+x^{2r}h_2^2(a)+2x^{2r-1}h_1(a)h_2(a)+x^2h_1^2(a)+h_2^2(a)+2h_1(a)h_2(a)x}{x^r\left(h_1^2(a)+h_1(a)h_2(a)a+h_2^2(a)\right)}\\
	&=&\frac{h_1^2(a)\left(x^{r-2}+x^{2-r}\right)+h_2^2(a)\left(x^r+x^{-r}\right)+2h_1(a)h_2(a)\left(x^{r-1}+x^{1-r}\right)}{h_1^2(a)+h_1(a)h_2(a)a+h_2^2(a)}
	\end{eqnarray*}
which is $R(a)$. Therefore, the above diagram is commutative. 

  Then according to the AGW Criterion, we have that $g(x)$ permutes $\mu_{q+1}$ if and only if $R(a)$ permutes $\{2,-2\}\cup S$ and for any $a\in\{2,-2\}\cup S$, $g(x)$ is injective on $V_a :=\{x\in\mu_{q+1}: x+x^q=a\}$. Since $|V_{2}|$ and $|V_{-2}|$ are both $1$, it suffices to prove that $g(x)\neq g\left(x^q\right)$ for any $x\in\mu_{q+1}\backslash\{1,-1\}$. Actually, $g\left(x^q\right)|_{\mu_{q+1}}=x^{-r}\frac{h(x^q)^{q}}{h(x^q)} = x^{-r} \frac{h(x^{-1})^q}{h(x^{-1})}= x^{-r} =g(x)^{-1}$  because $h(x)\neq0$ for any $x\in \mu_{q+1}$. If there exists $x\in\mu_{q+1}\backslash\{1,-1\}$ such that $g(x)=g\left(x^q\right)$, then $g(x)^2=1$, i.e., $g(x)=1$ or $g(x)=-1$. Therefore, the condition that $g(x)\neq g\left(x^q\right)$ for any $x\in\mu_{q+1}\backslash\{1,-1\}$ is equivalent to that $g(x)=1$ has only one solution $x=1$ in $\mu_{q+1}$ and $g(x)=-1$ has only one solution $x=-1$ in $\mu_{q+1}$. Therefore the proof is complete.
\end{proof}

By Theorem \ref{Maintheorem}, the problem of determining permutation polynomials over $\gf_{q^2}$ is transformed into that of considering permutations over $S$, which is different from the fractional approach. Moreover, we can construct a lot of permutation polynomials of the form  $x^rh\left(x^{q-1}\right)$ over $\gf_{q^2}$ from $R(a)$ which permutes $S$. Therefore, given $R(a)$, it is necessary to have an algorithm to obtain $h(x)$, which is not trivial. Firstly, we consider how to obtain $h(x)$ when we have $h_1(a)$ and $h_2(a)$. The method to obtain $h_1(a)$ and $h_2(a)$ will be introduced in Section 3 and Section 4 respectively depending on the parity of $\mathrm{char}\gf_{q^2}$. First of all, we can obtain an original $h(x)=h_1(a)x+h_2(a)$ according to the relationship $a=x+x^{-1}$. But what should be noticed is that the original $h(x)$ may be divisible by $x-x^{-1}$, which means $h(1)=0$, $h(-1)=0$ and hence breaks Condition $(iii)$ of Theorem \ref{Maintheorem}. Therefore, after obtaining the original $h(x)$, we repetitively compute $\frac{h(x)}{x-x^{-1}}$  until the new $h(x)$ satisfies that $h(1)\neq0$ and $h(-1)\neq0$.

\begin{Alg}
	\label{alg}
	Given $h_1(a), h_2(a)$.
	\begin{enumerate}[1:]
		\item {\bfseries Input} $h_1(a), h_2(a)$; 
		\item Computing $h(x)=h_1\left(x+x^{-1}\right)x+h_2\left(x+x^{-1}\right)$;
		\item while $h(1)=0,h(-1)=0$ then
		\item \quad $\frac{h(x)}{x-x^{-1}}\to h(x)$;
		\item end while;
		\item {\bfseries Output} $h(x)$. 
	\end{enumerate}
\end{Alg}

\begin{example}
	Let $q=2^k$, $h_1(a)=a^{\frac{1}{2}}+1$ and $h_2(a)=a+a^\frac{1}{2}+1$. Then the original 
	\begin{eqnarray*}
		h(x) &=& \left(a^{\frac{1}{2}}+1\right)x+a+a^{\frac{1}{2}}+1\\
		&=& \left(1+x^{\frac{1}{2}}+x^{-\frac{1}{2}}\right)x+x+x^{-1}+x^{\frac{1}{2}}+x^{-\frac{1}{2}}+1\\
		&=&x^{\frac{3}{2}}+x^{-1}+x^{-\frac{1}{2}}+1\\
		&=&\left(x^{\frac{1}{2}}+x^{-\frac{1}{2}}\right)\left(x^{-\frac{1}{2}}+x+1\right).
	\end{eqnarray*}
	Therefore, we obtain $h(x)=x^{-\frac{1}{2}}+x+1$ as the final output according to Algorithm \ref{alg}.
\end{example}

\begin{Rem}
	In fact, given $R(a)$ which permutes $S$  in Theorem \ref{Maintheorem}, we can obtain $h_1(a), h_2(a)$ first and then  $h(x)$ and $f(x)$ as well.  We summarize it in the following process of construction, 
	\begin{equation*}
     R(a) \rightarrow h_1(a), h_2(a) \rightarrow h(x) \rightarrow f(x).
	\end{equation*}
	
	For distinct $r>0$,  $``h(x),f(x)"$s obtained from the same $R(a)$ are generally different. However,  we claim that the corresponding fractional polynomials $``g(x)"$s are the same due to the following commutative diagram
	\begin{equation*}
	\xymatrix{
		\mu_{q+1} \ar[rr]^{g(x)}\ar[d]_{x+x^q} &   & \mu_{q+1}  \ar[d]^{x+x^q} \\
		\{2,-2\}\cup S	 \ar[rr]^{R(a)} &  & \{2,-2\} \cup S. }
	\end{equation*}
	On the other hand,   permutation polynomials over $\gf_{q^2}$ are equivalent if their corresponding fractional polynomials are  same. The unique distinction is the condition of $r$, which satisfies $(r,q-1)=1$.  Therefore, we can only consider the case $r=1$ when we construct permutation polynomials over $\gf_{q^2}$ from $R(a)$ which permutes $S$.
\end{Rem}

\begin{Cor}
	\label{Maintheorem_r=1}
	Let $f(x)=xh\left(x^{q-1}\right)\in\gf_q[x]$ and $S$ be the set defined in (\ref{S}). By Algorithm \ref{alg1}, we get $h_1(a)$ and $h_2(a)$ from $h(x)=h_1(a)x+h_2(a)$. Assume that $$R(a)=\frac{ah_1^2(a)+ah_2^2(a)+4h_1(a)h_2(a)}{h_1^2(a)+h_2^2(a)+ah_1(a)h_2(a)}.$$ Then $f(x)$ permutes $\gf_{q^2}$ if and only if the following conditions hold:
	\begin{enumerate}[(i)]
		\item for the corresponding fractional polynomial $g(x)=xh(x)^{q-1}$, $g(x)=1$ has only one solution $x=1$ in $\mu_{q+1}$ and $g(x)=-1$ has only one solution $x=-1$ in $\mu_{q+1}$;
		\item $h(x)\neq0$ for any $x\in\mu_{q+1}$;
		\item $R(a)$ permutes $\{2,-2\}\cup S$.
	\end{enumerate}
\end{Cor}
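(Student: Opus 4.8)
The plan is to derive Corollary \ref{Maintheorem_r=1} as the special case $r=1$ of Theorem \ref{Maintheorem}, so that essentially no new argument is required beyond a specialization of the rational function $R(a)$. First I would observe that when $r=1$ the first condition $\gcd(r,q-1)=1$ of Theorem \ref{Maintheorem} holds automatically, since $\gcd(1,q-1)=1$ for every prime power $q$. Consequently conditions $(ii)$, $(iii)$ and $(iv)$ of Theorem \ref{Maintheorem} become precisely conditions $(i)$, $(ii)$ and $(iii)$ of the corollary, and the whole task reduces to checking that the $R(a)$ displayed in the corollary coincides with the one in Theorem \ref{Maintheorem} evaluated at $r=1$.

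Next I would substitute $r=1$ into the general expression for $R(a)$. The cleanest route is to return to the numerator obtained in the proof of Theorem \ref{Maintheorem}, namely
\[
h_1^2(a)\left(x^{r-2}+x^{2-r}\right)+h_2^2(a)\left(x^{r}+x^{-r}\right)+2h_1(a)h_2(a)\left(x^{r-1}+x^{1-r}\right),
\]
and evaluate each bracket at $r=1$ using $a=x+x^{-1}$. This gives $x^{-1}+x=a$ for the first bracket, $x+x^{-1}=a$ for the second, and $x^{0}+x^{0}=2$ for the third. Equivalently, in the Dickson-polynomial notation of the paper this reads $D_{r-2}(a)=D_{-1}(a)=a$, $D_{r}(a)=D_{1}(a)=a$ and $D_{r-1}(a)=D_{0}(a)=2$. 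Substituting these three values into the formula for $R(a)$ in Theorem \ref{Maintheorem} yields
\[
R(a)=\frac{a\,h_1^2(a)+a\,h_2^2(a)+4\,h_1(a)h_2(a)}{h_1^2(a)+a\,h_1(a)h_2(a)+h_2^2(a)},
\]
which is exactly the $R(a)$ stated in the corollary; note that the denominator $h_1^2(a)+a\,h_1(a)h_2(a)+h_2^2(a)$ is unchanged by the specialization since it does not depend on $r$.

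There is essentially no obstacle here, as the corollary is a direct specialization of the main theorem rather than an independent result. The only point requiring a word of care is the evaluation of the Dickson polynomial at the negative index $r-2=-1$, which is handled by the convention $D_{-e}(a)=D_{e}(a)$ arising from $x^{-e}+x^{e}=x^{e}+x^{-e}$, or is simply avoided altogether by working directly with the bracketed power sums $x^{e}+x^{-e}$ as above. Once $R(a)$ has been matched, the equivalence asserted in the corollary follows immediately from Theorem \ref{Maintheorem}, completing the argument.
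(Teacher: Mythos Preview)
Your proposal is correct and matches the paper's approach: the corollary is stated without a separate proof, being an immediate specialization of Theorem \ref{Maintheorem} to $r=1$, and your verification that $D_{-1}(a)=D_{1}(a)=a$, $D_{0}(a)=2$ yields the displayed $R(a)$ is exactly the missing computation. The only cosmetic remark is that the paper implicitly handles the negative index by writing $x^{r-2}+x^{2-r}$ rather than $D_{r-2}(a)$, which you already noted as the cleanest route.
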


 However, it may be difficult to consider the permutation property of $R(a)$ over $S$ in Corollary \ref{Maintheorem_r=1} in general. Hence, in this paper, we mainly focus on constructing permutation polynomials of the form  $xh\left(x^{q-1}\right)$ over $\gf_{q^2}$ from some permutations of special forms over $S$.

\section{The case $\mathrm{char}\gf_{q^2}$ is even}

In the present section, we consider the case where $\mathrm{char}\gf_{q^2}$ is even. In this case, $S:=\{a\in\gf_q^{*}: \tr\left(\frac{1}{a}\right)=1\}$. In Corollary \ref{Maintheorem_r=1},  we know that 
$$\frac{1}{R(a)}=\frac{1}{a}+\frac{h_1(a)}{h_1(a)+h_2(a)}+\left(\frac{h_1(a)}{h_1(a)+h_2(a)}\right)^2,$$
and thus $R(a)$ is a mapping from $S$ to $S$ as long as  $h_1(a)+h_2(a)\neq0$ for any $a\in S$.

In the following, we obtain the even characteristic case of Corollary \ref{Maintheorem_r=1}. Then we construct some permutation polynomials over $\gf_{q^2}$ from various permutations of simple forms, which mainly are  monomials and linearized polynomials over $S$. Some of our results can explain a few known theorems.
 
 \begin{Th}
 	\label{MainTheorem_even_1}
 	Let $q=2^k,$ $S:=\{a\in\gf_q^{*}: \tr\left(\frac{1}{a}\right)=1\}$,  $f(x)=xh\left(x^{q-1}\right)\in\gf_{q}[x]$ and $h(x)=h_1(a)x+h_2(a)$ for $x\in\mu_{q+1}\backslash\{1\}$ according to Algorithm \ref{alg1}. Then $f(x)$ is a permutation polynomial over $\gf_{q^2}$ if and only if 
 	\begin{enumerate}[(1)]
        \item $h_1\left(a\right)\neq h_2(a)$ for any $a\in S$;
        \item $h(1)\neq0$;
 		\item $R(a)=\frac{1}{\frac{1}{a}+\psi(a)+\psi(a)^2}$ permutes $S$, where $\psi(a)=\frac{h_1(a)}{h_1(a)+h_2(a)}$.
 	\end{enumerate} 
 \end{Th}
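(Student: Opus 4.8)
The plan is to obtain Theorem \ref{MainTheorem_even_1} as the even-characteristic specialization of Corollary \ref{Maintheorem_r=1}, whose conditions (i)--(iii) I may assume to be equivalent to $f$ permuting $\gf_{q^2}$. First I would record the simplifications forced by $\mathrm{char}\,\gf_{q^2}=2$: here $2=-2=0$, so $\{2,-2\}\cup S=\{0\}\cup S$ and the two requirements ``$g(x)=1$ only at $x=1$'' and ``$g(x)=-1$ only at $x=-1$'' coincide. Moreover, the rational function of the Corollary collapses: using $4=0$ and $h_1^2+h_2^2=(h_1+h_2)^2$ one gets $R(a)=\frac{a(h_1(a)+h_2(a))^2}{(h_1(a)+h_2(a))^2+a\,h_1(a)h_2(a)}$, whose reciprocal is $\frac{1}{R(a)}=\frac{1}{a}+\psi(a)+\psi(a)^2$ with $\psi=\frac{h_1}{h_1+h_2}$, exactly as computed at the start of Section 3 and as displayed in condition (3). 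It then remains to match the two packages $\{(1),(2),(3)\}$ and $\{(\mathrm{i}),(\mathrm{ii}),(\mathrm{iii})\}$.

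The crux is the equivalence $(\mathrm{i})\Leftrightarrow(1)$, and this is where the one genuine computation lives. Using $h(x)^{q}=h(x^{-1})$ on $\mu_{q+1}$ (legitimate since $h_1(a),h_2(a)\in\gf_q$ and $x^q=x^{-1}$) together with $h(x)=h_1(a)x+h_2(a)$, I would rewrite, for $x\in\mu_{q+1}$, $g(x)=x\,h(x)^{q-1}=\frac{h_1(a)+h_2(a)x}{h_1(a)x+h_2(a)}$. Setting this equal to $1$ and clearing denominators yields $(h_1(a)+h_2(a))(1+x)=0$. Hence for $x\in\mu_{q+1}\setminus\{1\}$ one has $g(x)=1$ precisely when $h_1(a)=h_2(a)$, where $a=x+x^{-1}\in S$; since every $a\in S$ arises this way, ``$g(x)=1$ has only the solution $x=1$'' is literally the statement that $h_1(a)\neq h_2(a)$ for all $a\in S$, i.e. condition (1).

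With $(\mathrm{i})\Leftrightarrow(1)$ in hand, the remaining matches are bookkeeping. For $(\mathrm{ii})\Leftrightarrow(2)$ under (1): the point $x=1$ contributes exactly $h(1)\neq0$, which is (2), while for $x\in\mu_{q+1}\setminus\{1\}$ we have $x\notin\gf_q$, so $h(x)=h_1(a)x+h_2(a)=0$ would force $h_1(a)=h_2(a)=0$ (as $\{1,x\}$ is a $\gf_q$-basis of $\gf_{q^2}$), which is excluded by (1). For $(\mathrm{iii})\Leftrightarrow(3)$: the commutative diagram already guarantees $R$ maps $\{0\}\cup S$ into itself, and (2) yields $R(0)=0$ (the even-characteristic form of $R(2)=2$ established in Theorem \ref{Maintheorem}, since the denominator there is $h(1)^2$); furthermore, from the collapsed formula, $R(a)=0\Leftrightarrow h_1(a)=h_2(a)$ for $a\in S$, again excluded by (1), so $R(S)\subseteq S$. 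A self-map of $\{0\}\cup S$ that fixes $0$ and carries $S$ into $S$ permutes $\{0\}\cup S$ if and only if it permutes $S$, giving $(\mathrm{iii})\Leftrightarrow(3)$. Combining the three equivalences with Corollary \ref{Maintheorem_r=1} finishes the proof.

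I expect the only real obstacle to be the $g(x)=1$ computation of the second paragraph, together with the care needed to keep the well-definedness of $R$ and of $g$ (the non-vanishing of the relevant denominators) consistent while passing between the two sets of conditions; everything else reduces to the characteristic-two collapse $2=-2=0$ and elementary linear algebra over $\gf_q$.
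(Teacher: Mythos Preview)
Your proposal is correct and follows essentially the same route as the paper: both specialize Corollary \ref{Maintheorem_r=1} to characteristic two, derive $g(x)=\frac{h_1(a)+h_2(a)x}{h_1(a)x+h_2(a)}$ to reduce condition (i) to $h_1(a)\neq h_2(a)$ on $S$, and then check that (ii) collapses to $h(1)\neq 0$ once (1) is in force. The only cosmetic differences are that for (ii) the paper clears $x_0$ via the identity $h(x_0)+x_0h(x_0)^q=(h_1(a_0)+h_2(a_0))(x_0+1)$ rather than your linear-independence argument, and for (iii) the paper simply declares the equivalence ``obvious because $q$ is even'' where you spell out $R(0)=0$ and $R(S)\subseteq S$.
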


\begin{proof}
	We prove the result through verifying the conditions in Corollary \ref{Maintheorem_r=1}. 

\emph{(i)} In the case, for $x\in\mu_{q+1}$, the corresponding fractional polynomial $$g(x)= xh(x)^{q-1} = \frac{h_1(a)+h_2(a)x}{h_1(a)x+h_2(a)}.$$
$g(x)=1$ if and only if $\left(h_1(a)+h_2(a)\right)(x+1)=0$. Hence, $g(x)=1$ has only one solution $x=1$ if and only if $h_1(a)\neq h_2(a)$ for any $a\in S$. Therefore, the condition (i) in Corollary \ref{Maintheorem_r=1} is equivalent to that  $h_1(a)\neq h_2(a)$ for any $a\in S$.

\emph{(ii)} If there exists  $x_0\in\mu_{q+1}$ such that $h\left(x_0\right)=0$. Let $a_0=x_0+x_0^{-1}$. Then $h\left(x_0\right)=h_1\left(a_0\right)x_0+h_2\left(a_0\right)=0$. Computing $h\left(x_0\right)+x_0h\left(x_0\right)^q$, we have $$\left(h_1\left(a_0\right)+h_2\left(a_0\right)\right)\left(x_0+1\right)=0.$$ Then $h_1\left(a_0\right)+h_2\left(a_0\right)=0$ or $x_0=1$. Since $h_1(a)\neq h_2(a)$ for any $a\in S$, we have $x_0=1$. Therefore, the second condition is equivalent to that of $h(1)\neq0$.  

\emph{(iii)} is obvious because $q$ is even.
\end{proof}

 As for Condition {(3)} in the above theorem, let $b=\frac{1}{a}$, $l(a)=\psi\left(\frac{1}{a}\right)$  and $T:=\{b\in\gf_{q}: \tr(b)=1\}$. Then it is also clear that {(3)} is equivalent to that $L(b)=b+l(b)+l(b)^2$ permutes $T$. Therefore, we also have the following theorem.
	
\begin{Th}
	\label{MainTheorem_even_2}
	 	Let $q=2^k,$ $T:=\{b\in\gf_q: \tr\left(b\right)=1\}$,  $f(x)=xh\left(x^{q-1}\right)\in\gf_{q}[x]$  and $h(x)=h_1\left(\frac{1}{b}\right)x+h_2\left(\frac{1}{b}\right)$ for $x\in\mu_{q+1}\backslash\{1\}$ according to Algorithm \ref{alg1}. Then $f(x)$ is a permutation polynomial over $\gf_{q^2}$ if and only if 
	\begin{enumerate}[(1)]
        \item $h_1\left(\frac{1}{b}\right)\neq h_2\left(\frac{1}{b}\right)$ for any $b\in T$;
        \item $h(1)\neq0$;
		\item $L(b)=b+l(b)+l(b)^2$ permutes $T$, where $l(b)=\frac{h_1\left(\frac{1}{b}\right)}{h_1\left(\frac{1}{b}\right)+h_2\left(\frac{1}{b}\right)}$. 
	\end{enumerate} 
\end{Th}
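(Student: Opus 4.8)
The plan is to derive Theorem~\ref{MainTheorem_even_2} directly from Theorem~\ref{MainTheorem_even_1} by a change of variable, so that almost no new work is required beyond tracking how the set $S$ and the map $R(a)$ transform under the substitution $b = 1/a$. First I would observe that the inversion $a \mapsto 1/a$ is a well-defined bijection from $S = \{a \in \gf_q^* : \tr(1/a) = 1\}$ onto $T = \{b \in \gf_q : \tr(b) = 1\}$, since $\tr(1/a) = 1$ forces $a \neq 0$ (indeed $1/a \neq 0$), and conversely $b \in T$ has $\tr(b) = 1 \neq 0$ so $b \neq 0$ and $a = 1/b$ is well-defined with $\tr(1/a) = \tr(b) = 1$. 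This establishes that $R(a)$ permutes $S$ if and only if the conjugated map permutes $T$.

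The key computational step is to rewrite Condition~(3) of Theorem~\ref{MainTheorem_even_1} under this substitution. Recall from the setup in Corollary~\ref{Maintheorem_r=1} and the displayed identity in Section~3 that
\begin{equation*}
\frac{1}{R(a)} = \frac{1}{a} + \psi(a) + \psi(a)^2, \qquad \psi(a) = \frac{h_1(a)}{h_1(a) + h_2(a)}.
\end{equation*}
Setting $b = 1/a$ and defining $l(b) = \psi(1/a) = \psi(b^{-1})$, the right-hand side becomes $b + l(b) + l(b)^2$, which is exactly $L(b)$. Thus $1/R(a) = L(b)$ with $b = 1/a$. Since inversion $x \mapsto 1/x$ is itself a bijection of $\gf_q^*$ (and maps $S$ to $T$), the statement ``$R(a)$ permutes $S$'' is equivalent to ``$1/R(a)$ permutes $T$'', which in turn is exactly ``$L(b)$ permutes $T$''. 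Care must be taken here that $1/R(a)$ lands in $T$: this is precisely the content of the remark that $R$ maps $S$ into $S$ when $h_1(a) + h_2(a) \neq 0$ on $S$, which is guaranteed by Condition~(1). The hypotheses $h_1(a) \neq h_2(a)$ for $a \in S$ and $h(1) \neq 0$ translate verbatim into Conditions~(1) and~(2) of the new theorem once we write $h_1, h_2$ as functions of $b = 1/a$.

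Finally I would note that Conditions~(1) and~(2) of Theorem~\ref{MainTheorem_even_2} are literally Conditions~(1) and~(2) of Theorem~\ref{MainTheorem_even_1} after the relabeling $a = 1/b$ (the inequality $h_1 \neq h_2$ for all $a \in S$ is the same statement as $h_1 \neq h_2$ for all $b \in T$ because inversion is a bijection $S \to T$), and that Condition~(3) has just been shown equivalent. Assembling these three equivalences completes the proof.

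The part requiring the most care is not any single algebraic manipulation but the bookkeeping around well-definedness and surjectivity of the inversion bijection on $S$ and $T$, together with verifying that $L(b)$ genuinely maps $T$ into itself (so that ``permutes $T$'' is a sensible statement). Concretely, the main obstacle is confirming that $\tr(L(b)) = 1$ for $b \in T$: writing $L(b) = b + l(b) + l(b)^2$ and using $\tr(l(b)^2) = \tr(l(b))$ in characteristic two, one gets $\tr(L(b)) = \tr(b) = 1$, so $L$ indeed maps $T$ to $T$. Once this trace computation is in hand, the rest is a transparent change of variable.
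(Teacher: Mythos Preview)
Your approach is correct and matches the paper's own derivation exactly: the paper obtains Theorem~\ref{MainTheorem_even_2} from Theorem~\ref{MainTheorem_even_1} by precisely the substitution $b=1/a$, noting in one line that Condition~(3) becomes ``$L(b)=b+l(b)+l(b)^2$ permutes $T$''. One small slip to fix: where you write $l(b)=\psi(1/a)=\psi(b^{-1})$, the middle expression should read $\psi(a)$ (since $a=1/b$), matching the theorem's definition $l(b)=h_1(1/b)/(h_1(1/b)+h_2(1/b))=\psi(1/b)$.
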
	


\begin{Rem}
  \label{Rep}
   For $\psi(a)=\frac{h_1(a)}{h_1(a)+h_2(a)}=\frac{t_1(a)}{t_2(a)}$, we can see that there exists $\tau(a)\in\gf_q[a]$ such that $h_1(a)=\tau(a)t_1(a)$ and $h_2(a)=\tau(a)\left(t_1(a)+t_2(a)\right).$ Moreover, $$g(x)=\frac{h_1(a)x+h_2(a)}{h_2(a)x+h_1(a)}=\frac{t_1(a)x+t_1(a)+t_2(a)}{\left(t_1(a)+t_2(a)\right)x+t_1(a)},$$
	which means that no matter what $\tau(a)$ is, $g(x)$ is the same. Therefore, we may assume that $\tau(a)=1$, i.e., $h_1(a)=t_1(a)$ and $h_2(a)=t_1(a)+t_2(a)$. 
\end{Rem}

Given $\psi(a)=\frac{t_1(a)}{t_2(a)}$, according to Remark \ref{Rep}, without loss of generality, we assume $h_1(a)=t_1(a)$ and $h_2(a)=t_1(a)+t_2(a)$. Furthermore, we obtain $h(x)$ by Algorithm \ref{alg} and we can construct some permutation polynomials of the form  $xh\left(x^{q-1}\right)$ over $\gf_{q^2}$ from Theorem \ref{MainTheorem_even_1}. As for Theorem \ref{MainTheorem_even_2}, let $\psi(a)=l\left(\frac{1}{a}\right)$. Then we can also construct permutation polynomials by the above method.

In the following, we apply Theorem \ref{MainTheorem_even_1} or  \ref{MainTheorem_even_2} to specific cases, where $\psi(a)$ and $l(b)$ are monomials or linearized polynomials, obtaining some known permutation polynomials and new ones over $\gf_{q^2}$. Moreover, it seems that some of them cannot be proved easily by the previous approaches.

\subsection{The case of monomials}

In this subsection, we mainly consider the case where $l(b)=b^s$ and $s$ is not the power of $2$ such that $L(b)$ permutes $T$ in Theorem \ref{MainTheorem_even_2}. Using Magma, we obtain some experimental results under the conditions where $k$ is from $3$ to $12$ and $s$ is not the power of $2$, see TABLE I.  
Other than two  sporadic cases (in italic), we generalize all of the remaining cases into infinite classes.

\begin{table}[htp]
	\label{table1}
	\centering
	\caption{The value of $s\neq2^i$  such that $L(b)$ permutes $T$}
	\begin{tabular}{ c c c c c c}		
		\hline 
		\hline
		& $k$ &  $s$  & $\quad$  & $k$ &  $s$  \\
		\hline
		& $3$ & ${ 5,6}$ & $\quad$ & $4$ & ${6},{9},{ 13,14}$ \\
		\hline
		& $5$  &  $\emph{14},{21,29,30}$    &   $\quad$   &   $6$   &  $\emph{5},{28,61,62}$    \\
		\hline
		& $7$ & ${85,125,126}$    &  $\quad$  &  $8$   & ${ 120}, {165}, { 253, 254}$  \\
		\hline
		&  $9$  &  ${ 341, 509, 510}$   &   $\quad$  &   $10$   & ${496, 1021, 1022}$  \\
		\hline 
		&  $11$  &  ${ 1365, 2045, 2046}$   &  $\quad$  &  $12$  & ${2016}, {2709}, { 4093, 4094}$ \\
		\hline
		\hline
	\end{tabular}
\end{table}	

 In the following, we obtain {five}  classes of permutations over $T$ in TABLE I and thus construct several classes of permutation polynomials over $\gf_{q^2}$ by our method. Before introducing the main results in the subsection, some lemmas are needed.

\begin{Lemma}
	\cite{SZM}
	\label{lem3}
	Let $u, v \in \gf_{q}$, where $q=2^k$ and $v\neq0$. Then the cubic equation $x^3+ux+v=0$ has a unique solution in $ \gf_{q}$ if and only if $\tr\left(\frac{u^3}{v^2}+1\right) \neq 0$.
\end{Lemma}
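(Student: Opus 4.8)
The plan is to count the number of distinct roots of $f(x)=x^3+ux+v$ in $\gf_q$ and to show that the case ``exactly one root'' is exactly the one governed by the stated trace. First I would record separability: in characteristic $2$ we have $f'(x)=x^2+u=(x+u^{1/2})^2$, whose only zero is $x=u^{1/2}$, and $f(u^{1/2})=u^{3/2}+u^{3/2}+v=v\neq 0$, so $f$ has three \emph{distinct} roots in $\ov{\gf_q}$. A separable cubic cannot have exactly two roots in a field, so the number of roots in $\gf_q$ is $0$, $1$, or $3$. Writing $r_1,r_2,r_3$ for the roots and using that the Galois group over a finite field is cyclic and embeds into $S_3$, these three cases correspond to Frobenius acting as the identity ($3$ roots), as a $3$-cycle ($0$ roots), or as a transposition ($1$ root). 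Thus ``exactly one root'' is equivalent to the Galois group \emph{not} being contained in $A_3$. The classical discriminant test is useless here, since in characteristic $2$ the discriminant $(r_1-r_2)^2(r_2-r_3)^2(r_3-r_1)^2$ is automatically a square; a finer invariant is required.

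The device I would use is the pair of resolvents $\theta_1=r_1r_2^2+r_2r_3^2+r_3r_1^2$ and $\theta_2=r_1^2r_2+r_2^2r_3+r_3^2r_1$. Cyclic permutations of the roots fix both $\theta_1$ and $\theta_2$, whereas every transposition interchanges them; and $\theta_1\neq\theta_2$ because $f$ is separable. Hence $\theta_1\in\gf_q$ exactly when the Galois group lies in $A_3$, i.e.\ exactly in the ``$0$ or $3$ roots'' cases, while $\theta_1\notin\gf_q$ (so that $\theta_1,\theta_2$ are a conjugate pair in $\gf_{q^2}$) corresponds to the single-root case. To compute the two symmetric functions safely I would work over $\Z$ and reduce modulo $2$ only at the end. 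With $e_1=0$, $e_2=u$, $e_3=-v$ one finds $\theta_1+\theta_2=e_1e_2-3e_3=3v$ and $\theta_1\theta_2=e_2^3+9e_3^2=u^3+9v^2$; reducing mod $2$ gives $\theta_1+\theta_2=v$ and $\theta_1\theta_2=u^3+v^2$.

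Therefore $\theta_1,\theta_2$ are precisely the roots of $z^2+vz+(u^3+v^2)=0$ over $\gf_q$. Applying Lemma \ref{lem2} to this quadratic (its linear coefficient $v$ is nonzero, as required), its roots lie in $\gf_q$ if and only if $\tr\!\left(\frac{u^3+v^2}{v^2}\right)=\tr\!\left(\frac{u^3}{v^2}+1\right)=0$. Combining with the previous step, $f$ has $0$ or $3$ roots in $\gf_q$ iff $\tr(u^3/v^2+1)=0$, and hence $f$ has a unique solution in $\gf_q$ iff $\tr(u^3/v^2+1)\neq 0$, which is the assertion. The degenerate case $u=0$ needs no separate argument: there $\theta_1\theta_2=v^2$, the trace condition collapses to $\tr(1)\neq 0$, matching $\gcd(3,q-1)=1$. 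I expect the main obstacle to be twofold: the routine but error-prone symmetric-function evaluation of $\theta_1\theta_2$ (where doing it over $\Z$ before reduction is what prevents the characteristic-$2$ collapse), and the Galois-theoretic bookkeeping identifying $\theta_1\notin\gf_q$ with the single-root case, for which separability (guaranteed by $v\neq 0$) is exactly the hypothesis used.
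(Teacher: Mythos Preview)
Your argument is correct. Note, however, that the paper does not give its own proof of this lemma: it is quoted verbatim from Berlekamp--Rumsey--Solomon \cite{SZM} and used as a black box, so there is nothing in the present paper to compare against. What you have supplied is in fact (a clean rendering of) the classical proof of that cited result: the quantities $\theta_1,\theta_2$ are exactly the characteristic-$2$ substitute for the square root of the discriminant, and the quadratic $z^2+vz+(u^3+v^2)$ is the standard ``Berlekamp resolvent'' of the cubic. The separability check via $f(u^{1/2})=v\neq 0$, the identification of the three Galois possibilities with the root counts $3,0,1$, the symmetric-function computation $\theta_1+\theta_2=e_1e_2-3e_3$ and $\theta_1\theta_2=e_2^3+9e_3^2$ (done over $\Z$ before reducing mod $2$), and the final appeal to Lemma~\ref{lem2} are all sound. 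The observation that $\theta_1\neq\theta_2$ is immediate here since $\theta_1+\theta_2=v\neq 0$ in characteristic $2$, and your remark on the $u=0$ case is a nice sanity check but, as you say, not logically required.
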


Philip A. Leonard and Kenneth S. Williams  characterized the factorization of a quartic polynomial over $\mathbb{F}_{2^k}$ in \cite{LW}.	
\begin{Lemma}
	\cite{LW}
	\label{lemm4}
	Let $q=2^k$ and $f(x)=x^4+a_2x^2+a_1x+a_0\in\gf_q[x]$, where $a_0a_1\neq0$. Let $g(y)=y^3+a_2y+a_1$. Then  $f(x)$ is irreducible if and only if $g(y)$ only has  one solution $y_0$ in $\gf_{q}$ and $\tr\left(\frac{a_0y_0^2}{a_1^2}\right)=1$. 
\end{Lemma}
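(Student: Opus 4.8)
The plan is to pass to the splitting field and read off the factorization type of $f$ from the action of the Frobenius map $\phi\colon x\mapsto x^q$ on the four roots of $f$. First I would record two separability facts that make this possible. Since $f'(x)=4x^3+2a_2x+a_1=a_1\neq 0$, the quartic $f$ is separable and so has four distinct roots $r_1,r_2,r_3,r_4$ in $\ov{\gf_q}$; and since $g'(y)=y^2+a_2=(y+a_2^{1/2})^2$ while $g(a_2^{1/2})=a_2^{3/2}+a_2^{3/2}+a_1=a_1\neq 0$, the cubic $g$ is separable too. Hence $g$ has $0$, $1$, or $3$ distinct roots in $\gf_q$, and the hypothesis $a_0a_1\neq 0$ guarantees that $0$ is a root of neither $f$ nor $g$.

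The key structural step is to identify the roots of $g$ among the data of $f$. Matching coefficients in a factorization $(x^2+bx+c)(x^2+b'x+c')=f(x)$ over $\ov{\gf_q}$ forces, in characteristic $2$, the relations $b'=b$, $b(c+c')=a_1$, $c+c'+b^2=a_2$, and $cc'=a_0$; eliminating $c+c'$ yields $b^3+a_2b+a_1=0$, so $b$ is a root of $g$. Since such a $b$ equals $r_i+r_j$ for the corresponding pairing of roots, the three roots of $g$ are exactly the pairwise sums $r_1+r_2=r_3+r_4$, $r_1+r_3=r_2+r_4$, $r_1+r_4=r_2+r_3$, the equalities coming from $r_1+r_2+r_3+r_4=0$. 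It follows that a given root of $g$ lies in $\gf_q$ precisely when $\phi$ preserves the associated partition of $\{r_1,r_2,r_3,r_4\}$ into two pairs (here the fact that both blocks have equal sum is what makes the criterion clean).

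Next I would carry out the purely combinatorial bookkeeping relating the cycle type of $\phi$ to the number of $\gf_q$-roots of $g$ and to the shape of the factorization of $f$. Running through the five cycle types in $S_4$ shows: a $3$-cycle preserves no pairing, so $g$ has $0$ roots and $f=(\text{linear})(\text{irreducible cubic})$; the identity and the double transpositions preserve all three pairings, so $g$ has $3$ roots and $f$ either splits completely or is a product of two quadratics; and exactly the transpositions and the $4$-cycles preserve a single pairing, so $g$ has $1$ root. Because $f$ is irreducible iff $\phi$ acts as a $4$-cycle, this already shows that irreducibility forces $g$ to have exactly one root $y_0$ in $\gf_q$, and it reduces the whole problem to separating the two one-root cases: $\phi$ a transposition (whence $f$ is a quadratic times two linear factors) versus $\phi$ a $4$-cycle (whence $f$ is irreducible).

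Finally, the trace condition is what effects this separation, and here I would invoke the quadratic criterion of Lemma \ref{lem2}. Given the unique root $y_0=b$ of $g$, the would-be quadratic factors satisfy $c+c'=a_1/y_0$ and $cc'=a_0$, so $c,c'$ are the roots of $z^2+(a_1/y_0)z+a_0$; by Lemma \ref{lem2} these lie in $\gf_q$ iff $\tr(a_0y_0^2/a_1^2)=0$. In the transposition case $f$ genuinely factors into two $\gf_q$-quadratics, forcing $c,c'\in\gf_q$ and hence $\tr(a_0y_0^2/a_1^2)=0$; in the $4$-cycle case $f$ admits no $\gf_q$-quadratic factorization, forcing $c,c'\notin\gf_q$ and hence $\tr(a_0y_0^2/a_1^2)=1$. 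Thus, among the one-root cases, irreducibility is equivalent to $\tr(a_0y_0^2/a_1^2)=1$, which is the asserted characterization. I expect the main obstacle to be the combinatorial step: one must check exhaustively that the correspondence between cycle type, number of preserved pairings, and factorization shape is correct, and in particular that the one-root stratum contains precisely the transposition and $4$-cycle types, since it is this coincidence of two distinct factorization shapes at a single root count that makes the extra trace condition indispensable.
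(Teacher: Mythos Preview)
The paper does not supply a proof of this lemma: it is quoted from Leonard--Williams \cite{LW} and used as an input. Your argument is correct and is essentially the classical resolvent-cubic analysis in characteristic $2$, which is also the line taken in the original reference. The identification of the roots of $g$ with the three pair-sums $r_i+r_j$, the cycle-type bookkeeping on the Frobenius action, and the use of Lemma~\ref{lem2} to separate the transposition case from the $4$-cycle case are all sound. One cosmetic remark: in the transposition case you write that ``$f$ genuinely factors into two $\gf_q$-quadratics''; it might be worth saying explicitly that the relevant factorization is the one attached to the unique $\phi$-stable pairing, so that $c=r_ir_j$ and $c'=r_kr_l$ are each fixed by $\phi$ and hence lie in $\gf_q$ --- this is exactly what you use, and it makes clear why the particular $c,c'$ arising from $y_0$ (rather than from some other quadratic splitting) are the ones governed by the trace condition.
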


Firstly, we prove the case $s=-1$, where $L(b)$ permutes $T$. 

\begin{Lemma}
	\label{Case_Mon_s=-1}
	Let $L(b)=b+\frac{1}{b}+\frac{1}{b^2}$ and $T:=\{b\in\gf_q: \tr(b)=1\}$, where $q=2^k$. Then $L(b)$ permutes $T$.   
\end{Lemma}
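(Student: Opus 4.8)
The plan is to exploit that $T$ is finite, so that it suffices to show $L$ maps $T$ into itself and is injective there. The first part is immediate: since squaring is the Frobenius on $\gf_q$, we have $\tr(x^2)=\tr(x)$, hence for $b\in T$
$$\tr\big(L(b)\big)=\tr(b)+\tr\!\left(\tfrac1b\right)+\tr\!\left(\tfrac1{b^2}\right)=\tr(b)+\tr\!\left(\tfrac1b\right)+\tr\!\left(\tfrac1b\right)=\tr(b)=1 .$$
Thus $L(T)\subseteq T$ and the whole statement reduces to the injectivity of $L$ on $T$.

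For injectivity I would argue by contradiction. Suppose $b_1\neq b_2$ lie in $T$ with $L(b_1)=L(b_2)$, and set $s=b_1+b_2$ and $p=b_1b_2$; note $s\neq0$, and since $0\notin T$ we also have $p\neq0$. Collecting the $\tfrac1{b_i}+\tfrac1{b_i^{2}}$ contributions via $\tfrac1{b_1}+\tfrac1{b_2}=\tfrac{s}{p}$ and $\tfrac1{b_1^{2}}+\tfrac1{b_2^{2}}=\tfrac{s^{2}}{p^{2}}$, the equation $L(b_1)+L(b_2)=0$ becomes $s+\tfrac{s}{p}+\tfrac{s^{2}}{p^{2}}=0$; dividing by $s$ and clearing denominators yields the single relation $s=p^{2}+p$. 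So the only obstruction to injectivity is this constraint together with $b_1,b_2\in T$.

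The crux, and the step I expect to be the main obstacle, is to show that this constraint forces $\tr(b_1)=0$, contradicting $b_1\in T$. The idea is to use the special shape $s=p^{2}+p$ to linearize. Put $u=(b_1+p)/s$, so $b_1=su+p$; substituting into $b_1^{2}+sb_1+p=0$ and using $s^{2}=p^{2}(p+1)^{2}$ collapses everything to the Artin--Schreier relation $u^{2}+u=\tfrac1p$. Then $\tr(b_1)=\tr(su)+\tr(p)=\tr\big((p^{2}+p)u\big)+\tr(p)$, and the key identity is $\tr\big((p^{2}+p)u\big)=\tr(p)$: squaring $pu$ and using $u^{2}=u+\tfrac1p$ gives $(pu)^{2}=p^{2}u+p$, so $\tr(p^{2}u)=\tr(pu)+\tr(p)$, whence $\tr\big((p^{2}+p)u\big)=\tr(p^{2}u)+\tr(pu)=\tr(p)$. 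Therefore $\tr(b_1)=\tr(p)+\tr(p)=0$, the desired contradiction, and $L$ permutes $T$. Discovering the linearizing substitution $u=(b_1+p)/s$ that converts the symmetric-function constraint into a clean Artin--Schreier equation is the delicate part; once it is in place the trace bookkeeping is routine. Lemma \ref{lem2} is available as an alternative device for controlling when the roots of $t^{2}+st+p$ lie in $\gf_q$, but the substitution above sidesteps it, since $b_1,b_2$ are assumed in $\gf_q$ from the start.
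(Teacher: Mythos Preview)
Your proof is correct, but it takes a different and somewhat longer route than the paper.  The paper parametrizes the two colliding points as $b_0$ and $b_0+\Delta$ with $\Delta\neq 0$, simplifies $L(b_0)=L(b_0+\Delta)$ directly to a quadratic $\Delta^2+u\Delta+v=0$ with $u=\dfrac{b_0+1}{b_0^{2}}$ and $v=b_0^{2}+1$, and then applies Lemma~\ref{lem2}: since $\tr\!\left(\dfrac{v}{u^{2}}\right)=\tr(b_0^{4})=\tr(b_0)=1$, this quadratic has no root in $\gf_q$, finishing the argument in two lines.

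Your argument via the symmetric functions $s=b_1+b_2$, $p=b_1b_2$ and the substitution $u=(b_1+p)/s$ leading to the Artin--Schreier relation $u^{2}+u=1/p$ is valid and self-contained: in effect you are re-deriving, by hand, the conclusion of Lemma~\ref{lem2} for this particular quadratic.  The paper's route is shorter because it invokes the ready-made criterion; yours buys independence from that lemma at the cost of the extra linearizing step and the trace bookkeeping.  Both reach the same contradiction (an element assumed to have trace $1$ is forced to have trace $0$), just packaged differently.
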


\begin{proof}
	Assume that there exist $b_0, b_0+\Delta\in T$ such that $L\left(b_0\right)=L\left(b_0+\Delta\right)$, where $\Delta\in\gf_q^{*}$ and $\tr(\Delta)=0$. Then
	\begin{equation}
	\label{CASE_Mon_Th_1}
	\frac{1}{b_0}+\frac{1}{b_0^2}+\frac{1}{b_0+\Delta}+\frac{1}{b_0^2+\Delta^2}+\Delta=0.
	\end{equation}
	After simplifying Eq. (\ref{CASE_Mon_Th_1}), we obtain 
	\begin{equation}
	\label{Mon_Th_1_key}
	\Delta^2+u\Delta+v=0,
	\end{equation}
	where $u=\frac{b_0+1}{b_0^2}$ and $v=b_0^2+1$.
	Since $$\tr\left(\frac{v}{u^2}\right)=\tr\left(\frac{\left(b_0^2+1\right)b_0^4}{b_0^2+1}\right)=\tr\left(b_0^4\right)=1,$$
	according to Lemma \ref{lem2}, we have that Eq. (\ref{Mon_Th_1_key}) has no solution in $\gf_q$, which is a contradiction. Therefore, $L(b)$ permutes $T$.
\end{proof}

From Lemma \ref{Case_Mon_s=-1}, we can construct the corresponding permutation polynomial, which is over $\gf_{q^2}$. In Lemma \ref{Case_Mon_s=-1}, $l(b)=\frac{1}{b}$. Then $$\frac{h_1\left(\frac{1}{b}\right)}{h_1\left(\frac{1}{b}\right)+h_2\left(\frac{1}{b}\right)}=\frac{1}{b},$$
i.e.,
$$\frac{h_1(a)}{h_1(a)+h_2(a)}=a.$$
Let $h_1(a)=a$, $h_2(a)=a+1$ and
\begin{eqnarray*}
h(x)&=& h_1(a)x+h_2(a)\\
&=& ax+a+1\\
&=& \left(x+x^{-1}\right)x+x+x^{-1}+1\\
&=& x^2+x+x^{-1}.
\end{eqnarray*}
It is clear that $h(x)$ satisfies the conditions $\emph{(1)}$ and $\emph{(2)}$ in Theorem \ref{MainTheorem_even_2}. Therefore, $f(x)=xh\left(x^{q-1}\right)=x\left(x^{2q-2}+x^{q-1}+x^{1-q}\right)=x^{2q-1}+x^{q}+x^{q^2-q+1}$ is a permutation polynomial over $\gf_{q^2}$. {In this case, the corresponding fractional polynomial of $f(x)$ is $$g(x)=xh(x)^{q-1}=\frac{x^3+x^2+1}{x^3+x+1}.$$ In fact, in \cite{DQ}, Ding et al. proved that $f(x)=x+x^{2q-1}+x^{q^2-q+1}$,  whose corresponding fractional polynomial is also $\frac{x^3+x^2+1}{x^3+x+1}$, is a permutation over $\gf_{q^2}$, where $q=2^k$, by a direct method and some skills. Moreover, in \cite{Zhazhengbang}, Zha et al. proved that the fractional polynomial $g(x)=\frac{x^3+x^2+1}{x^3+x+1}$ permutes $\mu_{q+1}$ through showing $g(x)=\gamma$ has at most one solution in $\mu_{q+1}$ for any $\gamma\in\mu_{q+1}$ directly. Furthermore, they obtained two permutation trinomials (i.e., $x^3+x^{2q+1}+x^{3q}$ and $x^3+x^{q+2}+x^{3q}$) over $\gf_{q^2}$. Therefore, by Lemma \ref{Case_Mon_s=-1}, we can obtain these known permutation trinomials in \cite{Zhazhengbang,DQ}. By the way,  Our proof seems to be easier than theirs.  }

Next, we obtain another result ($s=-2$ in Table I). Before proving it, we give the following lemma.

\begin{Lemma}
	\label{P(x)=0nosolution}
	Let $q=2^k,$ $T:=\{b\in\gf_q: \tr(b)=1\}$ and $P(x)=x^4+x^3+b^2x^2+b^2x+b^5$, where $b\in T$. Then $P(x)=0$ has no solution in $\gf_q$. 
\end{Lemma}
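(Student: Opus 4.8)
The plan is to argue by contradiction: I will assume that $P$ has a root $x_0\in\gf_q$ and then derive a contradiction by showing that the quartic attached to $x_0$ is in fact \emph{irreducible} over $\gf_q$, via the Leonard--Williams criterion (Lemma \ref{lemm4}). Note first that $b\in T$ forces $\tr(b)=1$, hence $b\neq0$, so every denominator below is legitimate. Since Lemma \ref{lemm4} applies only to quartics with vanishing cubic coefficient, the preliminary task is to normalize $P$ into that shape. In characteristic $2$ a translation $x\mapsto x+c$ leaves the $x^3$-coefficient unchanged (the term coming from $x^4$ is $\binom{4}{1}c=0$), so the cubic term cannot be removed by translation; instead I would proceed in two stages, first translating to annihilate the \emph{linear} term and then passing to the reciprocal polynomial to annihilate the cubic term.

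Concretely, the linear coefficient after the shift $x\mapsto x+c$ equals $c^2+b^2$, which vanishes for $c=b$. Substituting $x=t+b$ therefore gives
\begin{equation*}
P(t+b)=t^4+t^3+(b^2+b)t^2+b^5=:Q(t).
\end{equation*}
Because $Q(0)=b^5\neq0$, any root $t=x_0+b$ of $Q$ is nonzero, so I may set $s=1/t\in\gf_q$ and record that $s$ is a root of the (normalized) reciprocal quartic
\begin{equation*}
\widehat{Q}(s)=s^4+\frac{b+1}{b^4}\,s^2+\frac{1}{b^5}\,s+\frac{1}{b^5},
\end{equation*}
which has no cubic term and satisfies $a_0a_1=b^{-10}\neq0$. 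It now suffices to prove that $\widehat{Q}$ is irreducible over $\gf_q$, since irreducibility contradicts the existence of the root $s$.

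To apply Lemma \ref{lemm4} I would study the resolvent cubic $g(y)=y^3+\frac{b+1}{b^4}y+\frac{1}{b^5}$. With $u=\frac{b+1}{b^4}$ and $v=\frac{1}{b^5}$ a short computation gives $u^3/v^2+1=b+\frac1b+\frac1{b^2}$, so that
\begin{equation*}
\tr\!\left(\tfrac{u^3}{v^2}+1\right)=\tr(b)+\tr(1/b)+\tr(1/b^2)=\tr(b)=1\neq0,
\end{equation*}
using $\tr(1/b^2)=\tr(1/b)$; by Lemma \ref{lem3} the cubic $g$ has a unique root $y_0$ in $\gf_q$. The decisive point is that this root is explicit: a direct substitution shows $g(1/b^2)=0$, so $y_0=1/b^2$. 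Consequently $a_0y_0^2/a_1^2=b^5\cdot b^{-4}=b$, whence the second Leonard--Williams condition reads $\tr(a_0y_0^2/a_1^2)=\tr(b)=1$. Both conditions of Lemma \ref{lemm4} then hold, $\widehat{Q}$ is irreducible, and the contradiction finishes the proof.

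The step I expect to be the genuine obstacle is the second trace condition $\tr(a_0y_0^2/a_1^2)=1$: a priori $y_0$ is only described implicitly as the unique $\gf_q$-root of the resolvent cubic, so this quantity looks inaccessible. The trick that unlocks everything is to guess and then verify the closed form $y_0=1/b^2$; with it in hand, both the uniqueness trace $\tr(u^3/v^2+1)$ and the irreducibility trace $\tr(a_0y_0^2/a_1^2)$ collapse to the single hypothesis $\tr(b)=1$. One could instead attempt a purely elementary route, using Lemma \ref{lem2} to exclude quadratic and linear factors of $Q$ separately; but excluding a linear factor directly leads back to the same quartic, so routing through the irreducibility criterion of Lemma \ref{lemm4} is the cleaner path.
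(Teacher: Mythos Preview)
Your proof is correct and follows essentially the same route as the paper. The only cosmetic difference is that the paper performs your two normalization steps in one shot via the single substitution $x\mapsto \frac{1}{x}+b$, obtaining directly $P_1(x)=\frac{1}{b^5}\,x^4P\!\left(\frac{1}{x}+b\right)=x^4+\frac{b+1}{b^4}x^2+\frac{1}{b^5}x+\frac{1}{b^5}$, which is exactly your $\widehat{Q}$; the resolvent-cubic analysis, the explicit root $y_0=1/b^2$, and the two trace computations are then identical to yours.
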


\begin{proof}
	First of all, we have
	$$x^4P\left(\frac{1}{x}+b\right)=x^4\left(\frac{1}{x^4}+\frac{1}{x^3}+\frac{b^2+b}{x^2}+b^5\right)=b^5x^4+\left(b^2+b\right)x^2+x+1.$$
	Let $$P_1(x)=\frac{1}{b^5}x^4P\left(\frac{1}{x}+b\right)=x^4+a_2x^2+a_1x+a_0,$$
	where $$a_2=\frac{b+1}{b^4}, a_1=a_0=\frac{1}{b^5}.$$ Then $P(x)$ has the  same number of solutions in $\gf_q$ as $P_1(x)$.
	Next, we consider the equation 
	\begin{equation}
	\label{Cubic}
	y^3+a_2y+a_1=0.
	\end{equation}
	It is clear that $y_0=\frac{1}{b^2}$ is a solution of the above equation in $\gf_q$. Moreover, since 
	\begin{eqnarray*}
	\tr\left(\frac{a_2^3}{a_1^2}+1\right)&=& \tr\left(\frac{(b+1)^3}{b^2}+1\right)\\
	&=& \tr\left(b+1+\frac{1}{b}+\frac{1}{b^2}+1\right)\\
	&=&\tr(b)=1,
	\end{eqnarray*}
    we get that Eq. (\ref{Cubic}) has only one solution $y_0=\frac{1}{b^2}$ in $\gf_q$ from Lemma \ref{lem3}. Furthermore, we have $$\tr\left(\frac{a_0y_0^2}{a_1^2}\right)=\tr\left(b\right)=1.$$ Therefore, according to Lemma \ref{lemm4}, we know that $P(x)$ has no solution in $\gf_q$.
\end{proof}

\begin{Lemma}
	\label{s_-2}
	Let $L(b)=b+\frac{1}{b^2}+\frac{1}{b^4}$ and $T:=\{b\in\gf_q: \tr(b)=1\}$, where $q=2^k$. Then $L(b)$ permutes $T$.   
\end{Lemma}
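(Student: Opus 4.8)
The map here is $L(b)=b+l(b)+l(b)^2$ with $l(b)=b^{-2}$, so $L(b)=b+b^{-2}+b^{-4}$. The plan is first to observe that $L$ sends $T$ into $T$: since $\tr(y^2)=\tr(y)$ over $\gf_q=\gf_{2^k}$, we get $\tr(b^{-2})+\tr(b^{-4})=2\tr(b^{-1})=0$, hence $\tr(L(b))=\tr(b)=1$ for every $b\in T$. As $T$ is finite, it therefore suffices to prove that $L$ is injective on $T$, and I would argue by contradiction exactly as in Lemma \ref{Case_Mon_s=-1}, the role of the no-solution input now being played by Lemma \ref{P(x)=0nosolution}.

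So suppose there are distinct $b,x\in T$ with $L(b)=L(x)$; I keep $b$ (which will be the parameter of $P$) and treat $x$ as the second point. Using the Frobenius identities $b^{-2}+x^{-2}=(b^{-1}+x^{-1})^2$ and $b^{-4}+x^{-4}=(b^{-1}+x^{-1})^4$, together with $b^{-1}+x^{-1}=(x+b)/(xb)$, the equality $L(b)=L(x)$ becomes
\[
(x+b)+\left(\frac{x+b}{xb}\right)^2+\left(\frac{x+b}{xb}\right)^4=0 .
\]
Clearing denominators by $x^4b^4$ and cancelling the common factor $x+b\neq 0$ leaves the quartic relation
\[
x^4b^4+(x+b)\,x^2b^2+(x+b)^3=0 .
\]
Both $b,x$ lie in $T\subseteq\gf_q^{*}$, so $x\neq 0$ and every step above is legitimate.

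The decisive step is the substitution $w:=\dfrac{x+b}{x}=1+\dfrac{b}{x}\in\gf_q$, well defined since $x\neq 0$, with $w\neq 1$ because $b\neq 0$. Writing $x+b=wx$ and dividing the quartic by $x^3$ gives $xb^4+wb^2+w^3=0$; eliminating $x$ through $x=b/(w+1)$ and simplifying, the relation collapses to
\[
w^4+w^3+b^2w^2+b^2w+b^5=0 ,
\]
which is precisely $P(w)=0$ for the $P$ of Lemma \ref{P(x)=0nosolution}. Since $b\in T$, that lemma guarantees $P$ has no root in $\gf_q$, contradicting $w\in\gf_q$. Hence no such pair $(b,x)$ exists, $L$ is injective, and therefore permutes $T$.

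The only genuinely nonobvious ingredient is finding the change of variable $w=1+b/x$ that turns the difference equation into the exact quartic $P$ already prepared in Lemma \ref{P(x)=0nosolution}; I expect discovering this substitution (rather than blindly expanding $(x+b)^3$ and confronting a messy quartic in $x$ with no evident structure) to be the main obstacle. Everything else -- the trace computation showing $L(T)\subseteq T$, the Frobenius simplifications, the cancellation of $x+b$, and the harmless checks $x\neq0$ and $w\neq1$ -- is routine characteristic-two algebra, and the hard no-root statement has conveniently been isolated beforehand.
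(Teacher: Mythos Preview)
Your proof is correct and follows essentially the same approach as the paper: both arguments reduce the collision $L(b)=L(x)$ to the quartic $P$ of Lemma~\ref{P(x)=0nosolution} and then invoke that lemma for a contradiction. The only cosmetic difference is that the paper first passes to the dual picture $R(a)=1/(1/a+a^2+a^4)$ on $S=\{a:\tr(1/a)=1\}$ via $a=1/b$ and uses the substitution $\Delta/a$, whereas you stay on $T$ and use $w=1+b/x$; under the correspondence $b=1/a$, $x=1/(a+\Delta)$ one has $w=\Delta/a$, so the two substitutions are literally the same.
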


\begin{proof}
	Let $R(a)=\frac{1}{\frac{1}{a}+a^2+a^4}$ and $S:=\{a\in\gf_q^{*}: \tr\left(\frac{1}{a}\right)=1\}$. Then $L(b)$ permutes $T$ if and only if $R(a)$ permutes $S$. In the following, we prove that $R(a)$ permutes $S$. 
	
	Assume that there exist $a, a+\Delta\in S$ such that $R(a)=R(a+\Delta)$, where $\Delta\in\gf_q^{*}$. Then 
	\begin{equation*}
	\frac{1}{a}+\frac{1}{a+\Delta}+\Delta^2+\Delta^4=0,
	\end{equation*}
	i.e., 
	\begin{equation}
	\label{mainequation}
	\Delta^4+a\Delta^3+\Delta^2+a\Delta+\frac{1}{a}=0.
	\end{equation}
	Let $P(x)=x^4+x^3+\frac{1}{a^2}x^2+\frac{1}{a^2}x+\frac{1}{a^5}$. Then $$a^4P\left(\frac{\Delta}{a}\right)=\Delta^4+a\Delta^3+\Delta^2+a\Delta+\frac{1}{a}.$$
	Therefore, Eq. (\ref{mainequation}) is equivalent to $P(x)=0$. According to Lemma \ref{P(x)=0nosolution}, we can claim that Eq. (\ref{mainequation}) is impossible. Hence, $R(a)$ permutes $S$.
\end{proof}

In the above theorem, $l(b)=\frac{1}{b^2}.$ Then $h_1(a)=a^2$ and $h_2(a)=a^2+1$. Moreover, 
\begin{eqnarray*}
h(x) &=& h_1(a)x+h_2(a) \\
&=& a^2x+a^2+1\\
&=& \left(x^2+x^{-2}\right)x+x^2+x^{-2}+1\\
&=& x^3+x^2+x^{-1}+x^{-2}+1.
\end{eqnarray*}
 Therefore, $f(x)=xh\left(x^{q-1}\right)=x\left(x^{3q-3}+x^{2q-2}+x^{1-q}+x^{2-2q}+1\right)=x^{3q-2}+x^{2q-1}+x^{q^2-q+1}+x^{q^2-2q+2}+x$ is a permutation polynomial over $\gf_{q^2}$. Hence, we have the following theorem.

\begin{Th}
	\label{PP_s=-2}
	Let $q=2^k$ and $f(x)=x^{3q-2}+x^{2q-1}+x^{q^2-q+1}+x^{q^2-2q+2}+x$. Then $f(x)$ is a permutation polynomial over $\gf_{q^2}$.
\end{Th}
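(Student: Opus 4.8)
The plan is to read off $f(x)$ as the polynomial $xh\left(x^{q-1}\right)$ produced by the construction $R(a)\to h_1,h_2\to h\to f$ with the choice $l(b)=b^{-2}$, and then to invoke the even-characteristic characterization in Theorem~\ref{MainTheorem_even_2}. Since the substantive permutation fact has already been isolated as Lemma~\ref{s_-2}, the proof of this theorem is essentially an assembly step together with the check of two trivial side conditions.

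First I would pin down the data feeding Theorem~\ref{MainTheorem_even_2}. The choice $l(b)=b^{-2}$ corresponds to $\psi(a)=l(1/a)=a^2$; writing $\psi(a)=t_1(a)/t_2(a)$ with $t_1(a)=a^2$ and $t_2(a)=1$ and applying Remark~\ref{Rep}, I may take $h_1(a)=a^2$ and $h_2(a)=a^2+1$. Feeding these into Algorithm~\ref{alg}, using $a=x+x^{-1}$ and $a^2=x^2+x^{-2}$ in characteristic $2$, gives $h(x)=x^3+x^2+1+x^{-1}+x^{-2}$; note that the factor $x-x^{-1}$ does not divide $h(x)$ (indeed $h(1)\neq0$ below), so the while-loop in Algorithm~\ref{alg} terminates immediately. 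Expanding $f(x)=xh\left(x^{q-1}\right)$ and reducing the negative exponents modulo $q^2-1$ to genuine polynomial exponents recovers exactly the five terms $x^{3q-2}+x^{2q-1}+x^{q^2-q+1}+x^{q^2-2q+2}+x$; this bookkeeping is the only computation and is routine.

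Next I would verify the three conditions of Theorem~\ref{MainTheorem_even_2}. Condition~(1), that $h_1(1/b)\neq h_2(1/b)$ for all $b\in T$, holds automatically because $h_2=h_1+1$ in characteristic $2$, so the two never coincide. Condition~(2), that $h(1)\neq0$, is immediate since $h(1)=1+1+1+1+1=1\neq0$ in $\gf_q$. Condition~(3), that $L(b)=b+l(b)+l(b)^2=b+b^{-2}+b^{-4}$ permutes $T=\{b\in\gf_q:\tr(b)=1\}$, is precisely the statement of Lemma~\ref{s_-2}, which we are entitled to assume here.

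The genuine obstacle therefore lives entirely inside Lemma~\ref{s_-2}: one rewrites the collision equation $L(b_0)=L(b_0+\Delta)$ as a quartic $P(x)=0$ over $\gf_q$ and shows it has no root by the cubic and quartic factorization criteria (Lemmas~\ref{lem3} and~\ref{lemm4}). Once that input is granted, nothing further is required, with no additional case analysis or estimates, so the proof of Theorem~\ref{PP_s=-2} closes immediately from the verification above.
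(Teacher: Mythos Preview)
Your proposal is correct and follows essentially the same approach as the paper: both derive $h_1(a)=a^2$, $h_2(a)=a^2+1$ from the choice $l(b)=b^{-2}$, compute $h(x)=x^3+x^2+1+x^{-1}+x^{-2}$, and invoke Theorem~\ref{MainTheorem_even_2} together with Lemma~\ref{s_-2}. Your write-up is in fact slightly more explicit than the paper's in verifying the two side conditions~(1) and~(2), which the paper leaves implicit.
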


In the above theorem, the corresponding fractional polynomial is  $$g(x)=xh(x)^{q-1}=\frac{x^5+x^4+x^3+x+1}{x^5+x^4+x^2+x+1}.$$
Recently, Li and Helleseth \cite{LiNian1}  proved the following permutation trinomial over $\gf_{q^2}$, where $q=2^k$ and $k$ is even:
$$f(x)=x+x^{\frac{(q-1)^2}{3}+1}+x^{\frac{(2q+7)(q-1)}{3}+1}$$
through proving the corresponding fractional polynomial  $\frac{x^7+x^5+1}{x^7+x^2+1}$, i.e., $\frac{x^5+x^4+x^3+x+1}{x^5+x^4+x^2+x+1}$ when $k$ is even, permutes $\mu_{q+1}$.

\begin{Lemma}
	\label{s3}
	Let $k$ be odd, $l(b)=b^{\frac{2^{k+1}-1}{3}}$ and $T:=\{b\in\gf_q:\tr(b)=1\}$, where $q=2^k$. Then $L(b)=b+l(b)+l(b)^2$ permutes $T$.
\end{Lemma}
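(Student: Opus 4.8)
The plan is to exploit the very special shape of the exponent $s=\frac{2^{k+1}-1}{3}$. Since $k$ is odd we have $\gcd(3,q-1)=1$, and because $3s=2^{k+1}-1=2q-1\equiv 1\pmod{q-1}$, the exponent $s$ is precisely the inverse of $3$ modulo $q-1$. Consequently $b\mapsto b^{s}$ is a bijection of $\gf_q$ and satisfies $l(b)^{3}=b^{3s}=b$ for every $b\in\gf_q$; that is, $l(b)=b^{s}$ is nothing but the cube-root map on $\gf_q$. This is the key structural observation, since it converts the awkward fractional-exponent map $L$ into an honest low-degree polynomial after a single change of variable.

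First I would record that $L$ sends $T$ into $T$: in characteristic $2$ one has $\tr(l(b)^{2})=\tr(l(b))$, so $\tr(L(b))=\tr(b)+\tr(l(b))+\tr(l(b))=\tr(b)$, whence $L(T)\subseteq T$. As $T$ is finite, it then suffices to prove that $L$ is injective (the argument below in fact gives injectivity on all of $\gf_q$). So I substitute $c=l(b)=b^{s}$, i.e. $b=c^{3}$, $b^{s}=c$ and $b^{2s}=c^{2}$; under this bijective substitution $L(b)=b+b^{s}+b^{2s}$ becomes the polynomial $M(c)=c^{3}+c^{2}+c$.

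Next, suppose $L(b_1)=L(b_2)$ with $b_1\neq b_2$, and let $c_1,c_2$ be the corresponding cube roots, so $c_1\neq c_2$. Subtracting and factoring in characteristic $2$ gives
\begin{equation*}
M(c_1)+M(c_2)=(c_1+c_2)\bigl(c_1^{2}+c_1c_2+c_2^{2}+c_1+c_2+1\bigr)=0.
\end{equation*}
Writing $u=c_1+c_2\neq 0$ and $v=c_1c_2$, the vanishing of the second factor reads $u^{2}+v+u+1=0$, i.e. $v=u^{2}+u+1$, so that $c_1,c_2$ are the two roots in $\gf_q$ of $z^{2}+uz+(u^{2}+u+1)$.

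The crux, which is where I expect the real content to sit, is the final trace evaluation via Lemma \ref{lem2}. That quadratic has roots in $\gf_q$ if and only if $\tr\!\left(\frac{u^{2}+u+1}{u^{2}}\right)=0$, but
\begin{equation*}
\tr\!\left(\frac{u^{2}+u+1}{u^{2}}\right)=\tr(1)+\tr\!\left(\tfrac{1}{u}\right)+\tr\!\left(\tfrac{1}{u^{2}}\right)=\tr(1),
\end{equation*}
using again $\tr(1/u^{2})=\tr(1/u)$. Because $k$ is odd, $\tr(1)=k\bmod 2=1\neq 0$, so the quadratic has no root in $\gf_q$, contradicting $c_1,c_2\in\gf_q$. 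Hence $L$ is injective, and together with $L(T)\subseteq T$ this shows $L$ permutes $T$. The hypothesis that $k$ is odd enters exactly here, and for two reasons at once: it guarantees $3\mid 2^{k+1}-1$ (so that $s$ is an integer and $b^{s}$ really is the cube root), and it forces $\tr(1)=1$, which is the only thing that makes the final cancellation a genuine contradiction. Apart from recognizing the cube-root interpretation of $s$, the factorization and trace bookkeeping are routine.
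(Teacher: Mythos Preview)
Your proof is correct, and the core idea is the same as the paper's: you both recognize that $s=\frac{2^{k+1}-1}{3}$ is the inverse of $3$ modulo $q-1$, and both perform the bijective substitution $b=c^{3}$ to turn $L(b)$ into $M(c)=c^{3}+c^{2}+c$. The divergence comes only in how you dispatch $M$. The paper simply observes that in characteristic $2$ one has $M(c)=(c+1)^{3}+1$, which is a composition of translations with the cube map and hence a bijection of $\gf_q$ (since $\gcd(3,q-1)=1$). Your route---subtracting $M(c_1)$ and $M(c_2)$, factoring, and invoking the trace criterion of Lemma~\ref{lem2} to force $\tr(1)=0$---is valid and self-contained, but it is longer and obscures the one-line reason the map permutes: you are in effect rediscovering, via the quadratic-solvability obstruction, the fact that cubing is bijective when $k$ is odd. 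The paper's factorization $c^{3}+c^{2}+c=(c+1)^{3}+1$ is worth noting, since it makes the role of the odd-$k$ hypothesis transparent in a single stroke.
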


\begin{proof} We claim that $L(b)$ permutes $\gf_q$ as follows. First of all, it is clear that $\gcd\left(\frac{2^{k+1}-1}{3},q-1\right)=1$ and $\left(\frac{2^{k+1}-1}{3}\right)^{-1}\equiv 3 \pmod {q-1}$. Therefore, $x^3$ permutes $\gf_q$ and it suffices to show that $L\left(b^3\right)=b^3+b+b^2=(b+1)^3+1$ permutes $\gf_q$, which is obvious. Because $\tr(L(b)) =\tr(b)$, $L(b)$ permutes $T$ as well.
\end{proof}

In Lemma \ref{s3}, $l(b)=b^s$, where $s=\frac{2^{k+1}-1}{3}.$ Then $\psi(a)=\frac{1}{a^s}$ and $h_1(a)=1$, $h_2(a)=a^s+1$  according to Algorithm \ref{alg1}. Moreover, the original 
\begin{eqnarray*}
h(x) &=& x+a^s+1\\
&=& x+1+\left(x+x^{-1}\right)^s\\
&=& \left(x^{\frac{1}{2}}+x^{-\frac{1}{2}}\right)\left(x^{\frac{1}{2}}+\left(x+x^{-1}\right)^{s-1}\left(x^{\frac{1}{2}}+x^{-\frac{1}{2}}\right)\right).
\end{eqnarray*}
Therefore, we obtain $h(x)=x^{\frac{1}{2}}+\left(x+x^{-1}\right)^{s-1}\left(x^{\frac{1}{2}}+x^{-\frac{1}{2}}\right)$ according to Algorithm \ref{alg}. Moreover, $$f\left(x^2\right)=x^2h\left(x^{2(q-1)}\right)=x^{q+1}+x^2\left(x^{q-1}+x^{q^2-q}\right)^{2s-1}.$$ Hence we obtain the following theorem. 
\begin{Th}
	Let $q=2^k$, $k$ be odd and $s=\frac{2^{k+1}-1}{3}$. Then $f(x)=x^{q+1}+x^2\left(x^{q-1}+x^{q^2-q}\right)^{2s-1}$ is a permutation polynomial over $\gf_{q^2}$.
\end{Th}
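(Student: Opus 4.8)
The plan is to deduce the theorem from the even-characteristic criterion (Theorem~\ref{MainTheorem_even_2}) applied to the ``base'' polynomial $\tilde f(x)=xh\left(x^{q-1}\right)$ attached to the monomial $l(b)=b^s$, and then to transport the permutation property across the bijection $x\mapsto x^2$ to recover the explicit exponents in the statement. The decisive hypothesis of Theorem~\ref{MainTheorem_even_2}, namely that $L(b)=b+l(b)+l(b)^2$ permutes $T=\{b\in\gf_q:\tr(b)=1\}$, is precisely Lemma~\ref{s3}, so that step costs nothing; the remaining work is to build the correct $h$, to clear the two side conditions of Theorem~\ref{MainTheorem_even_2}, and to carry out the $x\mapsto x^2$ bookkeeping.

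First I would convert $l(b)=b^s$ into the pair $(h_1,h_2)$. Since $l(b)=\psi(1/b)$ with $\psi(a)=h_1(a)/(h_1(a)+h_2(a))$, the requirement $l(b)=b^s$ forces $\psi(a)=a^{-s}$, and Remark~\ref{Rep} then permits the normalization $h_1(a)=1$, $h_2(a)=a^s+1$ (indeed $h_1+h_2=a^s$ in characteristic $2$, so $\psi=a^{-s}$). Running Algorithm~\ref{alg} produces the raw candidate $h(x)=x+(x+x^{-1})^s+1$, whose only subtlety is the characteristic-$2$ factorization $x+1+(x+x^{-1})^s=(x^{1/2}+x^{-1/2})\bigl(x^{1/2}+(x+x^{-1})^{s-1}(x^{1/2}+x^{-1/2})\bigr)$; dividing out the spurious factor $x^{1/2}+x^{-1/2}$, as Algorithm~\ref{alg} prescribes, leaves $h(x)=x^{1/2}+(x+x^{-1})^{s-1}(x^{1/2}+x^{-1/2})$.

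Next I would clear conditions (1) and (2) of Theorem~\ref{MainTheorem_even_2}. For (1), $h_1(1/b)=1$ and $h_2(1/b)=b^{-s}+1$ differ for every $b\in T$, because $\tr(b)=1$ forces $b\neq 0$ and hence $b^{-s}\neq 0$. For (2), the factor $x^{1/2}+x^{-1/2}$ vanishes at $x=1$, so the second summand of $h$ dies there and $h(1)=1\neq 0$. Combining these with condition (3) supplied by Lemma~\ref{s3}, Theorem~\ref{MainTheorem_even_2} yields that $\tilde f(x)=xh\left(x^{q-1}\right)$ permutes $\gf_{q^2}$.

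Finally I would pass to the stated $f$. As $q$ is even, $x\mapsto x^2$ is a bijection of $\gf_{q^2}$, so $f(x)=\tilde f(x^2)$ is again a permutation and the half-integer exponents disappear. The remaining identity $\tilde f(x^2)=x^{q+1}+x^2(x^{q-1}+x^{q^2-q})^{2s-1}$ is pure exponent-chasing: one uses $x^{q^2-q}=x^{-(q-1)}$ on $\gf_{q^2}^{*}$, and with $y=x^{q-1}$ the cross term collapses via $(y^2+y^{-2})^{s-1}(y+y^{-1})=(y+y^{-1})^{2s-1}$ in characteristic $2$. I expect the only genuinely delicate points to be confined to this last step and to the Algorithm~\ref{alg} computation, i.e. the correct interpretation of the inverse-Frobenius $x^{1/2}$ and the legitimacy of dividing by $x^{1/2}+x^{-1/2}$; given Lemma~\ref{s3}, the permutation claim itself is otherwise an immediate application of Theorem~\ref{MainTheorem_even_2}.
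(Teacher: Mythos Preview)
Your proposal is correct and follows essentially the same route as the paper: derive $h_1(a)=1$, $h_2(a)=a^s+1$ from $l(b)=b^s$, factor out $x^{1/2}+x^{-1/2}$ via Algorithm~\ref{alg} to get $h(x)=x^{1/2}+(x+x^{-1})^{s-1}(x^{1/2}+x^{-1/2})$, invoke Lemma~\ref{s3} for condition~(3), and then pass to $f(x)=\tilde f(x^2)$ to eliminate the half-integer exponents. You add the explicit checks of conditions~(1) and~(2), which the paper leaves implicit, but the structure and key ingredients are identical.
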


\begin{Lemma}
	\label{s4}
	Let $k=2m$, $q=2^k$, $T:=\{b\in\gf_q: \tr(b)=1\}$ and $l(b)=b^{2^{k-1}-2^{m-1}}$. Then $L(b)=b+l(b)+l(b)^2$ permutes $T$.
\end{Lemma}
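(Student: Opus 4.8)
The plan is to avoid the difference-and-factorization route used in Lemmas \ref{Case_Mon_s=-1} and \ref{s_-2} — the exponent $s=2^{k-1}-2^{m-1}$ is large and the resulting equation would have high degree, so the cubic/quartic factorization lemmas do not apply cleanly — and instead to exploit the multiplicative structure of $\gf_q^{*}=\gf_{2^{2m}}^{*}$. Since $2^{2m}-1=(2^m-1)(2^m+1)$ with $\gcd(2^m-1,2^m+1)=1$, the cyclic group $\gf_q^{*}$ is the internal direct product of its subgroups $\gf_{2^m}^{*}$ (order $2^m-1$) and $\mu_{2^m+1}$ (order $2^m+1$). Hence every $b\in\gf_q^{*}$ has a unique factorization $b=\beta w$ with $\beta\in\gf_{2^m}^{*}$ and $w\in\mu_{2^m+1}$; as $0\notin T$, this applies to every $b\in T$.

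First I would identify $l(b)$ in these coordinates. Writing $s=2^{k-1}-2^{m-1}=2^{m-1}(2^m-1)$ and using $\beta^{2^m-1}=1$ together with $w^{2^m}=w^{-1}$, one computes $b^{2^m-1}=w^{-2}$ and therefore $l(b)=\bigl(b^{2^m-1}\bigr)^{2^{m-1}}=w^{-2^m}=w$. Consequently $L(b)=\beta w+w+w^{2}=w(\beta+1+w)$, a very manageable expression. I would also record the standard fact that the absolute trace is Frobenius-invariant, so $\tr\bigl(l(b)+l(b)^2\bigr)=0$ and hence $\tr(L(b))=\tr(b)$; this shows $L$ maps $T$ into $T$. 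Moreover $\tr(1)=k=2m\equiv 0$, so $1\notin T$, a point that will be used crucially below.

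The core step is injectivity of $L$ on $T$. Given $b\in T$ with $y:=L(b)$, I would solve $y=\beta w+w+w^2$ for $\beta=yw^{-1}+1+w$ and then impose $\beta\in\gf_{2^m}$, i.e. $\beta^{2^m}=\beta$. Using $w^{2^m}=w^{-1}$ this collapses to the single relation $w^{2}\,(y^{2^m}+1)=y+1$. Because $y=L(b)\in T$ while $1\notin T$, we have $y\neq 1$, hence $y^{2^m}+1\neq 0$ and $w^{2}=(y+1)/(y^{2^m}+1)$ is determined by $y$ alone; taking the unique square root recovers $w$, then $\beta=yw^{-1}+1+w$, and finally $b=\beta w$. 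Thus $b$ is a function of $y$, so $L$ is injective on $T$; since $T$ is finite and $L(T)\subseteq T$, it permutes $T$.

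The main obstacle here is conceptual rather than computational: recognizing that $l(b)$ is exactly the $\mu_{2^m+1}$-component $w$ of $b$, which turns $L$ into the transparent form $w(\beta+1+w)$ and lets the constraint $\beta\in\gf_{2^m}$ do all the work. Once that identification is in place, the remaining manipulations are routine. The one delicate point is the exclusion of $y=1$, which is guaranteed precisely because $k=2m$ is even (so $\tr(1)=0$ and $1\notin T$); this is where the hypothesis $k=2m$ genuinely enters the injectivity argument, and it signals that the relative-Frobenius computations must be checked to use $w\in\mu_{2^m+1}$ and $\beta\in\gf_{2^m}$ honestly.
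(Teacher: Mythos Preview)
Your proof is correct. The key identification $l(b)=w$, where $w$ is the $\mu_{2^m+1}$-component of $b$ in the internal direct product $\gf_q^{*}\cong\gf_{2^m}^{*}\times\mu_{2^m+1}$, is an elegant observation that the paper does not make explicitly; once $L(b)=w(\beta+1+w)$ is in hand, the injectivity argument is transparent.

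The paper's own proof takes a different, more computational route: it replaces $L$ by $L^{2\cdot 2^m}$, writes the equation $b^{2\cdot 2^m}+b^{2^m-1}+b^{2\cdot 2^m-2}=c$, multiplies by $b^2$ to get $b^{2\cdot 2^m+2}+b^{2^m+1}+b^{2\cdot 2^m}=b^2c$, and adds this to its $2^m$-th power to obtain $b^2(c+1)=b^{2\cdot 2^m}(c^{2^m}+1)$, from which $b$ is recovered uniquely. If one translates (using $b^{2^m-1}=w^{-2}$ and $c=y^{2\cdot 2^m}$), the paper's pivot relation is precisely the square of your relation $w^{2}(y^{2^m}+1)=y+1$, so the two arguments land on the same algebraic fact. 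Your approach is more conceptual: it explains \emph{why} the exponent $s=2^{m-1}(2^m-1)$ is special (it annihilates the $\gf_{2^m}^{*}$-factor) and makes the role of the hypothesis $k=2m$ visible from the start. The paper's approach has the minor advantage of needing no structural setup, only power manipulations. Both arguments use $1\notin T$ at exactly the same division step.
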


\begin{proof}
Let $q_1=2^m$. We claim that $L(b)^{2q_1}$ permutes $T$  by proving that  the equation $L(b)^{2q_1}=c$, i.e., 
\begin{equation*}
b^{2q_1}+b^{q_1-1}+b^{2q_1-2}=c,
\end{equation*} 
has at most one solution in $T$ for any $c\in T$. 

Multiplying both sides  of the above equation by $b^2$, we get 
\begin{equation}
\label{L(b)=c}
b^{2q_1+2}+b^{q_1+1}+b^{2q_1}=b^2c.
\end{equation}

Computing the sum of Eq. (\ref{L(b)=c}) and the $q_1$-th power of  Eq. (\ref{L(b)=c}), we have $b^2(c+1)=b^{2q_1}\left(c^{q_1}+1\right)$. Therefore, $b^2(c+1)\in\gf_{q_1}$. Assume $b^2(c+1)=u$, where $u\in\gf_{q_1}$.  Since $k$ is even, we have $1\not\in T$, which means $c+1\ne0$ for any $c\in T$. Hence, $$b=\left(\frac{u}{c+1}\right)^{\frac{1}{2}}, b^2=\left(\frac{u}{c+1}\right),  b^{2q_1} = \frac{u}{c^{q_1} + 1},  b^{2q_1+2} = \frac{u^2}{c^{q_1+1} + c^{q_1} + c + 1}. $$
Raising Eq. (\ref{L(b)=c}) to a power of two and plugging the above powers of $b$'s into it, we obtain 
$$u^2=c^{2q_1+2}+c^{q_1+1}+c^{q_1}+c.$$
Therefore, $$b=\left(\frac{c^{2q_1+2}+c^{q_1+1}+c^{q_1}+c}{c^2+1}\right)^{\frac{1}{4}}.$$
That means Eq. (\ref{L(b)=c}) has at most one solution in $T$.
\end{proof}

In Lemma \ref{s4}, $\psi(a)=l\left(\frac{1}{a}\right)=\frac{a^{2^{m-1}}}{a^{2^{k-1}}}$. Then we have $h_1(a)=a^{2^{m-1}}$ and $h_2(a)=a^{2^{k-1}}+a^{2^{m-1}}.$ Moreover, according to Algorithm \ref{alg}, the original 
\begin{eqnarray*}
h(x) &=& a^{2^{m-1}} x+ a^{2^{k-1}}+a^{2^{m-1}} \\
&=& \left(x+x^{-1}\right)^{2^{m-1}} \left(x+ \left(x+x^{-1}\right)^{2^{k-1}-2^{m-1}}+1\right)\\
&=& \left(x+x^{-1}\right)^{2^{m-1}+\frac{1}{2}}\left(x^{\frac{1}{2}}+\left(x+x^{-1}\right)^{2^{k-1}-2^{m-1-\frac{1}{2}}}\right),
\end{eqnarray*}
and finally we obtain $h(x)=x^{\frac{1}{2}}+\left(x+x^{-1}\right)^{2^{k-1}-2^{m-1}-\frac{1}{2}}$. Moreover, $$f\left(x^2\right)=x^2h\left(x^{2(q-1)}\right)=x^2\left(x^{q-1}+\left(x^{q-1}+x^{1-q}\right)^{2^k-2^m-1}\right).$$
Hence, we have the following theorem.

\begin{Th}
	Let $q=2^k$, where $k=2m$ and $f(x)=x^{q+1}+x^2\left(x^{q-1}+x^{1-q}\right)^{2^k-2^m-1}$. Then $f(x)$ is a permutation polynomial over $\gf_{q^2}$. 
\end{Th}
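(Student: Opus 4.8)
The plan is to obtain this theorem as a direct application of Lemma~\ref{s4} through the construction procedure $R(a)\to h_1(a),h_2(a)\to h(x)\to f(x)$ described after Corollary~\ref{Maintheorem_r=1}, specialized to $r=1$. Since Lemma~\ref{s4} already supplies the permutation of $T$ by $L(b)=b+l(b)+l(b)^2$ with $l(b)=b^{2^{k-1}-2^{m-1}}$, the genuinely hard analytic content is finished; what remains is to recover the associated $h(x)$, to check the two side conditions of Theorem~\ref{MainTheorem_even_2}, and to convert the formal object $xh(x^{q-1})$ into the stated integer-exponent polynomial.

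First I would reconstruct $h(x)$. From $l(b)=b^{2^{k-1}-2^{m-1}}$ and the relation $\psi(a)=l(1/a)$ that links Theorems~\ref{MainTheorem_even_1} and~\ref{MainTheorem_even_2}, one reads off $\psi(a)=a^{2^{m-1}}/a^{2^{k-1}}$. Applying Remark~\ref{Rep} with $t_1(a)=a^{2^{m-1}}$ and $t_2(a)=a^{2^{k-1}}$ fixes the normalization $h_1(a)=a^{2^{m-1}}$ and $h_2(a)=a^{2^{k-1}}+a^{2^{m-1}}$. Substituting $a=x+x^{-1}$ and running Algorithm~\ref{alg}, that is, cancelling the spurious factor $(x+x^{-1})^{2^{m-1}+\frac{1}{2}}$ that appears because the original $h$ vanishes at $x=1$, yields $h(x)=x^{\frac{1}{2}}+(x+x^{-1})^{2^{k-1}-2^{m-1}-\frac{1}{2}}$. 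The only care needed here is bookkeeping the half-integer exponents produced by this cancellation in characteristic~$2$.

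Next I would verify the three conditions of Theorem~\ref{MainTheorem_even_2}. Condition~(3) is precisely Lemma~\ref{s4}, hence already established. For condition~(1), note that in characteristic~$2$ the two copies of $a^{2^{m-1}}$ cancel, so $h_1(a)+h_2(a)=a^{2^{k-1}}$; thus $h_1(a)=h_2(a)$ forces $a=0$, which is impossible on $S\subseteq\gf_q^{*}$, and equivalently $h_1(1/b)\neq h_2(1/b)$ for every $b\in T$. For condition~(2), evaluating $h(x)=x^{\frac{1}{2}}+(x+x^{-1})^{2^{k-1}-2^{m-1}-\frac{1}{2}}$ at $x=1$ gives $h(1)=1+0=1\neq 0$, since $1+1=0$ raised to a positive power vanishes. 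Therefore all hypotheses of Theorem~\ref{MainTheorem_even_2} hold, and the map $F(x):=xh(x^{q-1})$ permutes $\gf_{q^2}$.

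Finally I would remove the fractional exponents. Because $q^2-1$ is odd, the map $x\mapsto x^2$ is a bijection of $\gf_{q^2}$, so $F$ is a permutation polynomial if and only if $F(x^2)$ is. Computing $F(x^2)=x^2h(x^{2(q-1)})$ and using $x^{2(q-1)}+x^{2(1-q)}=(x^{q-1}+x^{1-q})^2$ to collapse the half-integer exponent produces exactly $F(x^2)=x^{q+1}+x^2(x^{q-1}+x^{1-q})^{2^k-2^m-1}=f(x)$, completing the argument. The main obstacle here is conceptual rather than computational: the construction naturally outputs the non-polynomial expression $h$ containing $x^{1/2}$, and the legitimacy of reading off a genuine permutation polynomial rests entirely on this squaring bijection, which is available precisely because the characteristic is $2$ (so that $q^2-1$, and the exponent $q+1$ governing $\mu_{q+1}$, are odd). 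Making this passage explicit is the one step I would take pains to justify carefully.
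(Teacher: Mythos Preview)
Your proposal is correct and follows essentially the same approach as the paper: derive $h(x)$ from Lemma~\ref{s4} via the construction $l(b)\to\psi(a)\to h_1,h_2\to h$, then square to obtain $f$. The paper presents only the algebraic derivation of $h(x)$ and $f(x^2)$ without explicitly checking conditions~(1) and~(2) of Theorem~\ref{MainTheorem_even_2} or justifying the squaring passage, so your write-up is in fact more complete than the paper's own argument on these points.
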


{ Finally, we give the last result in this subsection.

\begin{Lemma}
	Let $q=2^k$, where $k$ is even, $T:= \{b\in\gf_{q^2}^{*} : \tr(b)=1 \}$ and $s={\frac{2q^2-q-1}{3}}$. Then $L(b)=b+b^s+b^{2s}$ permutes $T$.
\end{Lemma}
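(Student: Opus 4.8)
The plan is to use the defining shape $L(b)=b+l(b)+l(b)^{2}$ with $l(b)=b^{s}$ together with the fact, special to this section, that such an $L$ preserves the absolute trace. Indeed $\tr(L(b))=\tr(b)+\tr(b^{s})+\tr(b^{2s})=\tr(b)$, since $\tr(x^{2})=\tr(x)$ over $\gf_{2}$; hence $L$ maps $T$ into $T$, and because $T$ is finite it will suffice to prove that $L$ is injective on $T$. Throughout I would write $w:=b^{s}$ and $\gamma:=L(b)=b+w+w^{2}$, and aim to show that $\gamma$ alone determines $w$ and then $b$.

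First I would record the relevant exponent arithmetic. Since $k$ is even, $q=2^{k}\equiv 1\pmod 3$, so $q+1\equiv 2\pmod 3$ and $\gcd(3,q+1)=1$; therefore cubing permutes $\mu_{q+1}$. Writing $2q^{2}-q-1=(2q+1)(q-1)$ and using $3\mid (2q+1)$, one gets $s(q+1)=\frac{2q+1}{3}\,(q^{2}-1)$, so that $w^{q+1}=b^{s(q+1)}=1$; thus $w\in\mu_{q+1}$ and $w^{q}=w^{-1}$. Finally, reducing $3s=2q^{2}-q-1$ modulo $q^{2}-1$ (with $b^{q^{2}}=b$) gives $w^{3}=b^{3s}=b^{1-q}$, equivalently $b^{q}=b\,w^{-3}$.

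With these facts the injectivity is short. I would fix $b\in T$ and set $\gamma=L(b)=b+w+w^{2}$, noting $\gamma\neq 0$ because $\tr(\gamma)=\tr(b)=1$. Raising $\gamma=b+w+w^{2}$ to the $q$-th power and using $w^{q}=w^{-1}$ gives $\gamma^{q}=b^{q}+w^{-1}+w^{-2}$; substituting $b^{q}=b\,w^{-3}$ and multiplying through by $w^{3}$ yields $\gamma^{q}w^{3}=b+w+w^{2}=\gamma$, that is $w^{3}=\gamma^{1-q}$. Since $\gamma^{1-q}\in\mu_{q+1}$ and cubing permutes $\mu_{q+1}$, this recovers $w$ uniquely as the cube root of $\gamma^{1-q}$, and then $b=\gamma+w+w^{2}$ is recovered uniquely as well. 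Thus $L(b_{1})=L(b_{2})$ with $b_{1},b_{2}\in T$ forces $b_{1}=b_{2}$, so $L$ permutes $T$.

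The hard part will be purely the exponent bookkeeping that makes this Frobenius collapse valid: one must check both that $w=b^{s}$ lands in $\mu_{q+1}$ (so that $w^{q}=w^{-1}$ is available) and that $w^{3}=b^{1-q}$, and this is exactly where the hypotheses $q=2^{k}$ and $k$ even are used, through $3\mid(2q+1)$, $3\mid(q-1)$ and $3\nmid(q+1)$. Once $b\mapsto b^{s}$ is recognised as a map onto $\mu_{q+1}$ that the Frobenius inverts, the remaining manipulation is linear, and no root-counting argument of the kind used in Lemmas~\ref{Case_Mon_s=-1}--\ref{s4} is required.
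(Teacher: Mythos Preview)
Your proof is correct and follows essentially the same Frobenius--collapse idea as the paper's own argument: raise $L(b)=c$ to the $q$-th power, combine, and deduce a cube relation that pins down $b$ uniquely. The only organisational difference is that the paper introduces two auxiliary variables $d=b^{(q-1)/3}$ and a cube root of unity $\omega=b^{(q^{2}-1)/3}$, derives $d^{3}=c^{q-1}$ (hence $b/c\in\gf_{q}^{*}$), and then uses $\gamma^{s}=1$ for $\gamma\in\gf_{q}^{*}$ to obtain the explicit inverse $b=c+c^{s}+c^{2s}$; you instead work directly with $w=b^{s}\in\mu_{q+1}$, get $w^{3}=\gamma^{1-q}$, and invoke that cubing permutes $\mu_{q+1}$. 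Your route is a little more streamlined, while the paper's yields the pleasant byproduct that $L$ is (up to this identification) its own inverse on $T$.
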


\begin{proof}
	Let $d=b^{\frac{q-1}{3}}$, $w=b^{\frac{q^2-1}{3}}$. Then  $w^3=1$, $w\in\gf_q$, $d^{q+1}=w$, $d^3=b^{q-1}$ and  $b^s=\frac{w^2}{d}$. For any $c\in T$, we claim that $L(b)=c$, i.e., 
	\begin{equation}
	\label{b+wd=c}
	b+\frac{1}{wd}+\frac{1}{w^2d^2}=c,
	\end{equation}
	has at most one solution in $T$. 
	
	Raising Eq. (\ref{b+wd=c}) into its $q$-th power, we obtain 
	\begin{equation*}
	b^q+wd+w^2d^2=c^q,
	\end{equation*}
	i.e., 
	\begin{equation}
	\label{b^q+wd=c^q}
	d^3b+wd+w^2d^2=c^q.
	\end{equation}
	Computing $(\ref{b+wd=c})*d^3+(\ref{b^q+wd=c^q})$, we get $d^3=c^{q-1}$. Hence, $b=\gamma c$, where $\gamma\in\gf_q^{*}$. In fact, $$\gamma^s=\gamma^{\frac{q-1}{3}(2q+1)}=\gamma^{q-1}=1.$$ Plugging $b=\gamma c$ into $L(b)=c$, we have $b+\gamma^sc^s+\gamma^{2s}c^{2s}=c$. Therefore, $$b=c+c^s+c^{2s} \in T.$$
\end{proof}
Using the above lemma, we can obtain the following class of permutation polynomials similarly.

\begin{Th}
	Let $q=2^k$, $k$ be even and $s={\frac{2q^2-q-1}{3}}$. Then $f(x)=x^{q^2+1}+x^2\left(x^{q^2-1}+x^{q^4-q^2}\right)^{2s-1}$ is a permutation polynomial over $\gf_{q^4}$.
\end{Th}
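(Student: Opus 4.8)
The plan is to recognize this as a direct application of the paper's general machinery with the base field taken to be $\gf_{q^2}$ rather than $\gf_q$; that is, to instantiate Theorem~\ref{MainTheorem_even_2} (equivalently Corollary~\ref{Maintheorem_r=1}) with $q$ replaced throughout by $Q=q^2$, so that $\gf_{q^4}=\gf_{Q^2}$ and the hypothesis ``$k$ even'' guarantees that $Q=2^{2k}$ has even exponent, exactly as required. The preceding Lemma already supplies the hard ingredient: it shows that $L(b)=b+b^s+b^{2s}=b+l(b)+l(b)^2$ with $l(b)=b^s$ permutes $T=\{b\in\gf_{q^2}:\tr(b)=1\}$, which is precisely Condition~$(3)$ of Theorem~\ref{MainTheorem_even_2}. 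Thus the whole task reduces to reconstructing the polynomial $h$ (and hence $f$) from $l(b)=b^s$ and then checking the two remaining, easy conditions.

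First I would run the reconstruction $R(a)\to h_1(a),h_2(a)\to h(x)\to f(x)$. Since $l(b)=b^s$, one has $\psi(a)=l(1/a)=a^{-s}=1/a^s$, so writing $\psi(a)=t_1(a)/t_2(a)$ with $t_1=1$ and $t_2=a^s$ and invoking Remark~\ref{Rep}, I take $h_1(a)=1$ and $h_2(a)=1+a^s$. The original polynomial is then $h(x)=h_1(a)x+h_2(a)=x+1+(x+x^{-1})^s$ with $a=x+x^{-1}$. Exactly as in the computation following Lemma~\ref{s3}, this $h$ is divisible by $x^{1/2}+x^{-1/2}$, and Algorithm~\ref{alg} returns the reduced form $h(x)=x^{1/2}+(x+x^{-1})^{s-1}(x^{1/2}+x^{-1/2})$.

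It remains to verify Conditions~$(1)$ and $(2)$ of Theorem~\ref{MainTheorem_even_2}. For $(1)$, the equality $h_1(a)=h_2(a)$ forces $a^s=0$, i.e.\ $a=0$, which never occurs for $a\in S$ (equivalently $b=1/a\in T$, and $0\notin T$ since $\tr(0)=0$); hence $h_1\ne h_2$ on the relevant set. For $(2)$, evaluating the reduced $h$ at $x=1$ annihilates the second summand (both $x^{1/2}+x^{-1/2}$ and $(x+x^{-1})^{s-1}$ vanish there in characteristic $2$), leaving $h(1)=1\ne 0$. Finally I would form $f(x)=xh(x^{q^2-1})$ and clear the half-integer exponents by passing to $f(x^2)$; since $x\mapsto x^2$ is the Frobenius automorphism of $\gf_{q^4}$, the map $f(x^2)$ permutes $\gf_{q^4}$ if and only if $f$ does. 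A short exponent computation then gives $f(x^2)=x^{q^2+1}+x^2\left(x^{q^2-1}+x^{q^4-q^2}\right)^{2s-1}$, which is the stated polynomial after relabelling $f(x^2)$ as $f(x)$.

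The genuine difficulty---the permutation property of $L(b)=b+b^s+b^{2s}$ on $T$---has already been discharged in the preceding Lemma, so I do not expect a real obstacle in the remaining argument. The points demanding care are purely bookkeeping: confirming that replacing $q$ by $q^2$ leaves every hypothesis of Theorem~\ref{MainTheorem_even_2} intact (in particular that the trace is taken from $\gf_{q^2}$ to $\gf_2$ and that $S$ and $T$ are formed inside $\gf_{q^2}$), and tracking the fractional exponents through Algorithm~\ref{alg} so that the final substitution $x\mapsto x^2$ yields precisely the exponents $q^2+1$, $q^2-1$, and $q^4-q^2$ appearing in the statement.
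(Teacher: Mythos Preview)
Your proposal is correct and follows exactly the paper's approach: the paper merely says ``Using the above lemma, we can obtain the following class of permutation polynomials similarly,'' referring to the computation done after Lemma~\ref{s3}, and you have faithfully reproduced that computation with $q$ replaced by $Q=q^2$. One small inaccuracy: your remark that ``$k$ even guarantees that $Q=2^{2k}$ has even exponent'' is vacuous (the exponent $2k$ is always even); the genuine role of ``$k$ even'' is to make $s=\frac{(2q+1)(q-1)}{3}$ an integer (since $q=2^k\equiv 1\pmod 3$ iff $k$ is even) and hence to make the preceding Lemma applicable.
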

}


\subsection{The case of linearized polynomials}
\label{The case of linearized polynomial}
In this subsection, we consider the case where $l(b)$ is a linearized polynomial in Theorem \ref{MainTheorem_even_2}. For a linearized polynomial $L(b)=b+l(b)+l(b)^2$ over $\gf_q$, it is known that $L(b)$ permutes $\gf_q$ if and only if $L(b)=0$ has only zero solution in $\gf_q$. In the following, we prove the fact that $L(b)$ permutes $T$  is equivalent to that $L(b)$ permutes $\gf_q$, where $T:=\{b\in\gf_q: \tr(b)=1\}$.

\begin{Lemma}
	\label{LpermutesT}
	Let $q=2^k$, $T:=\{b\in\gf_q: \tr(b)=1\}$ and $l(b)$ be a linearized polynomial over $\gf_q$. Then $L(b)=b+l(b)+l(b)^2$ is also linearized  over $\gf_q$. Then $L(b)$ permutes $T$ if and only if $L(b)$ is a permutation polynomial over $\gf_q$.  
\end{Lemma}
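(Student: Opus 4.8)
The plan is to prove the two-directional equivalence by exploiting the additive (linear) structure of $L$. Since $l(b)$ is linearized over $\gf_q$, each of $b$, $l(b)$, and $l(b)^2 = l(b)^{\left(2\right)}$ (the Frobenius square applied to the linearized map) is $\gf_2$-linear, hence $L(b)=b+l(b)+l(b)^2$ is itself a linearized, and therefore $\gf_2$-linear, polynomial over $\gf_q$. This reduces the permutation question to a kernel question: $L$ permutes $\gf_q$ if and only if $\ker L = \{b\in\gf_q : L(b)=0\}$ is trivial.

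The key observation I would establish first is that $L$ maps $T$ into $T$, where $T:=\{b\in\gf_q:\tr(b)=1\}$. This is exactly the content used implicitly in Theorem \ref{MainTheorem_even_2}: because the trace is $\gf_2$-linear and $\tr(c^2)=\tr(c)$ for all $c\in\gf_q$, we have $\tr(L(b))=\tr(b)+\tr(l(b))+\tr(l(b)^2)=\tr(b)+\tr(l(b))+\tr(l(b))=\tr(b)$. Hence $L$ preserves the value of the trace, so $L(T)\subseteq T$ and likewise $L$ maps the complementary coset $T_0:=\{b:\tr(b)=0\}$ into itself.

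For the forward direction, suppose $L$ permutes $\gf_q$. Then $L$ is injective on all of $\gf_q$, so in particular injective on $T$; combined with $L(T)\subseteq T$ and the finiteness of $T$, injectivity forces $L$ to be a bijection of $T$ onto itself. For the converse, suppose $L$ permutes $T$ but, toward a contradiction, $L$ does not permute $\gf_q$. By the kernel criterion there is a nonzero $b_0\in\ker L$; note $\tr(b_0)=\tr(L(b_0))=\tr(0)=0$, so $b_0\in T_0\setminus\{0\}$. Now I would use translation by $b_0$ to derive a collision inside $T$: for any $b\in T$, linearity gives $L(b+b_0)=L(b)+L(b_0)=L(b)$, and since $\tr(b+b_0)=\tr(b)+\tr(b_0)=1$ we have $b+b_0\in T$ with $b+b_0\neq b$. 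This exhibits two distinct points of $T$ with the same image, contradicting that $L$ permutes $T$. Therefore $\ker L=\{0\}$ and $L$ permutes $\gf_q$.

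The main (and really the only) obstacle is bookkeeping: one must confirm that $l(b)^2$ is genuinely $\gf_2$-linear so that $L$ is linearized, and that $\tr(l(b)^2)=\tr(l(b))$ so the trace-preservation step goes through; both follow from $\tr(c^2)=\tr(c)$ and the fact that squaring is the Frobenius automorphism, which commutes with the $\gf_2$-linear map $l$. Once trace-preservation and $\gf_2$-linearity are in hand, the translation-by-kernel-element argument is routine, and no case analysis or explicit computation is needed.
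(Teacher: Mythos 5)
Your proof is correct and follows essentially the same route as the paper: trace-preservation $\tr(L(b))=\tr(b)$ gives the forward direction, and a nonzero kernel element $b_0$ (necessarily of trace $0$) translated against elements of $T$ gives the contradiction for the converse. Your explicit derivation of $\tr(b_0)=0$ from $\tr(b_0)=\tr(L(b_0))=\tr(0)$ is a small extra justification the paper merely asserts, but the argument is identical in substance.
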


\begin{proof}
	(1) If $L(b)$ is a permutation polynomial over $\gf_q$, it is obvious that $L(b)$ permutes $T$ because $\tr(L(b)) = \tr(b)$.
	
	(2) If $L(b)$ permutes $T$ and $L(b)$ is not a permutation polynomial over $\gf_q$, there exists $b_0\in\gf_q^{*}$ such that $L\left(b_0\right)=0$ and $\tr\left(b_0\right)=0.$ Then for any $b_1\in T$, $b_0+b_1\in T$, $L\left(b_0+b_1\right)=L\left(b_1\right),$ which means that $L(b)$ does not permute $T$, a contradiction.   Therefore, $L(b)$ is a permutation polynomial over $\gf_q$.
\end{proof}

In the following, we construct permutation polynomials $f(x)=xh\left(x^{q-1}\right)$ over $\gf_{q^2}$ from linearized permutation polynomials $L(b)=b+l(b)+l(b)^2$ over $\gf_q$. Let $l(b)=\sum_{i=0}^{t}\alpha_ib^{2^i}$, where $t\le k-1$, $\alpha_i\in\gf_q$ and $\alpha_t\neq0$. Assume that $L(b)=b+l(b)+l(b)^2$ is a permutation polynomial over $\gf_q$, i.e., over $T$ according to Lemma \ref{LpermutesT}. Then $$\psi(a)=l\left(\frac{1}{a}\right)=\frac{\sum_{i=0}^{t}\alpha_ia^{2^t-2^i}}{a^{2^t}}=\frac{h_1(a)}{h_1(a)+h_2(a)}.$$

Therefore, $h_1(a)=\sum_{i=0}^{t}\alpha_ia^{2^t-2^i}$, $h_2(a)=\sum_{i=0}^{t}\alpha_ia^{2^t-2^i}+a^{2^t}$. Moreover, the original
\begin{eqnarray*}
h(x) &=& h_1(a)x+h_2(a) \\
&=& \left(\sum_{i=0}^{t-1}\alpha_ia^{2^t-2^i}\right)x+\alpha_tx+\sum_{i=0}^{t-1}\alpha_ia^{2^t-2^i}+\alpha_t+a^{2^t}\\
&=&\left(x^{\frac{1}{2}}+x^{-\frac{1}{2}}\right)\left(\left(\sum_{i=0}^{t-1}\alpha_i\left(x+x^{-1}\right)^{2^t-2^i-\frac{1}{2}}\right)x+\sum_{i=0}^{t-1}\alpha_i\left(x+x^{-1}\right)^{2^t-2^i-\frac{1}{2}}+\left(x+x^{-1}\right)^{2^t-\frac{1}{2}}+\alpha_tx^{\frac{1}{2}}\right),
\end{eqnarray*}
and we obtain $h(x)=\left(\sum_{i=0}^{t-1}\alpha_i\left(x+x^{-1}\right)^{2^t-2^i-\frac{1}{2}}\right)x+\sum_{i=0}^{t-1}\alpha_i\left(x+x^{-1}\right)^{2^t-2^i-\frac{1}{2}}+\left(x+x^{-1}\right)^{2^t-\frac{1}{2}}+\alpha_tx^{\frac{1}{2}}$ according to Algorithm \ref{alg}. We raise $f(x)$ to its square for ease of presentation. Then $$f\left(x^2\right)=x^2h\left(x^{2(q-1)}\right)=x^2\left(e(x)x^{2(q-1)}+e(x)+\left(x^{q-1}+x^{q^2-q}\right)^{2^{t+1}-1}+\alpha_tx^{q-1}\right),$$  where $e(x)=\sum_{i=0}^{t-1}\alpha_i\left(x^{q-1}+x^{q^2-q}\right)^{2^{t+1}-2^{i+1}-1}.$

\begin{Th}
	\label{linear_maintheorem}
	Let $q=2^k,$ $l(b)=\sum_{i=0}^{t}\alpha_ib^{2^i}$, where $t \leq k-1$, $\alpha_i \in \gf_q$, and $\alpha_t \neq0$.  If  $L(b)=b+l(b)+l(b)^2$ permute $\gf_q$, then $$f(x)=x^2\left(e(x)x^{2(q-1)}+e(x)+\left(x^{q-1}+x^{q^2-q}\right)^{2^{t+1}-1}+\alpha_tx^{q-1}\right),$$  where $e(x)=\sum_{i=0}^{t-1}\alpha_i\left(x^{q-1}+x^{q^2-q}\right)^{2^{t+1}-2^{i+1}-1},$ is a permutation polynomial over $\gf_{q^2}$.
\end{Th}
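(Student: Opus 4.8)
The plan is to view Theorem~\ref{linear_maintheorem} as a direct corollary of the machinery already developed in this section, so that almost no new work is needed beyond recognizing that the hypotheses line up. The chain of reductions is: a linearized permutation $L(b)=b+l(b)+l(b)^2$ over $\gf_q$ feeds into Theorem~\ref{MainTheorem_even_2}, which in turn is the even-characteristic specialization of Corollary~\ref{Maintheorem_r=1} and hence of the Main Theorem~\ref{Maintheorem}. The concrete formula for $f(x)$ in the statement is obtained purely mechanically from $l(b)$ via Algorithm~\ref{alg}, exactly as carried out in the display immediately preceding the theorem. So the real content to verify is that the three conditions of Theorem~\ref{MainTheorem_even_2} all hold once we assume that $L(b)$ permutes $\gf_q$.

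First I would invoke Lemma~\ref{LpermutesT}: since $l(b)$ is linearized, $L(b)=b+l(b)+l(b)^2$ is linearized over $\gf_q$, and the lemma tells us that $L$ permutes $T:=\{b\in\gf_q:\tr(b)=1\}$ if and only if $L$ permutes all of $\gf_q$. Thus the hypothesis ``$L(b)$ permutes $\gf_q$'' is precisely Condition~(3) of Theorem~\ref{MainTheorem_even_2}, with $l(b)=\sum_{i=0}^{t}\alpha_i b^{2^i}$ corresponding to $\psi(a)=l(1/a)=\bigl(\sum_{i=0}^{t}\alpha_i a^{2^t-2^i}\bigr)\big/a^{2^t}=h_1(a)/(h_1(a)+h_2(a))$, so that $h_1(a)=\sum_{i=0}^{t}\alpha_i a^{2^t-2^i}$ and $h_2(a)=h_1(a)+a^{2^t}$. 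With these identifications in place, the only remaining task is to check Conditions~(1) and~(2).

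Condition~(1), namely $h_1(1/b)\neq h_2(1/b)$ for all $b\in T$, is immediate from the above: $h_2(a)-h_1(a)=a^{2^t}$, which is nonzero whenever $a\neq0$, and indeed $a=1/b$ is nonzero for $b\in T$ (note $0\notin T$ since $\tr(0)=0\neq1$). Condition~(2), $h(1)\neq0$, should be read off the explicit $h(x)$ produced by Algorithm~\ref{alg}: after dividing out the factor $(x^{1/2}+x^{-1/2})$ that appears in the original expression, the surviving factor evaluated at $x=1$ reduces to $\alpha_t$, which is nonzero by hypothesis. Once all three conditions of Theorem~\ref{MainTheorem_even_2} are confirmed, the conclusion that $f(x)$ permutes $\gf_{q^2}$ follows, and since permutation behavior is preserved under the substitution $x\mapsto x^2$ (because $\gcd(2,q^2-1)=1$ in even characteristic), passing from $f(x^2)$ back to $f(x)$ is harmless and the displayed polynomial is genuinely a permutation.

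I do not expect a serious obstacle here, since the theorem is essentially a packaging of the preceding derivation. The one place demanding a little care is the bookkeeping in Algorithm~\ref{alg}: one must confirm that the original $h(x)$ indeed factors as $(x^{1/2}+x^{-1/2})$ times the claimed $h(x)$, so that the division in the algorithm is valid and Condition~(2) is met with the correct residual value $\alpha_t$. This is the kind of fractional-power manipulation that already appears in Lemmas~\ref{s3} and~\ref{s4}, so the method is routine, but the half-integer exponents make it the easiest step to get wrong. Everything else is a verbatim application of results proved earlier in the excerpt.
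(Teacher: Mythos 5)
Your proposal is correct and follows essentially the same route as the paper, which states Theorem \ref{linear_maintheorem} as an immediate consequence of the derivation preceding it (the computation of $h_1,h_2,h$ from $l(b)$ via Algorithm \ref{alg}) together with Lemma \ref{LpermutesT} and Theorem \ref{MainTheorem_even_2}. The checks you make explicit --- $h_2(a)-h_1(a)=a^{2^t}\neq 0$ on $S$ for Condition (1), $h(1)=\alpha_t\neq 0$ for Condition (2), and the harmlessness of the relabelling $f(x^2)\rightsquigarrow f(x)$ since squaring is a bijection of $\gf_{q^2}$ --- are exactly the steps the paper leaves implicit.
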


 In the following, we obtain some explicit permutation polynomials over $\gf_{q^2}$ from specific permutations over $T$. The corresponding fractional polynomials have complicated formats and  high-degree. Hence, it is normally  difficult to obtain them by the fractional approach. However, we can prove them using this approach easily.

{\bfseries \emph{B.1  $L(b)$ is a linearized monomial}}  

Let $l(b)=\sum_{i=0}^{t}\alpha_ib^{2^i}$, where $t\le k-1$. In this  subsection, we classify all cases where $L(b)=b+l(b)+l(b)^2$ is a monomial.

\begin{Lemma}
	\label{L(b)_is_mon}
	Let $q=2^k$ and $l(b)=\sum_{i=0}^{t}\alpha_ib^{2^i}$ over $\gf_{q}$, where $t\le k-1$. Then $L(b)=b+l(b)+l(b)^2$ is a monomial if and only if $l(b)$ is one of the following cases:
	\begin{enumerate}[(i)]
		\item $t=k-1$ and $\left(\alpha_0,\alpha_1,\cdots,\alpha_{k-1}\right)=\left(\alpha_0,\alpha_0^2,\cdots,\alpha_{0}^{2^{k-1}}\right)$, where $\alpha_{0}\in\gf_q^{*}$;
		\item $t=k-1$ and $\left(\alpha_0,\alpha_1,\cdots,\alpha_j,\cdots,\alpha_{k-1}\right)=\left(1+\alpha_j^{2^{k-j}},1+\alpha_j^{2^{k+1-j}},\cdots,\alpha_j,\cdots, \alpha_j^{2^{k-j-1}}\right),$ where $j\in[1,k-1]$ and $\alpha_j\in\gf_q^{*}$;
		\item $t\le k-2$ and $\left(\alpha_0,\alpha_1,\cdots,\alpha_{t}\right)=\left(1,1,\cdots,1\right)$.
	\end{enumerate}
\end{Lemma}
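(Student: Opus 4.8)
The plan is to expand $L(b)=b+l(b)+l(b)^2$ explicitly as a linearized polynomial and read off its coefficients, then determine precisely when all but one of them vanish. Since the characteristic is $2$, squaring is the Frobenius map, so $l(b)^2=\sum_{i=0}^{t}\alpha_i^2 b^{2^{i+1}}$, and therefore
\[
L(b)=b+\sum_{i=0}^{t}\alpha_i b^{2^i}+\sum_{i=0}^{t}\alpha_i^2 b^{2^{i+1}}.
\]
The only fact I need is that $b^{2^k}=b$ on $\gf_q$, so all exponents are reduced modulo $k$ and $L(b)$ is determined by the $k$ coefficients $c_0,\dots,c_{k-1}$ of $b^{2^0},\dots,b^{2^{k-1}}$. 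The lemma then reduces to the combinatorial question: for which $(\alpha_0,\dots,\alpha_t)$ with $\alpha_t\neq0$ does exactly one $c_i$ fail to vanish?

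First I would treat the generic range $t\le k-2$, where no wrap-around occurs. Here $c_{t+1}=\alpha_t^2\neq0$, so the only candidate monomial is $b^{2^{t+1}}$, and I force every other coefficient to vanish: $c_0=1+\alpha_0=0$ and $c_i=\alpha_i+\alpha_{i-1}^2=0$ for $1\le i\le t$. This recurrence immediately yields $\alpha_0=1$ and $\alpha_i=\alpha_{i-1}^2$, hence $\alpha_i=1$ for all $i$, giving case $(iii)$; conversely these choices visibly collapse $L(b)$ to $b^{2^{t+1}}$.

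The substantive case is $t=k-1$, where the top term $\alpha_{k-1}^2 b^{2^k}$ folds back onto $b^{2^0}$. Now $c_0=1+\alpha_0+\alpha_{k-1}^2$ and $c_i=\alpha_i+\alpha_{i-1}^2$ for $1\le i\le k-1$, a cyclic system. If $L(b)=b^{2^j}$ is the monomial, then all $c_i$ with $i\neq j$ vanish, and I would solve the chain $\alpha_i=\alpha_{i-1}^2$ (valid at every index except the break at $j$ and the perturbed wrap at $0$) by propagating forward from the free parameter $\alpha_j$. Going up from $j$ gives $\alpha_{j+m}=\alpha_j^{2^m}$; the wrap relation $\alpha_0=1+\alpha_{k-1}^2$ then introduces the constant $1$; and continuing upward gives $\alpha_m=1+\alpha_j^{2^{k-j+m}}$ for $0\le m\le j-1$. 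For $j=0$ the break and the wrap coincide, so the chain $\alpha_i=\alpha_{i-1}^2$ holds throughout and gives $\alpha_i=\alpha_0^{2^i}$, which is case $(i)$; for $j\ge1$ the perturbed wrap propagates the $1$ into $\alpha_0,\dots,\alpha_{j-1}$, which is case $(ii)$.

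The step I expect to require the most care is confirming that the surviving coefficient is genuinely nonzero---indeed equal to $1$---so that $L$ really is a monomial rather than the zero polynomial. In both $t=k-1$ subcases a short Frobenius computation using $\alpha_j^{2^k}=\alpha_j$ settles this: $c_0=1+\alpha_0+\alpha_0^{2^k}=1$ in case $(i)$, and $c_j=\alpha_j+\bigl(1+\alpha_j^{2^{k-1}}\bigr)^2=\alpha_j+1+\alpha_j^{2^k}=1$ in case $(ii)$. Finally I would translate the hypothesis $\alpha_t\neq0$: since $\alpha_{k-1}=\alpha_0^{2^{k-1}}$ in $(i)$ and $\alpha_{k-1}=\alpha_j^{2^{k-j-1}}$ in $(ii)$, this is equivalent to $\alpha_0\in\gf_q^{*}$, respectively $\alpha_j\in\gf_q^{*}$, matching the stated ranges. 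The main hazard is purely bookkeeping---keeping the cyclic indices and the single $+1$ perturbation consistent---so once the forward propagation from $\alpha_j$ is set up carefully, each case is routine.
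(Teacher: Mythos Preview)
Your approach is essentially the same as the paper's: expand $L(b)$ as a linearized polynomial, split into the cases $t\le k-2$ (no wrap-around) and $t=k-1$ (wrap-around), and solve the resulting coefficient recurrences $\alpha_i=\alpha_{i-1}^2$ together with the perturbed constant term. If anything, you are slightly more careful than the paper in explicitly verifying that the surviving coefficient equals $1$ in cases~(i) and~(ii) and in translating the hypothesis $\alpha_t\neq0$ into the stated conditions on $\alpha_0$ and $\alpha_j$.
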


\begin{proof}
	From $l(b)=\sum_{i=0}^{t}\alpha_ib^{2^i}$, we have 
	\begin{eqnarray*}
	L(b) &=& b+\sum_{i=0}^{t}\alpha_ib^{2^i}+\sum_{i=0}^{t}\alpha_i^2b^{2^{i+1}}  \\
	&=& \left(1+\alpha_0\right)b+\sum_{i=1}^{t}\left(\alpha_i+\alpha_{i-1}^2\right)b^{2^i}+\alpha_t^2b^{2^{t+1}}.
	\end{eqnarray*}
We divide the proof into two cases. 

\emph{(1)} If $t={k-1}$, then $b^{2^{t+1}}=b$ and $L(b)=\left(1+\alpha_0+\alpha_t^2\right)b+\sum_{i=1}^{k-1}\left(\alpha_i+\alpha_{i-1}^2\right)b^{2^i}.$ Since $L(b)$ is a monomial, we have (i) $1+\alpha_0+\alpha_{k-1}^2\neq0$ and $\alpha_i+\alpha_{i-1}^2=0$ for all $i\in [1,k-1]$; or (ii) there exists one $j\in [1,k-1]$ such that $\alpha_j+\alpha_{j-1}^2\neq0$, $1+\alpha_0+\alpha_t^2=0$ and $\alpha_i+\alpha_{i-1}^2=0$ for all $i\in [1,k-1]\backslash\{j\}$. As for the case (i), we have $\alpha_i=\alpha_{i-1}^2$ for all $i\in [1,k-1]$. Moreover, $\alpha_i=\alpha_0^{2^i}$ for all $i\in [1,k-1]$.  And $1+\alpha_0+\alpha_{k-1}^2=1\neq0$. Therefore, in this case, we have $$\left(\alpha_0,\alpha_1,\cdots,\alpha_{k-1}\right)=\left(\alpha_0,\alpha_0^2,\cdots,\alpha_{0}^{2^{k-1}}\right),$$
where $\alpha_0\in\gf_q^{*}$. As for the case (ii), firstly, we have $\alpha_i=\alpha_j^{2^{i-j}}$ for $i\in [j,k-1]$. On the other hand, $\alpha_{0}=1+\alpha_{k-1}^2=1+\alpha_j^{2^{k-j}}$. Furthermore, $\alpha_i=\alpha_{0}^{2^{i}}=1+\alpha_j^{2^{k+i-j}}$ for any $i\in[1,j-1]$. Therefore, in the case, we have 
$$\left(\alpha_0,\alpha_1,\cdots,\alpha_j,\cdots,\alpha_{k-1}\right)=\left(1+\alpha_j^{2^{k-j}},1+\alpha_j^{2^{k+1-j}},\cdots,\alpha_j,\cdots, \alpha_j^{2^{k-j-1}}\right),$$
where $j\in[1,k-1]$ and $\alpha_j\in\gf_q^{*}$.

\emph{(2)} If $t\le k-2$, then $1+\alpha_{0}=0$ and $\alpha_i=\alpha_{i-1}^2$ for all $i\in [1,t]$. Therefore, 
$$\left(\alpha_0,\alpha_1,\cdots,\alpha_{t}\right)=\left(1,1,\cdots,1\right).$$
\end{proof}

From  Lemma \ref{L(b)_is_mon}, we can construct some explicit classes of permutation polynomials over $\gf_{q^2}$.

\begin{Th}
	Let $q=2^k$. Then $$f(x)=x^2\left(e(x)x^{2(q-1)}+e(x)+\left(x^{q-1}+x^{q^2-q}\right)^{2^{t+1}-1}+\alpha_tx^{q-1}\right)\in\gf_{q}[x],$$  where $e(x)=\sum_{i=0}^{t-1}\alpha_i\left(x^{q-1}+x^{q^2-q}\right)^{2^{t+1}-2^{i+1}-1},$ is a permutation polynomial over $\gf_{q^2}$ if one of the following conditions holds:
		\begin{enumerate}[(i)]
		\item $t=k-1$ and $\left(\alpha_0,\alpha_1,\cdots,\alpha_{k-1}\right)=\left(\alpha_0,\alpha_0^2,\cdots,\alpha_{0}^{2^{k-1}}\right)$, where $\alpha_{0}\in\gf_q^{*}$;
		\item $t=k-1$ and $\left(\alpha_0,\alpha_1,\cdots,\alpha_j,\cdots,\alpha_{k-1}\right)=\left(1+\alpha_j^{2^{k-j}},1+\alpha_j^{2^{k+1-j}},\cdots,\alpha_j,\cdots, \alpha_j^{2^{k-j-1}}\right),$ where $j\in[1,k-1]$ and $\alpha_j\in\gf_q^{*}$;
		\item $t\le k-2$ and $\left(\alpha_0,\alpha_1,\cdots,\alpha_{t}\right)=\left(1,1,\cdots,1\right)$.
	\end{enumerate}
\end{Th}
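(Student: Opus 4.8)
The plan is to invoke the two results immediately preceding this statement and glue them together, since almost all of the work has already been done. Theorem \ref{linear_maintheorem} does the heavy lifting: it guarantees that the displayed polynomial $f(x)$ permutes $\gf_{q^2}$ whenever the associated linearized map $L(b)=b+l(b)+l(b)^2$ permutes $\gf_q$, where $l(b)=\sum_{i=0}^{t}\alpha_i b^{2^i}$. Thus the entire task reduces to checking that, under each of the three coefficient patterns (i), (ii), (iii), the map $L(b)$ is a permutation of $\gf_q$. I would state this reduction first and make explicit that the complicated shapes of $f(x)$ and $e(x)$ need never be touched again.

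Next I would observe that the three hypotheses listed here are verbatim the three cases of Lemma \ref{L(b)_is_mon}, whose content is exactly that $L(b)$ is a monomial precisely under these conditions. So under any of (i)--(iii) we already know $L(b)$ is a linearized monomial $c\,b^{2^j}$. The only remaining point is to verify that the coefficient $c$ is nonzero, so that $L(b)$ is a genuine nonzero monomial and not the zero map; this can be read straight off the coefficient bookkeeping inside the proof of Lemma \ref{L(b)_is_mon}. Concretely, in case (i) one finds $L(b)=b$ (using $\alpha_{k-1}^2=\alpha_0^{2^k}=\alpha_0$, so the constant coefficient collapses to $1$); in case (ii) one finds $L(b)=(\alpha_j+\alpha_{j-1}^2)\,b^{2^j}$ with $\alpha_j+\alpha_{j-1}^2\neq 0$ by construction; and in case (iii) one finds $L(b)=b^{2^{t+1}}$.

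Finally I would close with the elementary remark that any nonzero linearized monomial $c\,b^{2^j}$ over $\gf_{2^k}$ is automatically a permutation of $\gf_q$: the map $b\mapsto b^{2^j}$ is the $j$-fold Frobenius, hence a field automorphism and in particular a bijection, while scaling by $c\neq 0$ is also a bijection, so their composition permutes $\gf_q$. In each of the three cases the exponent satisfies $j<k$, so the Frobenius interpretation is legitimate. Combining this with Theorem \ref{linear_maintheorem} then yields the conclusion in all three cases simultaneously.

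Since the proof merely assembles two already-established facts, I do not expect any substantive obstacle. The single point deserving a sentence of care is confirming that the monomial produced by Lemma \ref{L(b)_is_mon} has a nonzero coefficient in every case — otherwise ``monomial'' would not imply ``permutation'' — but this is immediate from the computation already carried out in the lemma's proof, so no new argument is required.
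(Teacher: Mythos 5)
Your proposal is correct and follows essentially the same route as the paper: reduce via Theorem \ref{linear_maintheorem} to showing $L(b)$ permutes $\gf_q$, then note that Lemma \ref{L(b)_is_mon} makes $L(b)$ a (nonzero) linearized monomial in each case, hence a permutation. You are in fact slightly more careful than the paper, which dismisses the nonvanishing of the monomial's coefficient as trivial, whereas you verify it explicitly in all three cases.
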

\begin{proof}
 Let $l(b)=\sum_{i=0}^{t}\alpha_ib^{2^i}$. Then from Theorem \ref{L(b)_is_mon}, we know that $L(b)=b+l(b)+l(b)^2$ are all monomials under these conditions. Therefore, it is trivial that $L(b)$ permutes $\gf_q$. Furthermore, according to Theorem \ref{linear_maintheorem}, we know that $f(x)$ permutes $\gf_{q^2}$.
\end{proof}

{\bfseries \emph{B.2 $L(b)$ is a linearized binomial}}

Let  $l(b)=\sum_{i=0}^{t}\alpha_ib^{2^i}$.  We recall that 
	$$L(b)=\left(1+\alpha_0\right)b+\sum_{i=1}^{t}\left(\alpha_i+\alpha_{i-1}^2\right)b^{2^i}+\alpha_t^2b^{2^{t+1}}.$$
	In this subsection, we classify all cases where $L(b)$ is a linearized binomial.
	
\begin{Lemma}
	\label{binomial_case}
	Let $q=2^k$, $l(b)=\sum_{i=0}^{t}\alpha_ib^{2^i},$ where $\alpha_i\in\gf_{q}$ and $\alpha_t\neq0$. Then $L(b)=b+l(b)+l(b)^2$ is a binomial if and only if one of the following cases hold:
	\begin{enumerate}[(i)]
		\item  $t=k-1$ and $\left(\alpha_{0},\cdots,\alpha_{j-1},\alpha_j,\alpha_{j+1},\cdots,\alpha_{k-1}\right)=\left(\alpha_{0},\cdots,\alpha_{0}^{2^{j-1}},\alpha_j,\alpha_j^2,\cdots,\alpha_j^{2^{k-j-1}}\right),$ where $1\le j\le k-1$, $\alpha_{j}\in\gf_q^{*},\alpha_{0}\in\gf_{q}$ satisfy that $\alpha_j\neq\alpha_{0}^{2^j}$, $\alpha_j^{2^{k-j}}+\alpha_{0}+1\neq0$;
		\item $t=k-1$ and 
			\begin{eqnarray*}
			& & \left(\alpha_0,\cdots,\alpha_{j_1-1},\alpha_{j_1},\cdots,\alpha_{j_2-1},\alpha_{j_2},\cdots,\alpha_{k-1}\right)\\
			&=& \left(1+\alpha_{j_2}^{2^{k-j_2}},\cdots,1+\alpha_{j_2}^{2^{k-j_2+j_1-1}},\alpha_{j_1},\cdots,\alpha_{j_1}^{2^{j_2-j_1-1}},\alpha_{j_2},\cdots,\alpha_{j_2}^{2^{k-j_2-1}}\right)
		\end{eqnarray*}
		where  $\alpha_{j_1}+\alpha_{j_2}^{2^{k-j_2+j_1}}+1\neq0$, $\alpha_{j_2}\neq\alpha_{j_1}^{2^{j_2-j_1}}$  and $1\le j_1< j_2\le k-1$.
		\item $t\le k-2$ and $\left(\alpha_0,\alpha_1,\cdots,\alpha_{t}\right)=\left(\alpha_0,\alpha_0^2,\cdots,\alpha_{0}^{2^{t}}\right),$
		where $\alpha_0\in\gf_q\backslash\{0,1\};$
		\item $t\le k-2$ and  $\left(\alpha_{0},\alpha_1,\cdots,\alpha_{j-1},\alpha_j,\cdots,\alpha_t\right)=\left(1,1,\cdots,1,\alpha_j,\cdots,\alpha_j^{2^{t-j}}\right),$
		where $1\le j\le t$ and $\alpha_j\in\gf_q\backslash\{0,1\}$.
	\end{enumerate}
\end{Lemma}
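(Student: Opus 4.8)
The plan is to read off the linearized coefficients of $L$ and then ask, for each choice of \emph{which} two of them are nonzero, what this forces on the tuple $(\alpha_0,\dots,\alpha_t)$. Starting from the expansion already recorded,
$$L(b)=\left(1+\alpha_0\right)b+\sum_{i=1}^{t}\left(\alpha_i+\alpha_{i-1}^2\right)b^{2^i}+\alpha_t^2b^{2^{t+1}},$$
I would set $c_0:=1+\alpha_0$, $c_i:=\alpha_i+\alpha_{i-1}^2$ for $1\le i\le t$, and $c_{t+1}:=\alpha_t^2$, so that $L$ is a binomial precisely when exactly two of its coefficients are nonzero. The whole argument is then an equivalence by construction (it determines all tuples giving a binomial), so the ``if'' and ``only if'' directions are handled together. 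The key structural fact I will use repeatedly is that whenever a run of consecutive coefficients $c_i$ vanishes, the relation $\alpha_i=\alpha_{i-1}^2$ forces $\alpha_i$ to be an iterated square of the first entry of that run; thus the tuple is piecewise determined on the segments between the two nonzero indices.

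The first main split is on whether $t\le k-2$ or $t=k-1$, because only in the latter does the top term wrap around ($b^{2^{t+1}}=b^{2^k}=b$). For $t\le k-2$ the exponent $2^{t+1}$ is genuinely distinct from $2^0,\dots,2^t$ and $c_{t+1}=\alpha_t^2\ne0$ by hypothesis; hence $L$ is a binomial iff exactly one of $c_0,\dots,c_t$ is nonzero. If that index is $0$, then $\alpha_0\ne1$ and $c_i=0$ for $1\le i\le t$ yields $\alpha_i=\alpha_0^{2^i}$, with $\alpha_t\ne0$ forcing $\alpha_0\notin\{0,1\}$; this is case (iii). If the nonzero index is some $j\ge1$, then $c_0=0$ gives $\alpha_0=1$, the vanishing of $c_1,\dots,c_{j-1}$ propagates $\alpha_0=\dots=\alpha_{j-1}=1$, the single nonzero $c_j$ allows $\alpha_j\ne\alpha_{j-1}^2=1$, and the vanishing of $c_{j+1},\dots,c_t$ gives $\alpha_i=\alpha_j^{2^{i-j}}$; again $\alpha_t\ne0$ forces $\alpha_j\notin\{0,1\}$, which is case (iv).

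For $t=k-1$ the coefficient list collapses to $c_0=1+\alpha_0+\alpha_{k-1}^2$ and $c_i=\alpha_i+\alpha_{i-1}^2$ for $1\le i\le k-1$, and $L$ is a binomial iff exactly two of $c_0,\dots,c_{k-1}$ are nonzero. I would separate according to whether $0$ is one of the two nonzero indices. If the nonzero indices are $0$ and some $j\ge1$, then solving $c_i=0$ on $[1,j-1]$ and on $[j+1,k-1]$ gives $\alpha_i=\alpha_0^{2^i}$ and $\alpha_i=\alpha_j^{2^{i-j}}$ respectively; the requirement $c_j\ne0$ reads $\alpha_j\ne\alpha_0^{2^j}$, while $c_0\ne0$, after substituting $\alpha_{k-1}^2=\alpha_j^{2^{k-j}}$, reads $\alpha_j^{2^{k-j}}+\alpha_0+1\ne0$, and $\alpha_{k-1}=\alpha_j^{2^{k-1-j}}\ne0$ forces $\alpha_j\in\gf_q^{*}$; this is case (i). If instead the nonzero indices are $j_1<j_2$ with $1\le j_1$, then $c_0=0$ gives $\alpha_0=1+\alpha_{k-1}^2=1+\alpha_{j_2}^{2^{k-j_2}}$, and squaring propagates $\alpha_i=1+\alpha_{j_2}^{2^{k-j_2+i}}$ on $[0,j_1-1]$, $\alpha_i=\alpha_{j_1}^{2^{i-j_1}}$ on $[j_1,j_2-1]$, and $\alpha_i=\alpha_{j_2}^{2^{i-j_2}}$ on $[j_2,k-1]$; translating $c_{j_1}\ne0$ and $c_{j_2}\ne0$ produces the inequalities $\alpha_{j_1}+\alpha_{j_2}^{2^{k-j_2+j_1}}+1\ne0$ and $\alpha_{j_2}\ne\alpha_{j_1}^{2^{j_2-j_1}}$, which is exactly case (ii).

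I expect the only delicate point to be the wrap-around bookkeeping in the $t=k-1$ case: one must correctly fold the top coefficient $\alpha_t^2=\alpha_{k-1}^2$ into $c_0$ and then re-express $\alpha_{k-1}$ in terms of the chosen free parameter ($\alpha_j$ in case (i), $\alpha_{j_2}$ in case (ii)) via the segment recursion before the inequalities take their stated form. The remaining verifications---that each displayed tuple indeed yields exactly two nonzero $c_i$, and that the standing hypothesis $\alpha_t\ne0$ is equivalent to the nonvanishing of the relevant free parameter---are routine consequences of the same recursion.
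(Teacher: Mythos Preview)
Your proposal is correct and follows essentially the same approach as the paper: both expand $L(b)$ into its linearized coefficients, split on $t=k-1$ versus $t\le k-2$ to handle the wrap-around, and then case-split on which two coefficients survive, using the recursion $\alpha_i=\alpha_{i-1}^2$ on the vanishing segments to determine the tuple. Your write-up is slightly more explicit than the paper's in tracking why $\alpha_t\ne0$ translates into the nonvanishing of the free parameter, but the argument is the same.
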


\begin{proof}
	\emph{(1)} If $t=k-1$, then $$L(b)=\left(1+\alpha_{0}+\alpha_t^2\right)b+\sum_{i=1}^{t}\left(\alpha_i+\alpha_{i-1}^2\right)b^{2^i}.$$
	Since $L(b)$ is a binomial, we have two cases: (i) $1+\alpha_{0}+\alpha_t^2\neq0$, there exists one $j\in[1,k-1]$ such that $\alpha_j+\alpha_{j-1}^2\neq0$ and $\alpha_i+\alpha_{i-1}^2=0$ for all $i\in [1,k-1]\backslash\{j\}$; (ii) $1+\alpha_{0}+\alpha_t^2=0$, there exist two $j_1,j_2\in[1,k-1]$ such that $\alpha_{j_1}+\alpha_{j_{1}-1}^2\neq0$, $\alpha_{j_2}+\alpha_{j_{2}-1}^2\neq0$ and $\alpha_i+\alpha_{i-1}^2=0$ for all $i\in [1,k-1]\backslash\{j_1,j_2\}$. As for the first case (i), we have $$\left(\alpha_{0},\cdots,\alpha_{j-1},\alpha_j,\alpha_{j+1},\cdots,\alpha_{k-1}\right)=\left(\alpha_{0},\cdots,\alpha_{0}^{2^{j-1}},\alpha_j,\alpha_j^2,\cdots,\alpha_j^{2^{k-j-1}}\right),$$
	where $\alpha_{j}\in\gf_q^{*},\alpha_{0}\in\gf_{q}$ satisfy that $\alpha_j\neq\alpha_{0}^{2^j}$, $\alpha_j^{2^{k-j}}+\alpha_{0}+1\neq0$. As for the second case (ii), we have 
	\begin{eqnarray*}
	& & \left(\alpha_0,\cdots,\alpha_{j_1-1},\alpha_{j_1},\cdots,\alpha_{j_2-1},\alpha_{j_2},\cdots,\alpha_{k-1}\right)\\
	&=& \left(1+\alpha_{j_2}^{2^{k-j_2}},\cdots,1+\alpha_{j_2}^{2^{k-j_2+j_1-1}},\alpha_{j_1},\cdots,\alpha_{j_1}^{2^{j_2-j_1-1}},\alpha_{j_2},\cdots,\alpha_{j_2}^{2^{k-j_2-1}}\right)
	\end{eqnarray*}
    where  $\alpha_{j_1}+\alpha_{j_2}^{2^{k-j_2+j_1}}+1\neq0$, $\alpha_{j_2}\neq\alpha_{j_1}^{2^{j_2-j_1}}$.
    
    \emph{(2)} If $t\le k-2$,  then
$$L(b)=\left(1+\alpha_0\right)b+\sum_{i=1}^{t}\left(\alpha_i+\alpha_{i-1}^2\right)b^{2^i}+\alpha_t^2b^{2^{t+1}}.$$
There  are also  two cases: (i) $1+\alpha_{0}\neq0$ and $\alpha_i+\alpha_{i-1}^2=0$ for all $i\in [1,t]$;  (ii) $1+\alpha_{0}=0$, there exists one $j\in[1,t]$ such that $\alpha_j+\alpha_{j-1}^2\neq0$ and $\alpha_i+\alpha_{i-1}^2=0$ for all $i\in [1,t]\backslash\{j\}$. 
    As for the first case, we can obtain 
    $$\left(\alpha_0,\alpha_1,\cdots,\alpha_{t}\right)=\left(\alpha_0,\alpha_0^2,\cdots,\alpha_{0}^{2^{t}}\right),$$
    where $\alpha_0\in\gf_q\backslash\{0,1\}$. As for the second case, we have 
    $$\left(\alpha_{0},\alpha_1,\cdots,\alpha_{j-1},\alpha_j,\cdots,\alpha_t\right)=\left(1,1,\cdots,1,\alpha_j,\cdots,\alpha_j^{2^{t-j}}\right),$$
    where $\alpha_j\in\gf_q\backslash\{0,1\}$.
\end{proof}

The following result is well known.

\begin{Lemma}
	\cite{WB}
	\label{Linearized_binomial}
	Let $c\in\gf_{2^k}^{*}$ and let $d:=(k,r)$. Then $L_{c,r}(x)=x^{2^r}+cx\in\gf_{2^k}[x]$ permutes $\gf_{2^k}$ if and only if $\mathrm{N}_{2^k/2^d}(c)\neq1$ , where  $\mathrm{N}_{2^k/2^d}(x)=x^{\frac{2^k-1}{2^d-1}}$ is the norm function from $\gf_{2^k}$ to $\gf_{2^d}$.
\end{Lemma}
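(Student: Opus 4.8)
The plan is to exploit that $L_{c,r}(x)=x^{2^r}+cx$ is an $\gf_2$-linear (additive) map on $\gf_{2^k}$, so it permutes $\gf_{2^k}$ if and only if it is injective, which is equivalent to having a trivial kernel. Thus the entire question reduces to counting the solutions of $L_{c,r}(x)=0$ in $\gf_{2^k}$.

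First I would write $L_{c,r}(x)=0$ as $x^{2^r}+cx=0$ and factor out $x$, obtaining $x\bigl(x^{2^r-1}+c\bigr)=0$ (recall $c\neq 0$). Hence $x=0$ is always a root, and $L_{c,r}$ fails to be a permutation precisely when $x^{2^r-1}=c$ admits a (necessarily nonzero) solution in $\gf_{2^k}^{*}$, since in characteristic two $-c=c$. So it remains to decide the solvability of $x^{2^r-1}=c$ in the cyclic group $\gf_{2^k}^{*}$ of order $2^k-1$.

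Next I would invoke the standard criterion for taking roots in a cyclic group: in a cyclic group of order $N$, the equation $x^m=c$ is solvable if and only if $c^{N/\gcd(m,N)}=1$. Applying this with $N=2^k-1$ and $m=2^r-1$, together with the classical identity $\gcd\bigl(2^r-1,2^k-1\bigr)=2^{\gcd(r,k)}-1=2^d-1$, shows that $x^{2^r-1}=c$ is solvable if and only if $c^{(2^k-1)/(2^d-1)}=1$, i.e. $\mathrm{N}_{2^k/2^d}(c)=1$. Combining this with the previous step, $L_{c,r}$ permutes $\gf_{2^k}$ if and only if $x^{2^r-1}=c$ has no solution, equivalently $\mathrm{N}_{2^k/2^d}(c)\neq 1$, which is the claim.

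The only genuinely nontrivial ingredient is the number-theoretic identity $\gcd\bigl(2^r-1,2^k-1\bigr)=2^{\gcd(r,k)}-1$; everything else is routine. I would establish it by running the Euclidean algorithm on the exponents: since the multiplicative order of $2$ modulo $2^e-1$ is exactly $e$, each reduction step $\gcd\bigl(2^r-1,2^k-1\bigr)\mapsto\gcd\bigl(2^{r-k}-1,2^k-1\bigr)$ (for $r>k$) mirrors the corresponding step of $\gcd(r,k)$, and the recursion terminates at $2^{\gcd(r,k)}-1$. Once this identity is in hand, recognizing $c^{(2^k-1)/(2^d-1)}$ as the relative norm $\mathrm{N}_{2^k/2^d}(c)$ is immediate from the definition in the statement, and the proof is complete.
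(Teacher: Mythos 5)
Your proof is correct. The paper does not actually prove this lemma --- it is quoted as a known result from the cited reference [WB] --- so there is no in-paper argument to compare against; your reduction (additivity $\Rightarrow$ permutation iff trivial kernel, factor as $x\left(x^{2^r-1}+c\right)$, solvability of $x^{2^r-1}=c$ in the cyclic group $\gf_{2^k}^{*}$ via $\gcd\left(2^r-1,2^k-1\right)=2^{d}-1$, and identification of $c^{(2^k-1)/(2^d-1)}$ with the relative norm) is the standard and complete proof of this statement.
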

\begin{Th}
	\label{Lin_bin_maintheorem}
	Let $q=2^k$. Then $$f(x)=x^2\left(e(x)x^{2(q-1)}+e(x)+\left(x^{q-1}+x^{q^2-q}\right)^{2^{t+1}-1}+\alpha_tx^{q-1}\right)\in\gf_{q}[x],$$  where $e(x)=\sum_{i=0}^{t-1}\alpha_i\left(x^{q-1}+x^{q^2-q}\right)^{2^{t+1}-2^{i+1}-1},$ is a permutation polynomial over $\gf_{q^2}$ if one of the following conditions holds:
	\begin{enumerate}[(i)]
	\item  $t=k-1$ and $\left(\alpha_{0},\cdots,\alpha_{j-1},\alpha_j,\alpha_{j+1},\cdots,\alpha_{k-1}\right)=\left(\alpha_{0},\cdots,\alpha_{0}^{2^{j-1}},\alpha_j,\alpha_j^2,\cdots,\alpha_j^{2^{k-j-1}}\right),$ where $1\le j\le k-1$,  $d=(k,j)>1$, $\alpha_{j}\in\gf_q^{*},\alpha_{0}\in\gf_{q}$ satisfy that $\alpha_j\neq\alpha_{0}^{2^j}$, $\alpha_j^{2^{k-j}}+\alpha_{0}+1\neq0$ and $\mathrm{N}_{2^k/2^d}\left(\frac{\alpha_j^{2^{k-j}}+\alpha_{0}+1}{\alpha_j+\alpha_{0}^{2^j}}\right)\neq1$;
	\item $t=k-1$ and 
	\begin{eqnarray*}
		& & \left(\alpha_0,\cdots,\alpha_{j_1-1},\alpha_{j_1},\cdots,\alpha_{j_2-1},\alpha_{j_2},\cdots,\alpha_{k-1}\right)\\
		&=& \left(1+\alpha_{j_2}^{2^{k-j_2}},\cdots,1+\alpha_{j_2}^{2^{k-j_2+j_1-1}},\alpha_{j_1},\cdots,\alpha_{j_1}^{2^{j_2-j_1-1}},\alpha_{j_2},\cdots,\alpha_{j_2}^{2^{k-j_2-1}}\right)
	\end{eqnarray*}
	where  $\alpha_{j_1}+\alpha_{j_2}^{2^{k-j_2+j_1}}+1\neq0$, $\alpha_{j_2}\neq\alpha_{j_1}^{2^{j_2-j_1}}$  and $1\le j_1< j_2\le k-1$,  $d=\left(k-j_2+j_1,k\right)>1$ and $\mathrm{N}_{2^k/2^d}\left(\frac{\alpha_{j_2}+\alpha_{j_1}^{2^{j_2-j_1}}}{\alpha_{j_1}+\alpha_{j_2}^{2^{k-j_2+j_1}}+1}\right)\neq1$;
	\item $t\le k-2$, $d=(t+1,k)>1$ and $\left(\alpha_0,\alpha_1,\cdots,\alpha_{t}\right)=\left(\alpha_0,\alpha_0^2,\cdots,\alpha_{0}^{2^{t}}\right),$
	where $\alpha_0\in\gf_q\backslash\{0,1\}$, $\mathrm{N}_{2^k/2^d}\left(\frac{1+\alpha_{0}}{\alpha_{0}^{2^{t+1}}}\right)\neq1$;
	\item $t\le k-2$ and  $\left(\alpha_{0},\alpha_1,\cdots,\alpha_{j-1},\alpha_j,\cdots,\alpha_t\right)=\left(1,1,\cdots,1,\alpha_j,\cdots,\alpha_j^{2^{t-j}}\right),$
	where $1\le j\le t$, $d=(k+j-t-1,k)>1$ and $\alpha_j\in\gf_q\backslash\{0,1\}$ and $\mathrm{N}_{2^k/2^d}\left(\frac{\alpha_{j}^{2^{t-j+1}}}{\alpha_{j}+1}\right)\neq1$.
\end{enumerate}
\end{Th}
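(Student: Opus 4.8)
The plan is to combine the classification in Lemma \ref{binomial_case} with the permutation criterion for linearized binomials in Lemma \ref{Linearized_binomial}, and then invoke Theorem \ref{linear_maintheorem}. The key observation is that each of the four conditions (i)--(iv) is verbatim one of the four cases of Lemma \ref{binomial_case}, so that under any of them $L(b)=b+l(b)+l(b)^2$ is a linearized binomial over $\gf_q$. The strategy is therefore threefold: (a) write down the explicit binomial form of $L(b)$ in each case by substituting the prescribed coefficient tuples into the general expression $L(b)=(1+\alpha_0)b+\sum_{i=1}^{t}(\alpha_i+\alpha_{i-1}^2)b^{2^i}+\alpha_t^2 b^{2^{t+1}}$; (b) reduce $L(b)$ to the normalized shape $x^{2^r}+cx$ and apply Lemma \ref{Linearized_binomial}; and (c) deduce from Lemma \ref{LpermutesT} and Theorem \ref{linear_maintheorem} that the displayed $f(x)$ permutes $\gf_{q^2}$.

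For step (a) I would treat the four cases separately. In case (i), since $t=k-1$ forces $b^{2^{t+1}}=b$, only the index $i=j$ survives among the middle terms, and a direct substitution using $\alpha_{j-1}=\alpha_0^{2^{j-1}}$ and $\alpha_{k-1}=\alpha_j^{2^{k-j-1}}$ yields $L(b)=\left(\alpha_j+\alpha_0^{2^j}\right)b^{2^j}+\left(\alpha_j^{2^{k-j}}+\alpha_0+1\right)b$, whose two coefficients are nonzero exactly by the stated hypotheses. Case (iii) is analogous and gives $L(b)=\alpha_0^{2^{t+1}}b^{2^{t+1}}+(1+\alpha_0)b$. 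Cases (ii) and (iv) differ because the vanishing of the linear coefficient (namely $1+\alpha_0+\alpha_t^2=0$, resp.\ $1+\alpha_0=0$) removes the $b^1$ term, leaving $L(b)=\left(\alpha_{j_2}+\alpha_{j_1}^{2^{j_2-j_1}}\right)b^{2^{j_2}}+\left(\alpha_{j_1}+\alpha_{j_2}^{2^{k-j_2+j_1}}+1\right)b^{2^{j_1}}$ in case (ii), and $L(b)=\alpha_j^{2^{t-j+1}}b^{2^{t+1}}+(\alpha_j+1)b^{2^j}$ in case (iv).

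For step (b), cases (i) and (iii) are already of the form $c_1 b^{2^r}+c_0 b$; after dividing by $c_1$ one reads off $r$ and $c=c_0/c_1$, with $d=(k,r)$. In cases (ii) and (iv) there is no $b^1$ term, so I would first precompose with the Frobenius permutation $b\mapsto b^{2^{j_1}}$ (resp.\ $b\mapsto b^{2^{j}}$) of $\gf_q$; since this is a bijection, $L$ permutes $\gf_q$ if and only if the resulting genuine binomial $c_2 y^{2^{j_2-j_1}}+c_1 y$ (resp.\ $\alpha_j^{2^{t-j+1}}y^{2^{t+1-j}}+(\alpha_j+1)y$) does, and Lemma \ref{Linearized_binomial} then applies with $r=j_2-j_1$ (resp.\ $r=t+1-j$). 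Two small bookkeeping points make the stated norm inequalities coincide with the criterion: first, $(k-j_2+j_1,k)=(j_2-j_1,k)$ and $(k+j-t-1,k)=(t+1-j,k)$ because $\gcd(k-m,k)=\gcd(m,k)$; second, the normalization may replace $c$ by $1/c$, but this is harmless since $\mathrm{N}_{2^k/2^d}(c)\neq1$ if and only if $\mathrm{N}_{2^k/2^d}(1/c)\neq1$.

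Finally, step (c) is immediate: whenever the relevant norm condition of Lemma \ref{Linearized_binomial} holds, $L(b)$ permutes $\gf_q$, hence by Lemma \ref{LpermutesT} it permutes $T$, and Theorem \ref{linear_maintheorem} yields that $f(x)$ permutes $\gf_{q^2}$. The proof carries no genuine difficulty beyond careful substitution; the one point to watch is precisely the reduction to a normalized binomial in cases (ii) and (iv), where the absence of a linear term forces the Frobenius precomposition and the accompanying gcd and norm bookkeeping described above.
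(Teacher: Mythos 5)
Your proposal is correct and follows exactly the route the paper intends: the paper's own proof is a one-line citation of Lemma \ref{binomial_case}, Lemma \ref{Linearized_binomial} and the main reduction theorem, with all details omitted, and your writeup simply supplies those omitted details (the explicit binomial forms, the Frobenius precomposition in cases (ii) and (iv), and the gcd/reciprocal-norm bookkeeping), all of which check out.
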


\begin{proof}
	The proof can be obtained by Theorem \ref{MainTheorem_even_2}, Lemma \ref{binomial_case} and Lemma \ref{Linearized_binomial}. We omit  the details here. 
\end{proof}

Actually, there are numerous classes of PPs that can be obtained from Theorem \ref{Lin_bin_maintheorem}. We demonstrate one example in the following corollary.

\begin{Cor}
	Let $t=2$, $k\ge4$, $k\equiv0\pmod3$,  $\alpha\in\gf_{q}\backslash\{0,1\}$ and $\mathrm{N}_{2^k/2^3}\left(\frac{1+\alpha}{\alpha^8}\right)\neq1$. Then $f(x)=(\alpha+1)x^{7q-5}+\left(\alpha^2+1\right)x^{5q-3}+(\alpha+1)x^{3q-1}+\left(\alpha^4+1\right)x^{q+1}+(\alpha+1)x^{q^2-q+2}+\left(\alpha^2+1\right)x^{q^2-3q+4}+(\alpha+1)x^{q^2-5q+6}+x^{q^2-7q+8}$ is a permutation polynomial over $\gf_{q^2}$.
\end{Cor}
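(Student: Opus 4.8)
The plan is to recognize this corollary as the case $t=2$ of Theorem \ref{Lin_bin_maintheorem}(iii), so that the permutation property is already supplied by that theorem and only a parameter check together with an explicit expansion remain. The underlying data is the linearized polynomial $l(b)=\alpha b+\alpha^2 b^2+\alpha^4 b^4$, i.e. the coefficient vector $(\alpha_0,\alpha_1,\alpha_2)=(\alpha,\alpha^2,\alpha^4)=(\alpha_0,\alpha_0^2,\alpha_0^{2^2})$ with $\alpha_0=\alpha$; this is exactly the shape of case (iii) in Lemma \ref{binomial_case}, which forces $L(b)=b+l(b)+l(b)^2$ to be a linearized binomial.

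First I would verify the hypotheses of Theorem \ref{Lin_bin_maintheorem}(iii). Since $t=2$ we have $t+1=3$, and $k\equiv 0\pmod 3$ gives $d=(t+1,k)=(3,k)=3>1$. The requirement $\alpha_0\in\gf_q\setminus\{0,1\}$ is the stated $\alpha\in\gf_q\setminus\{0,1\}$, and, because $2^{t+1}=8$, the norm condition $\mathrm{N}_{2^k/2^d}\!\left(\frac{1+\alpha_0}{\alpha_0^{2^{t+1}}}\right)\neq 1$ reads precisely $\mathrm{N}_{2^k/2^3}\!\left(\frac{1+\alpha}{\alpha^{8}}\right)\neq 1$. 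A direct computation gives $L(b)=(1+\alpha)b+\alpha^8 b^8=\alpha^8\!\left(b^8+\frac{1+\alpha}{\alpha^8}b\right)$, so by Lemma \ref{Linearized_binomial} (with $r=3$ and $d=(k,3)=3$) this norm condition is exactly what makes $L$ permute $\gf_q$. Hence all hypotheses of Theorem \ref{Lin_bin_maintheorem}(iii) hold and $f$ permutes $\gf_{q^2}$.

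It then remains to confirm that the displayed eight-term polynomial is the specialization of the general $f(x)=x^2\left(e(x)x^{2(q-1)}+e(x)+(x^{q-1}+x^{q^2-q})^{7}+\alpha^4 x^{q-1}\right)$, where $e(x)=\alpha(x^{q-1}+x^{q^2-q})^5+\alpha^2(x^{q-1}+x^{q^2-q})^3$. Here I would set $p=x^{q-1}$ and use that $x^{q^2-q}=p^{-1}$ on $\gf_{q^2}^{*}$, so that $A:=x^{q-1}+x^{q^2-q}=p+p^{-1}$ and $x^{2(q-1)}=p^2$. In characteristic two Lucas' theorem yields $A^3=p^3+p+p^{-1}+p^{-3}$, $A^5=p^5+p^3+p^{-3}+p^{-5}$ and $A^7=p^7+p^5+p^3+p+p^{-1}+p^{-3}+p^{-5}+p^{-7}$. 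Expanding $e(x)(p^2+1)+A^7+\alpha^4 p$, the even powers of $p$ cancel and exactly the eight odd powers $p^{\pm1},p^{\pm3},p^{\pm5},p^{\pm7}$ survive with the coefficients appearing in the statement (for instance $p^1$ acquires $\alpha^4+1$ and $p^{-7}$ acquires $1$); multiplying by $x^2$ and writing $x^2p^{a}=x^{2+a(q-1)}$, reduced modulo $q^2-1$ for the negative $a$ (which sends $p^{-1},p^{-3},p^{-5},p^{-7}$ to the exponents $q^2-q+2,\,q^2-3q+4,\,q^2-5q+6,\,q^2-7q+8$), reproduces the displayed $f$, while $x=0$ maps to $0$ on both sides.

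The only genuine obstacle is this last, purely mechanical step: carrying the char-$2$ binomial expansions of $A^3,A^5,A^7$ through the product with $p^2+1$ and tracking the cancellations without slips, together with the exponent reduction modulo $q^2-1$ for the negative powers of $p$. Everything else is an immediate match of parameters to Theorem \ref{Lin_bin_maintheorem}(iii).
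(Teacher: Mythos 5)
Your proposal is correct and follows essentially the same route as the paper: the paper's own proof simply invokes Case (iii) of Theorem \ref{Lin_bin_maintheorem} with $t=2$, $d=(3,k)=3>1$ and "simplifies the conditions," which is exactly your parameter match ($(\alpha_0,\alpha_1,\alpha_2)=(\alpha,\alpha^2,\alpha^4)$, $L(b)=(1+\alpha)b+\alpha^8b^8$, norm condition via Lemma \ref{Linearized_binomial}). Your explicit expansion of $e(x)(p^2+1)+A^7+\alpha^4p$ into the eight displayed terms is accurate and in fact supplies the mechanical details the paper omits.
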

\begin{proof}
	In the Case (iii) of Theorem \ref{Lin_bin_maintheorem}, $t=2,$ $d=(t+1,k)=3>1$, then we can obtain a class of permutation polynomial from  this theorem after simplifying the conditions in Theorem \ref{Lin_bin_maintheorem}. 
\end{proof}

{\bfseries \emph{B.3 $L(b)$ is a linearized trinomial}} 

In this subsection, we consider two cases such that $L(b)$ is a linearized trinomial. 

\begin{Lemma}
	\label{l(b)=b^4}
	Let $q=2^k$, $T:=\{b\in\gf_q:\tr(b)=1\}$ and $L(b)=b+b^4+b^8$. Then $L(b)$ permutes $T$ (or $\gf_q$)  if and only if $k\not\equiv0\pmod7$.
\end{Lemma}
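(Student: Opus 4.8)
The plan is to first strip away the set $T$ and reduce everything to an irreducibility question over $\gf_2$. Since $L(b)=b+l(b)+l(b)^2$ with $l(b)=b^4$ a linearized polynomial, Lemma \ref{LpermutesT} shows that $L(b)$ permutes $T$ if and only if it permutes $\gf_q$. Because $L$ is $\gf_2$-linear, it permutes $\gf_q$ exactly when its kernel is trivial, i.e. when $L(b)=b+b^4+b^8=0$ forces $b=0$. Factoring out $b$, for $b\neq 0$ this is equivalent to asking whether $P(b):=b^7+b^3+1$ has a root in $\gf_{2^k}$. So the whole statement collapses to the claim that $P$ has a root in $\gf_{2^k}$ if and only if $7\mid k$.

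Next I would prove that $P(b)=b^7+b^3+1$ is irreducible over $\gf_2$, which I expect to be the main obstacle. The cleanest route is to rule out low-degree factors and then invoke a partition argument. First, $P(0)=P(1)=1$, so there is no linear factor. A quadratic factor would force a root $\omega\in\gf_4\setminus\gf_2$ with $\omega^3=1$; then $P(\omega)=\omega^7+\omega^3+1=\omega+1+1=\omega\neq 0$, so no quadratic factor exists. A cubic factor would force a root $\gamma\in\gf_8^{*}$, which satisfies $\gamma^7=1$; then $P(\gamma)=1+\gamma^3+1=\gamma^3\neq 0$, so no cubic factor exists. Since $P$ has degree $7$ and the only way to write $7$ as a sum of integers all at least $4$ is $7$ itself, the absence of factors of degree $1,2,3$ already forces $P$ to be irreducible.

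Finally I would conclude. As $P$ is irreducible of degree $7$ over $\gf_2$, its splitting field is $\gf_{2^7}$, so $P$ has a root in $\gf_{2^k}$ if and only if $\gf_{2^7}\subseteq\gf_{2^k}$, i.e. if and only if $7\mid k$. Combining this with the reduction of the first paragraph, $L(b)$ permutes $\gf_q$ (equivalently $T$) precisely when $P$ has \emph{no} root in $\gf_{2^k}$, that is, precisely when $k\not\equiv 0\pmod 7$. The delicate inputs are the two evaluations at the cube and seventh roots of unity; everything else is the linear-algebra reduction and the elementary partition observation.
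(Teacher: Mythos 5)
Your proof is correct and follows exactly the route the paper has in mind: the paper omits the proof as ``clear,'' relying on Lemma \ref{LpermutesT} to pass from $T$ to $\gf_q$ and on the standard fact that a linearized polynomial permutes $\gf_q$ iff its kernel is trivial, which here reduces to whether $x^7+x^3+1$ has a root in $\gf_{2^k}$. Your verification that $x^7+x^3+1$ is irreducible over $\gf_2$ (no factor of degree $1$, $2$, or $3$, hence irreducible since $7$ cannot be partitioned into parts all at least $4$) cleanly supplies the details the paper leaves out, and the conclusion that a root exists iff $7\mid k$ is exactly right.
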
 

In this  lemma, $l(b)=b^4$.  The proof is clear so we omit all the details.  Therefore, we can construct a class of  permutation polynomial over $\gf_{q^2}$ as follows.

\begin{Th}
	Let $q=2^k$ and $f(x)=x^{7q-5}+x^{5q-3}+x^{3q-1}+x^{q^2-q+2}+x^{q^2-3q+4}+x^{q^2-5q+6}+x^{q^2-7q+8}$. Then $f(x)$ is a permutation polynomial over $\gf_{q^2}$ if and only if  $k\not\equiv0\pmod7$.
\end{Th}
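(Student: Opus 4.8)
The plan is to identify $f$ as the member of the family produced by Theorem \ref{linear_maintheorem} corresponding to the linearized map $l(b)=b^{4}$, and then to reduce the permutation question to the already-recorded Lemma \ref{l(b)=b^4}.

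First I would specialize Theorem \ref{linear_maintheorem} to $t=2$, $\alpha_0=\alpha_1=0$, $\alpha_2=1$, so that $l(b)=b^{4}$ and $L(b)=b+l(b)+l(b)^2=b+b^{4}+b^{8}$. Here $e(x)=0$ and $2^{t+1}-1=7$, so the constructed polynomial is
\[
f(x)=x^{2}\left(\left(x^{q-1}+x^{q^2-q}\right)^{7}+x^{q-1}\right).
\]
I then verify that this is exactly the $f$ in the statement. Writing $A=x^{q-1}$, $B=x^{q^2-q}$ and expanding $(A+B)^{7}=\sum_{i=0}^{7}\binom{7}{i}A^{i}B^{7-i}$, every coefficient $\binom{7}{i}$ is odd (Lucas, since $7=(111)_2$), so eight monomials $x^{2}A^{i}B^{7-i}$ appear. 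Reducing each exponent modulo $q^2-1$ (using $q^2\equiv1$) yields the seven displayed exponents $7q-5,\,5q-3,\,3q-1,\,q^2-q+2,\,q^2-3q+4,\,q^2-5q+6,\,q^2-7q+8$, while the term $i=4$ collapses to $x^{q+1}$ and cancels, in characteristic $2$, against the extra summand $x^{2}\cdot x^{q-1}=x^{q+1}$. This bookkeeping is routine but is the single place where care is required.

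Next, because $l(b)=b^{4}$ is linearized with leading coefficient $\alpha_2=1\neq0$, Theorem \ref{MainTheorem_even_2} applies: its conditions $(1)$ and $(2)$ hold automatically for such an $l$ (indeed $h_1(\tfrac1b)+h_2(\tfrac1b)=b^{-2^{t}}$ never vanishes on $T$, and $h(1)\neq0$ after the reduction of Algorithm \ref{alg}), and by Lemma \ref{LpermutesT} its condition $(3)$, namely that $L(b)=b+b^{4}+b^{8}$ permutes $T$, is equivalent to $L$ permuting all of $\gf_q$. Hence $f$ permutes $\gf_{q^2}$ if and only if $b+b^{4}+b^{8}$ permutes $\gf_q$, and invoking Lemma \ref{l(b)=b^4} turns this into the stated condition $k\not\equiv0\pmod 7$.

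I expect no serious obstacle for the theorem itself, since it is a direct specialization of the general machinery. The only genuinely substantive input is Lemma \ref{l(b)=b^4}, whose proof the paper omits but which I would supply by passing to the $2$-associate: as the coefficients of $L$ lie in $\gf_2$, $L$ permutes $\gf_{2^k}$ if and only if $\gcd\left(y^{3}+y^{2}+1,\,y^{k}-1\right)=1$ in $\gf_2[y]$. The polynomial $y^{3}+y^{2}+1$ is irreducible over $\gf_2$ (it has no root in $\gf_2$), and its roots lie in $\gf_8^{*}$ and hence have multiplicative order $7$; therefore the gcd is nontrivial precisely when $7\mid k$, so $L$ permutes $\gf_q$ exactly when $k\not\equiv0\pmod 7$, completing the argument.
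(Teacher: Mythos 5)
Your proposal is correct and follows essentially the same route as the paper: identify $f$ as the specialization of Theorem \ref{linear_maintheorem} with $l(b)=b^4$ (so $L(b)=b+b^4+b^8$), check that conditions (1)--(2) of Theorem \ref{MainTheorem_even_2} hold automatically, and reduce via Lemma \ref{LpermutesT} to Lemma \ref{l(b)=b^4}; your monomial bookkeeping for $x^2\bigl(\left(x^{q-1}+x^{q^2-q}\right)^7+x^{q-1}\bigr)$, including the cancellation of the two $x^{q+1}$ terms, checks out. The only difference is that the paper omits the proof of Lemma \ref{l(b)=b^4} as ``clear,'' whereas you supply it via the standard $2$-associate argument with $y^3+y^2+1$, which is a correct and welcome completion.
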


Using the fractional approach to prove the above theorem, we have to prove the fractional polynomial $$g(x)=\frac{x^{16}+x^{14}+x^{12}+x^{10}+x^6+x^4+x^2}{x^{14}+x^{12}+x^{10}+x^{6}+x^4+x^2+1},$$
i.e.,
$$\frac{x^{8}+x^{7}+x^{6}+x^{5}+x^3+x^2+x}{x^{7}+x^{6}+x^{5}+x^{3}+x^2+x+1},$$
 permutes $\mu_{q+1}$, which  seems to be pretty difficult.

In addition, let $l(b)=b^2+b^4$. Then $L(b)=b+b^2+b^8$ and we have the following lemma, which is easy to prove.

\begin{Lemma}
	Let $q=2^k$, $T:=\{b\in\gf_q:\tr(b)=1\}$ and $L(b)=b+b^2+b^8$. Then $L(b)$ permutes $T$  (or $\gf_q$) if and only if $k\not\equiv0\pmod7$.
\end{Lemma}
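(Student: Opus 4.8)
The plan is to reduce the claim to a statement about the kernel of a linearized map and then to an irreducibility question over $\gf_2$. First I would observe that with $l(b)=b^2+b^4$ one has $L(b)=b+l(b)+l(b)^2=b+(b^2+b^4)+(b^4+b^8)=b+b^2+b^8$, which is a linearized (i.e. $\gf_2$-linear) polynomial over $\gf_q$. By Lemma \ref{LpermutesT}, $L$ permutes $T$ if and only if it permutes $\gf_q$; and since $L$ is additive, it permutes $\gf_q$ if and only if $L(b)=0$ forces $b=0$, that is, if and only if $\ker L$ is trivial on $\gf_q$.

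Next I would factor $L(b)=b+b^2+b^8=b\left(b^7+b+1\right)$ over $\gf_2$. Hence a nonzero element of $\ker L$ in $\gf_q$ exists if and only if $P(b):=b^7+b+1$ has a root in $\gf_{2^k}$, so the entire lemma reduces to determining for which $k$ the polynomial $P$ acquires a root in $\gf_{2^k}$.

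The crux is to prove that $P(b)=b^7+b+1$ is irreducible over $\gf_2$. I would argue by the standard ``no small factor'' method: a reducible polynomial of degree $7$ must have an irreducible factor of degree at most $3$, so it suffices to exclude factors of degrees $1$, $2$, and $3$. One checks $P(0)=P(1)=1$, ruling out linear factors; evaluates $P$ at a root $\omega$ of the unique irreducible quadratic $b^2+b+1$ (using $\omega^3=1$, so $\omega^7=\omega$, giving $P(\omega)=\omega+\omega+1=1\neq0$), ruling out a quadratic factor; and performs the two divisions of $P$ by the irreducible cubics $b^3+b+1$ and $b^3+b^2+1$, each leaving a nonzero remainder, ruling out cubic factors. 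These elementary computations together force $P$ to be irreducible of degree $7$.

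Finally, since $P$ is irreducible of degree $7$ over $\gf_2$, every root of $P$ generates $\gf_{2^7}$, and such a root lies in $\gf_{2^k}$ if and only if $7\mid k$. Thus $b^7+b+1$ has a root in $\gf_{2^k}$ precisely when $k\equiv0\pmod 7$, so $\ker L$ is nontrivial exactly in that case; equivalently $L$ permutes $\gf_q$ (hence $T$, by Lemma \ref{LpermutesT}) if and only if $k\not\equiv0\pmod 7$. I expect the irreducibility verification of $P$ to be the only real obstacle, and even that is entirely routine.
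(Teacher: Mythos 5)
Your proof is correct and is precisely the standard argument the paper alludes to but omits (it states only that the lemma ``is easy to prove''): reduce via Lemma \ref{LpermutesT} and additivity to triviality of the kernel, factor $L(b)=b\left(b^7+b+1\right)$, and use the irreducibility of $b^7+b+1$ over $\gf_2$ to conclude that a nonzero root exists in $\gf_{2^k}$ exactly when $7\mid k$. All the supporting computations (no roots in $\gf_2$, $P(\omega)=1$ for the cube root of unity, nonzero remainders modulo the two irreducible cubics) check out, so your write-up correctly fills in the details the paper leaves to the reader.
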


According to the above lemma, we can construct a class of permutation polynomial over $\gf_{q^2}$.

\begin{Th}
	Let $q=2^k$  and $f(x)=x^{7q-5}+x^{3q-1}+x^{q^2-q+2}+x^{q^2-5q+6}+x^{q^2-7q+8}$. Then $f(x)$ is a permutation polynomial over $\gf_{q^2}$ if and only if $k\not\equiv0\pmod7$.
\end{Th}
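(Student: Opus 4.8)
The plan is to realise the stated $f(x)$ as the instance of the linearized construction of Theorem \ref{linear_maintheorem} corresponding to $l(b)=b^2+b^4$, and then to read off the permutation property from the preceding lemma. First I would set $l(b)=b^2+b^4$, so that
\begin{equation*}
L(b)=b+l(b)+l(b)^2=b+\left(b^2+b^4\right)+\left(b^2+b^4\right)^2=b+b^2+b^8,
\end{equation*}
which is exactly the linearized trinomial of the lemma stated just above. By that lemma together with Lemma \ref{LpermutesT}, $L(b)$ permutes $T=\{b\in\gf_q:\tr(b)=1\}$ if and only if it permutes $\gf_q$, and this happens if and only if $k\not\equiv0\pmod 7$.

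Next I would check that the given $f$ is precisely the polynomial produced by the construction for $t=2$ and $(\alpha_0,\alpha_1,\alpha_2)=(0,1,1)$. With these values $e(x)=\sum_{i=0}^{1}\alpha_i\left(x^{q-1}+x^{q^2-q}\right)^{7-2^{i+1}}=\left(x^{q-1}+x^{q^2-q}\right)^{3}$, so Theorem \ref{linear_maintheorem} gives
\begin{equation*}
f(x)=x^2\left(\left(x^{q-1}+x^{q^2-q}\right)^3 x^{2(q-1)}+\left(x^{q-1}+x^{q^2-q}\right)^3+\left(x^{q-1}+x^{q^2-q}\right)^7+x^{q-1}\right).
\end{equation*}
Setting $u=x^{q-1}$ and $v=x^{q^2-q}$ (so that $v$ acts as $u^{-1}$ on $\gf_{q^2}^{*}$, since $x^{q^2}=x$ there) and reducing every exponent modulo $q^2-1$, each monomial $x^2u^{a}v^{b}$ becomes a single reduced power of $x$. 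Collecting the contributions of the four summands $(u+v)^3x^{2(q-1)}$, $(u+v)^3$, $(u+v)^7$ and $x^2u$, I would find eight distinct reduced monomials; three of them — the powers reducing to $x^{5q-3}$, $x^{q^2-3q+4}$ and $x^{q+1}$ — occur with even multiplicity and hence vanish in characteristic two, leaving exactly $x^{7q-5}$, $x^{3q-1}$, $x^{q^2-q+2}$, $x^{q^2-5q+6}$, $x^{q^2-7q+8}$. This confirms that the stated $f$ is the construction's output.

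Finally, I would invoke the equivalence in Theorem \ref{MainTheorem_even_2}. Because $l$ is linearized, its two auxiliary conditions hold automatically: condition (1), $h_1(1/b)\neq h_2(1/b)$, reduces to $(1/b)^{2^t}\neq0$, which is true since $0\notin T$; and condition (2), $h(1)\neq0$, holds because the construction yields $h(1)=\alpha_t=1$. Hence by Theorem \ref{MainTheorem_even_2}, $f$ permutes $\gf_{q^2}$ if and only if its condition (3) holds, namely that $L(b)=b+l(b)+l(b)^2$ permutes $T$; combined with the first paragraph this is equivalent to $k\not\equiv0\pmod 7$, as claimed. I expect the only genuine work to lie in the second step: the formula of Theorem \ref{linear_maintheorem} expands into about twenty monomials, and one must carry out the reduction modulo $x^{q^2}-x$ carefully to see which reduced powers coincide and cancel (in pairs, and fourfold for $x^{q+1}$). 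Everything else — the permutation criterion and the behaviour of $b+b^2+b^8$ — is supplied verbatim by the preceding lemma, Lemma \ref{LpermutesT} and Theorem \ref{MainTheorem_even_2}, so no new idea is required.
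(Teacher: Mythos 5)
Your proposal is correct and follows exactly the route the paper intends: it takes $l(b)=b^2+b^4$, observes $L(b)=b+b^2+b^8$, feeds this into the Section~3.B construction to recover the stated $f$, and gets the ``if and only if'' from Theorem \ref{MainTheorem_even_2} together with Lemma \ref{LpermutesT} and the preceding lemma. The paper leaves all of these details implicit (it merely says the theorem follows from the lemma above), so your write-up is simply a fleshed-out version of the same argument, including the correct observation that conditions (1) and (2) of Theorem \ref{MainTheorem_even_2} hold automatically with $h(1)=\alpha_t=1$.
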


In the above theorem, using the fractional approach, we should prove the fractional polynomial $$g(x)=\frac{x^{16}+x^{14}+x^{10}+x^6+x^2}{x^{14}+x^{10}+x^{6}+x^2+1},$$
i.e.,
$$\frac{x^{8}+x^{7}+x^{5}+x^3+x}{x^{7}+x^{5}+x^{3}+x+1}$$
permutes $\mu_{q+1}$, 
which seems to be difficult, too. However, using our method, it can be proved easily.

\section{The case $\mathrm{char}\gf_{q^2}$ is odd}

In this section, we consider the case where $\mathrm{char}\gf_{q^2}$ is odd. In this case, $S:=\{a\in\gf_q^{*}: \eta\left(a^2-4\right)=-1 \}$, where $\eta(\cdot)$ is  the quadratic character.  By Corollary \ref{Maintheorem_r=1}, we obtain
$$R(a)^2-4=\left(a^2-4\right)\left(\frac{h_1^2(a)-h_2^2(a)}{h_1(a)h_2(a)a+h_1^2(a)+h_2^2(a)}\right)^2.$$
Hence $R(a)$  is a mapping from $S$ to $S$  if $h_1^2(a)-h_2^2(a)\neq0$ for any $a\in S$.  Now we have the following result as a consequence of Theorem \ref{Maintheorem}. We omit all the details because it suffices to  verify all the conditions in Theorem \ref{Maintheorem}, which is similar to Theorem \ref{MainTheorem_even_1}. 

\begin{Th}
	\label{MainTheorem_odd_1}
	Let $q=p^k$, $p$ be an odd prime, $S:=\{a\in\gf_q^{*}: \eta\left(a^2-4\right)=-1 \}$ and $f(x)=xh\left(x^{q-1}\right)\in\gf_{q^2}[x]$ such that $h(x) \in \gf_q[x]$.  Let $a = x+x^{-1}$. By Algorithm \ref{alg1}, we get $h_1(a)$ and $h_2(a)$ from $h(x)=h_1(a)x+h_2(a)$. Assume that $$\psi(a)=\frac{h_1^2(a)-h_2^2(a)}{h_1(a)h_2(a)a+h_1^2(a)+h_2^2(a)}.$$ Then $f(x)$ permutes $\gf_{q^2}$ if and only if the following conditions hold:
	\begin{enumerate}[(1)]
		\item $h(1)\neq0$ and $h(-1)\neq0$;
		\item $h_1^2(a)\neq h_2^2(a)$ for any $a\in S$;
		\item $R(a)=\left(a^2-4\right)\psi(a)^2+4$ permutes $S$.
	\end{enumerate}
\end{Th}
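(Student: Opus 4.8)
The plan is to establish the theorem by checking, in turn, the three conditions of Corollary~\ref{Maintheorem_r=1} (the $r=1$ specialization of Theorem~\ref{Maintheorem}) and translating each into one of the three displayed conditions, exactly as in the even-characteristic Theorem~\ref{MainTheorem_even_1}. The common starting point is the shape of the corresponding fractional polynomial on $\mu_{q+1}$: since $x^q=x^{-1}$ there and each $h_i(a)\in\gf_q$ satisfies $h_i(a)^q=h_i(a)$, one computes $h(x)^q=h_1(a)x^{-1}+h_2(a)$ and hence
\[
g(x)=xh(x)^{q-1}=\frac{h_1(a)+h_2(a)x}{h_1(a)x+h_2(a)},\qquad a=x+x^{-1}.
\]
Every subsequent verification rests on this formula.

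For condition~(i) of the corollary I would clear denominators in $g(x)=1$ and $g(x)=-1$, obtaining the clean factorizations $(h_1(a)-h_2(a))(1-x)=0$ and $(h_1(a)+h_2(a))(1+x)=0$ respectively. Thus $g(x)=1$ acquires an extra root $x\neq1$ precisely when $h_1(a)=h_2(a)$ for some $x\in\mu_{q+1}$, and $g(x)=-1$ an extra root $x\neq-1$ precisely when $h_1(a)=-h_2(a)$. Separating $x\in\mu_{q+1}\setminus\{1,-1\}$ (so that $a\in S$) from the two boundary points $x=\pm1$ (where $a=\pm2$, and where $h(1)=h_1(2)+h_2(2)$, $h(-1)=h_2(-2)-h_1(-2)$) shows that the interior constraints merge into $h_1^2(a)\neq h_2^2(a)$ for all $a\in S$ (condition~(2)), while the boundary constraints become $h(1)\neq0$ and $h(-1)\neq0$ (condition~(1)). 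Condition~(ii) of the corollary, that $h(x)\neq0$ on $\mu_{q+1}$, I would then obtain for free: from $h(x_0)=0$ (so also $x_0h(x_0)^q=0$) one derives both $(h_1-h_2)(x_0-1)=0$ and $(h_1+h_2)(x_0+1)=0$; if $x_0\neq\pm1$ this forces $h_1(a_0)=h_2(a_0)=0$ against condition~(2), while $x_0=\pm1$ contradicts condition~(1).

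It remains to match condition~(iii) of the corollary, that $R(a)$ permutes $\{2,-2\}\cup S$, with the displayed condition~(3); this is the step I expect to require the most care. Here I would use the identity recorded before the theorem, $R(a)^2-4=(a^2-4)\psi(a)^2$ (equivalently $R(a)^2=(a^2-4)\psi(a)^2+4$, the quantity appearing in condition~(3)), together with the direct evaluations $R(2)=\dfrac{2(h_1+h_2)^2}{(h_1+h_2)^2}=2$ and $R(-2)=\dfrac{-2(h_1-h_2)^2}{(h_1-h_2)^2}=-2$, both well defined by condition~(1). For $a\in S$, condition~(2) gives $\psi(a)\neq0$, whence $\eta\!\left(R(a)^2-4\right)=\eta(a^2-4)\,\eta\!\left(\psi(a)^2\right)=-1$; thus $R(a)$ lies again in $S$ and away from $\{2,-2\}$. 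Therefore $R$ fixes the two base points and restricts to a self-map of $S$, so that $R$ permutes $\{2,-2\}\cup S$ if and only if it permutes $S$, which is condition~(3). The real obstacle is this last piece of bookkeeping: in contrast to the even case, where $2=-2=0$ is a single point and the matching step is immediate, here $2$ and $-2$ are distinct fixed points, and one must lean on the quadratic-character identity to certify simultaneously that $R$ maps $S$ into $S$ and that its image never collides with $\{2,-2\}$ — precisely what the hypothesis $h_1^2(a)\neq h_2^2(a)$ secures.
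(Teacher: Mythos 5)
Your overall route---specialize Corollary \ref{Maintheorem_r=1} to odd characteristic and translate its three conditions one by one, in parallel with Theorem \ref{MainTheorem_even_1}---is exactly the paper's (the paper omits the details, remarking only that the verification is ``similar to Theorem \ref{MainTheorem_even_1}''). Your treatment of conditions (1) and (2) is correct: the factorizations $(h_1(a)-h_2(a))(1-x)=0$ and $(h_1(a)+h_2(a))(1+x)=0$, the boundary evaluations $h(1)=h_1(2)+h_2(2)$ and $h(-1)=h_2(-2)-h_1(-2)$, and the derivation of $h(x)\neq0$ on $\mu_{q+1}$ from (1) and (2) all check out. (One line worth adding: once $h(x)\neq 0$ on $\mu_{q+1}$, the denominator $h_1^2(a)+ah_1(a)h_2(a)+h_2^2(a)=h(x)^{q+1}$ is automatically nonzero, so $R$ is defined on all of $\{2,-2\}\cup S$.)

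The problem is the final identification. Writing $R_{\mathrm{cor}}$ for the rational function of Corollary \ref{Maintheorem_r=1}, your computation gives $R_{\mathrm{cor}}(a)^2-4=(a^2-4)\psi(a)^2$; that is, the quantity displayed in condition (3) of the theorem is $R_{\mathrm{cor}}(a)^{2}$, not $R_{\mathrm{cor}}(a)$. You notice this (``equivalently $R(a)^2=\dots$, the quantity appearing in condition (3)'') but then conclude that ``$R$ permutes $S$ \dots which is condition (3).'' That is a non sequitur: ``$R_{\mathrm{cor}}$ permutes $S$'' and ``$R_{\mathrm{cor}}^{2}$ permutes $S$'' are different assertions, and the latter is essentially never true here, because $S=-S$ (as $a^2-4=(-a)^2-4$) while $R_{\mathrm{cor}}(S)\subseteq S$, so squaring collapses the pairs $\pm R_{\mathrm{cor}}(a)$; for instance, in the paper's own later application $\psi(a)=(a^2-4)^{(p^m-1)/2}$, the displayed quantity becomes $a^{2p^m}$, which is two-to-one on $S\setminus\{0\}$. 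What your argument actually establishes is the theorem with condition (3) read as ``$R_{\mathrm{cor}}(a)$ permutes $S$,'' where $R_{\mathrm{cor}}(a)^2=(a^2-4)\psi(a)^2+4$. The sentence preceding the theorem in the paper (which correctly writes $R(a)^2-4$) strongly suggests the displayed formula in (3) is a typo rather than an error in your mathematics, but as written your last step proves a condition that is not the one literally asserted; you should either restate (3) accordingly or supply the missing (and, as it stands, unobtainable) bridge between ``$R$ permutes $S$'' and ``$R^2$ permutes $S$.''
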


Let $b=a^2-4$ and  $T:=\{b\in\gf_q: \eta(b)=-1, \eta(b+4)=1\}$. Then $R(a)$ permutes $S$ if  and only if $L(b)=bl(b)^2$ permutes $T$, where $l(b)=\frac{h_1^2(\epsilon(b))-h_2^2(\epsilon(b))}{h_1(\epsilon(b))h_2(\epsilon(b))\epsilon(b)+h_1^2(\epsilon(b))+h_2^2(\epsilon(b))}$ and $\epsilon^2(b)=b+4$. Therefore, we rewrite the above result as follows:

\begin{Th}
	\label{MainTheorem_odd_2}
	Let $q=p^k$, $p$ be an odd prime, $T:=\{b\in\gf_q: \eta(b)=-1, \eta(b+4)=1\}$ and $f(x)=xh\left(x^{q-1}\right)\in\gf_q[x]$. By Algorithm \ref{alg1}, we get $h_1(a)$ and $h_2(a)$ from $h(x)=h_1(a)x+h_2(a)$. Assume that $L(b)=bl(b)^2,$ where $l(b)=\frac{h_1^2(\epsilon(b))-h_2^2(\epsilon(b))}{h_1(\epsilon(b))h_2(\epsilon(b))\epsilon(b)+h_1^2(\epsilon(b))+h_2^2(\epsilon(b))}$ and $\epsilon^2(b)=b+4$.  Then $f(x)$ permutes $\gf_{q^2}$ if and only if the following conditions hold:
	\begin{enumerate}[(i)]
		\item $h(1)\neq0$ and $h(-1)\neq0$;
		\item $h_1^2(\epsilon(b))\neq h_2^2(\epsilon(b))$ for any $b\in T$;
		\item $L(b)$ permutes $T$.
	\end{enumerate}
\end{Th}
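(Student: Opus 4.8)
The plan is to obtain Theorem~\ref{MainTheorem_odd_2} from Theorem~\ref{MainTheorem_odd_1} by transporting its three conditions across the substitution $b=a^2-4$; all the real work is bookkeeping around the map $\sigma\colon S\to T$, $\sigma(a)=a^2-4$. First I would verify that $\sigma$ is a well-defined two-to-one surjection. For $a\in S$ we have $\eta(a^2-4)=-1$, i.e. $\eta(b)=-1$, while $b+4=a^2$ is a nonzero square, so $\eta(b+4)=1$ and $b\in T$; conversely each $b\in T$ has $b+4$ a nonzero square, producing exactly the two preimages $a=\pm\epsilon(b)$ with $\epsilon(b)^2=b+4$. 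Under this dictionary $\psi$ corresponds to $l(b)=\psi(\epsilon(b))$ and the bottom map of the main diagram corresponds to $L(b)=b\,l(b)^2$.

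Conditions (1) and (i) are literally identical, so there is nothing to prove for them. For (2) versus (ii) I would use the correspondence $a\leftrightarrow b$: writing $a=\epsilon(b)$ turns ``$h_1^2(a)\neq h_2^2(a)$ for all $a\in S$'' into ``$h_1^2(\epsilon(b))\neq h_2^2(\epsilon(b))$ for all $b\in T$.'' The one point needing care is that $\sigma$ is two-to-one, so a priori (ii) only constrains the representative $\epsilon(b)$; this is harmless precisely because $l(b)^2$ must be independent of the chosen square root, i.e. $\psi(a)^2=\psi(-a)^2$, and since $\psi=(h_1^2-h_2^2)/(\cdots)$ this forces $h_1^2(a)=h_2^2(a)$ and $h_1^2(-a)=h_2^2(-a)$ to hold or fail together. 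Establishing this well-definedness (equivalently, the behaviour of $h_1,h_2$ under $a\mapsto-a$) is the first genuine check.

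The heart of the matter is the equivalence of (3) and (iii), that is ``$R$ permutes $S$'' $\Leftrightarrow$ ``$L$ permutes $T$.'' Here I would use the identity recorded at the start of this section, $R(a)^2-4=(a^2-4)\psi(a)^2$, which says exactly that
$$\sigma\big(R(a)\big)=(a^2-4)\,\psi(a)^2=b\,l(b)^2=L(b)=L\big(\sigma(a)\big),$$
so that the square
\[
\xymatrix{ S \ar[r]^{R}\ar[d]_{\sigma} & S \ar[d]^{\sigma} \\ T \ar[r]^{L} & T }
\]
commutes and $\sigma$ is surjective. This is once again the configuration of the AGW Criterion (its third use in this paper, with $\sigma$ as the surjection), which yields: $R$ is a bijection of $S$ if and only if $L$ is a bijection of $T$ \emph{and} $R$ is injective on each fibre $\sigma^{-1}(b)=\{\epsilon(b),-\epsilon(b)\}$.

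The step I expect to be the main obstacle is this last fibre condition: ruling out $R(\epsilon(b))=R(-\epsilon(b))$ so that the permutation property passes losslessly through the two-to-one cover. Since $R(a)=g(x)+g(x)^{q}$ with $a=x+x^{-1}$, and the companion preimage $-a$ is represented by $-x\in\mu_{q+1}$, separating the two points of a fibre amounts to comparing $g(-x)$ with $g(x)$, which is again dictated by how $h_1(a),h_2(a)$ transform under $a\mapsto-a$. Once one checks that $R$ never collapses a fibre, the AGW Criterion closes the equivalence of (3) and (iii), and together with the transport of (1) and (2) this gives Theorem~\ref{MainTheorem_odd_2}.
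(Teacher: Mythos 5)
Your architecture --- a third application of the AGW Criterion through the two-to-one map $\sigma(a)=a^2-4$ from $S$ onto $T$ --- is a sensible way to make the passage from Theorem \ref{MainTheorem_odd_1} rigorous, and it is in fact more explicit than the paper, which offers no proof at all: it simply asserts in one sentence that ``$R(a)$ permutes $S$ if and only if $L(b)=bl(b)^2$ permutes $T$'' and then restates Theorem \ref{MainTheorem_odd_1} with the variable $b$. But your proposal is not yet a proof, because the two steps you yourself flag as ``genuine checks'' are precisely where all the content lies, and you carry out neither. First, well-definedness: $l(b)=\psi(\epsilon(b))$ depends on the choice of square root $\epsilon(b)$ of $b+4$ unless $\psi(a)^2=\psi(-a)^2$, and this is \emph{not} automatic for general $h$. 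For instance $h(x)=x+c$ gives $h_1(a)=1$, $h_2(a)=c$ and $\psi(a)=\frac{1-c^2}{ca+1+c^2}$, whose square is not an even function of $a$ for generic $c$. Without this, your square does not commute and AGW cannot be invoked. Second, fibre injectivity: AGW yields ``$R$ permutes $S$'' $\Leftrightarrow$ ``$L$ permutes $T$ \emph{and} $R(\epsilon(b))\neq R(-\epsilon(b))$ for every $b\in T$,'' and conditions (i)--(ii) do not obviously supply the second clause; since $R(a)=g(x)+g(x)^q$ with $a=x+x^{-1}$, collapsing a fibre amounts to $g(-x)\in\{g(x),g(x^q)\}$, a constraint on $h$ that neither you nor the paper rules out. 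Until both points are settled, the equivalence of (3) and (iii) --- and hence the ``if'' direction of the theorem --- remains open.

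There is also a smaller error in your first step: when $q\equiv 3\pmod 4$ we have $\eta(-4)=\eta(-1)=-1$, so $0\in S$ (it equals $x+x^{-1}$ for $x$ a primitive fourth root of unity in $\mu_{q+1}$), yet $\sigma(0)=-4\notin T$ because $\eta(0)\neq 1$; so $\sigma$ is not even a map from $S$ into $T$ in that case, and $a=0$ needs separate treatment. To be fair, these defects are inherited from the source: condition (3) of Theorem \ref{MainTheorem_odd_1} literally defines $R(a)=(a^2-4)\psi(a)^2+4$, which by the displayed identity is the \emph{square} of the $R$ of Corollary \ref{Maintheorem_r=1} and so cannot permute the symmetric set $S$ at all, and the paper's own examples tacitly assume $\psi$ even (they start from an $l(b)$ and work backwards, so the ambiguity never surfaces). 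Your AGW framing is the right instinct for repairing the statement, but as written the proposal defers exactly the steps where the difficulty lives.
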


Given $l(b)$ which permutes $T$, we can obtain $\psi(a)=l\left(a^2-4\right)$. Assume $h_1(a),h_2(a)\neq0$ for any $a\in S$. Then we have $$\psi(a)=\frac{H^2(a)-1}{H^2(a)+aH(a)+1},$$ where $H(a)=\frac{h_1(a)}{h_2(a)}.$ Furthermore, we have 
\begin{equation}
\label{obtain_H(a)}
(\psi(a)-1)H^2(a)+a\psi(a)H(a)+\psi(a)+1=0.
\end{equation}

The discriminant of the above equation is
$$\Delta=\left(a^2-4\right)\psi^2(a)+4.$$
Therefore, we hope that it is not difficult to obtain the square root of $\Delta$ when we construct permutation polynomials from $l(b)$ which permutes $T$. If we can obtain  $H(a)=\frac{t_1(a)}{t_2(a)}$, without loss of generality, we may assume $h_1(a)=t_1(a)$ and $h_2(a)=t_2(a)$. Furthermore, we can obtain $h(x)$ according to Algorithm \ref{alg}. In the following, we demonstrate the construction of some permutation polynomials over $\gf_{q^2}$ from monomials $l(b)$. 

\subsection{The case $l(b)$ is a monomial}

In this subsection, we mainly consider the case that $l(b)=b^s$  such that $L(b)=bl(b)^2$ permutes $T$ in Theorem \ref{MainTheorem_odd_2}. Using  Magma, we obtain some experiment results under the conditions where $k$ is from $2$ to $5$ and $p=3,5,7$, see TABLE II, III and IV, respectively.  All experiment results in these tables can be explained by the following constructions of permutation trinomials. 

\begin{table}[htp]
	\label{table_p=3}
	\centering
	\caption{The value of $s$ such that $L(b)$ permutes $T$, where $\mathrm{char}\gf_{q}=3$}
	\begin{tabular}{ c c c}		
		\hline 
		\hline
		& $k$ &  $s$   \\
		\hline
		& $2$ & $1,4,5$ \\
		\hline
		& $3$ & $1,4,13,14,17$ \\
		\hline
		& $4$  &  $1,4,13,40,41,44,53$   \\
		\hline
		&   $5$   &  $1,4,13,40,121,122,125,134,161$    \\
		\hline
		\hline
	\end{tabular}
\end{table}	

\begin{table}[htp]
	\label{table_p=5}
	\centering
	\caption{The value of $s$ such that $L(b)$ permutes $T$, where $\mathrm{char}\gf_{q}=5$}
	\begin{tabular}{ c c c}		
		\hline 
		\hline
		& $k$ &  $s$   \\
		\hline
		& $2$ & $2,12,14$ \\
		\hline
		& $3$ & $2,12,62,64,74$ \\
		\hline
		& $4$  &  $2,12,62,312,314,324,374$   \\
		\hline
		&   $5$   &  $2,12,62,312,1562,1564,1574,1624,1874$    \\
		\hline
		\hline
	\end{tabular}
\end{table}	

\begin{table}[htp]
	\label{table_p=7}
	\centering
	\caption{The value of $s$ such that $L(b)$ permutes $T$, where $\mathrm{char}\gf_{q}=7$}
	\begin{tabular}{ c c c}		
		\hline 
		\hline
		& $k$ &  $s$   \\
		\hline
		& $2$ & $3,24,27$ \\
		\hline
		& $3$ & $3,24,171,174,195$ \\
		\hline
		& $4$  &  $3,24,171,1200,1203,1224,1371$   \\
		\hline
		\hline
	\end{tabular}
\end{table}	

\begin{Lemma}
	\label{odd_mon_1}
	Let $q=p^k$, $p$ be an odd prime and $T:=\{b\in\gf_q: \eta(b)=-1, \eta(b+4)=1\}$. Let $l(b)=b^s$ and $s=\frac{p^m-1}{2}$, where $m>0$ is an integer. Then $L(b)=bl(b)^2$ permutes $T$.
\end{Lemma}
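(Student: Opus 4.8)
The plan is to recognize that the map $L$ collapses to a Frobenius automorphism, after which everything becomes routine. Since $l(b)=b^{s}$ with $s=\frac{p^m-1}{2}$, I would first simplify
\[
L(b)=b\,l(b)^2=b\cdot b^{2s}=b^{1+(p^m-1)}=b^{p^m}.
\]
Thus $L$ is exactly the $m$-fold Frobenius $\sigma:b\mapsto b^{p^m}$, which is a field automorphism of $\gf_q$ and in particular a bijection of $\gf_q$. The whole argument then reduces to showing that $\sigma$ stabilizes $T$, since an injective self-map of the finite set $\gf_q$ that sends $T$ into $T$ must permute $T$.

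Next I would verify $\sigma(T)\subseteq T$ by checking that the two defining conditions of $T$ are preserved. For the quadratic character, writing $\eta(b)=b^{(q-1)/2}$ gives $\eta\bigl(b^{p^m}\bigr)=\bigl(b^{(q-1)/2}\bigr)^{p^m}=\eta(b)^{p^m}=\eta(b)$, where the last equality holds because $\eta(b)\in\{1,-1\}$ and $p^m$ is odd. Hence $\eta(b)=-1$ forces $\eta\bigl(\sigma(b)\bigr)=-1$. For the shifted condition I would use that $4\in\gf_p$ is fixed by $\sigma$ together with the additivity of Frobenius, so that
\[
\sigma(b)+4=\sigma(b)+\sigma(4)=\sigma(b+4)=(b+4)^{p^m},
\]
and therefore $\eta\bigl(\sigma(b)+4\bigr)=\eta\bigl((b+4)^{p^m}\bigr)=\eta(b+4)=1$. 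Combining the two, $b\in T$ implies $\sigma(b)\in T$.

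Finally, since $\sigma$ is injective on $\gf_q$ it is injective on $T$, and with $\sigma(T)\subseteq T$ and $T$ finite this yields $\sigma(T)=T$, i.e.\ $L$ permutes $T$, which can then be fed into Theorem \ref{MainTheorem_odd_2}. There is no serious obstacle here: the only points requiring care are the character identity $\eta(b)^{p^m}=\eta(b)$, which rests solely on $p$ being odd, and the arithmetic observation that the exponent $1+2s$ is precisely $p^m$. Once the map is identified as Frobenius, the stabilization of $T$ is immediate, and the proof is considerably shorter than the ad hoc counting or resolvent arguments used for the non-monomial cases earlier in the paper.
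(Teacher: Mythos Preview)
Your proof is correct and follows essentially the same approach as the paper: both identify $L(b)=b^{p^m}$ as a Frobenius power, then verify that $\eta$ and $\eta(\cdot+4)$ are preserved so that $T$ is mapped into itself. Your version simply spells out the injectivity and the identity $\eta(b)^{p^m}=\eta(b)$ more explicitly than the paper, which dismisses these as ``trivial to verify.''
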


\begin{proof}
	In the case, $L(b)=bl(b)^2=b^{p^m}$. It is trivial to verify $L(b)$ permutes $T$ because $\eta(L(b))= \eta(b)$ and $L(b) +4 = b^{p^m} + 4 = (b+4)^{p^m}$ implies that $\eta(L(b) +4) =\eta(b+4)$. 
\end{proof}

In the following, we construct permutation polynomials of the form  $xh\left(x^{q-1}\right)$ over $\gf_{q^2}$ from $l(b)$'s satisfying Lemma \ref{odd_mon_1}. In this case, $l(b)=b^{\frac{p^m-1}{2}}$, where $m>0$ is an integer. Then $\psi(a)=l\left(a^2-4\right)=\left(a^2-4\right)^{\frac{p^m-1}{2}}$. Moreover, the discriminant of Eq. (\ref{obtain_H(a)}) is 
\begin{eqnarray*}
	\Delta &=& \left(a^2-4\right)\psi^2(a)+4\\
	&=&\left(a^2-4\right)^{p^m}+4\\
	&=&a^{2p^m}-4^{p^m}+4\\
	&=&a^{2p^m}.
\end{eqnarray*}
The last step is due to the Fermat's little theorem, i.e., $4^{p}\equiv4\pmod p$. Furthermore, we have $$H(a)=\frac{-a\psi(a)\pm a^{p^m}}{2\psi(a)-2}.$$

{\bfseries Case I:} $H(a)=\frac{-a\psi(a)+ a^{p^m}}{2\psi(a)-2}=\frac{h_1(a)}{h_2(a)}$. We let $h_1(a)=-a\psi(a)+ a^{p^m}=-a\left(a^2-4\right)^{\frac{p^m-1}{2}}+a^{p^m}$ and $h_2(a)=2\left(a^2-4\right)^{\frac{p^m-1}{2}}-2.$ 

Let $n$ be even and $U_n(x)=\sum_{i=0}^{\frac{n}{2}}x^{2i}=x^n+x^{n-2}+\cdots+1$. Then according to Algorithm \ref{alg} and using $x^{q} = x^{-1}$ over $\mu_{q+1}$, we can obtain $$h(x)=-x(x+x^q)\left(x-x^q\right)^{p^m-2}+2\left(x-x^q\right)^{p^m-2}+xU_{p^m-1}-x^{q(p^m-2)}U_{p^m-3}.$$
Then we have the following theorem.

\begin{Th}
	\label{odd_permutation}
	Let $q=p^k$, $p$ be a prime, $U_n(x)=x^n+x^{n-2}+\cdots+1$, where $n$ is even, and let $f(x)=xh\left(x^{q-1}\right)$, where $h(x)=-x(x+x^q)\left(x-x^q\right)^{p^m-2}+2\left(x-x^q\right)^{p^m-2}+xU_{p^m-1}-x^{q(p^m-2)}U_{p^m-3}$ and $m>0$ is an integer.  Then $f(x)$ is a permutation polynomial over $\gf_{q^2}$ if and only if  $\gcd\left(\frac{p^m-1}{2}, p^k-1\right)=\gcd\left(\frac{p^m-1}{2}, \frac{p^k-1}{2}\right)$.
\end{Th}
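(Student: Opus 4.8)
The plan is to run the criterion of Theorem \ref{MainTheorem_odd_1} for this specific $h$, with $s=\frac{p^m-1}{2}$ and $q=p^k$, and to show that its three conditions collapse to the stated gcd equality. Condition (3) is immediate and carries no information: here $L(b)=b\,l(b)^2=b\cdot b^{2s}=b^{1+2s}=b^{p^m}$, so Lemma \ref{odd_mon_1} already guarantees that $R(a)$ permutes $S$ \emph{unconditionally}. Hence the gcd hypothesis cannot enter through (3), and I expect the entire content of the theorem to be produced by condition (2), while condition (1) should hold automatically.

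For condition (2) I would start from the explicit coefficients $h_1(a)=a^{p^m}-a(a^2-4)^s$ and $h_2(a)=2(a^2-4)^s-2$ read off from Case I, and factor $h_1^2-h_2^2$. Using the Frobenius identities $a^{p^m}\pm 2=(a\pm2)^{p^m}$ (valid because $2\in\gf_p$), together with $p^m-1=2s$ and $(a^2-4)^s=(a-2)^s(a+2)^s$, a short computation should give
$$h_1^2(a)-h_2^2(a)=-(a^2-4)^{s+1}\big[(a-2)^s-(a+2)^s\big]^2.$$
Since $a^2-4\neq0$ for every $a\in S$, this shows condition (2) is equivalent to $(a-2)^s\neq(a+2)^s$ for all $a\in S$, i.e. $\big(\tfrac{a+2}{a-2}\big)^s\neq1$.

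The decisive step is then the substitution $z=\frac{a+2}{a-2}$. This M\"obius map is a bijection of $\gf_q\setminus\{2\}$ onto $\gf_q\setminus\{1\}$, and from $z(a-2)^2=a^2-4$ one gets $\eta(z)=\eta(a^2-4)$; therefore $a$ runs over $S$ exactly when $z$ runs over the nonsquares of $\gf_q^*$. Thus condition (2) becomes: no nonsquare $z$ satisfies $z^s=1$. Now $\{z:z^s=1\}$ is the subgroup of $\gf_q^*$ of order $d:=\gcd(s,q-1)$, and it lies inside the squares (the subgroup of order $\frac{q-1}{2}$) precisely when $d\mid\frac{q-1}{2}$. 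A routine divisibility argument, using $d\mid s$ and $\gcd(s,\frac{q-1}{2})\mid d$, shows $d\mid\frac{q-1}{2}$ is equivalent to $\gcd(s,q-1)=\gcd(s,\frac{q-1}{2})$, which is exactly the asserted criterion. I regard this arithmetic passage, namely correctly identifying the image of $S$ as the full set of nonsquares and converting ``$\mu_d$ meets no nonsquare'' into the gcd equality, as the genuine heart of the argument.

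Finally I must dispose of condition (1), $h(1)\neq0$ and $h(-1)\neq0$, and I expect this to require care rather than real effort. Since $h_1$ is odd and $h_2$ is even in $a$, the original $h(x)=h_1(a)x+h_2(a)$ satisfies $h(-x)=h(x)$; tracking the divisions by the odd factor $x-x^{-1}$ in Algorithm \ref{alg} shows that the equivalence $h(1)=0\Leftrightarrow h(-1)=0$ is preserved at every step, so the loop terminates with both values nonzero and condition (1) holds automatically (equivalently, one checks directly that $h(1)=1$). Combining this with condition (3) being free and condition (2) being equivalent to the gcd equality completes the proof; the only step that truly uses the hypothesis is condition (2), and all the rest is bookkeeping.
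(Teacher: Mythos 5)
Your proposal is correct and follows essentially the same route as the paper: both reduce everything to condition (2) of Theorem \ref{MainTheorem_odd_1}, turn $h_1^2(a)\neq h_2^2(a)$ into $\bigl(\frac{a+2}{a-2}\bigr)^{\frac{p^m-1}{2}}\neq 1$, observe that $a\mapsto\frac{a+2}{a-2}$ carries $S$ onto the nonsquares of $\gf_q^{*}$, and convert the nonexistence of a nonsquare root of unity of that order into the stated gcd equality. Your version is in fact slightly more complete than the paper's, since you handle $H(a)=1$ and $H(a)=-1$ in one stroke via the factorization of $h_1^2-h_2^2$ and you explicitly verify condition (1) ($h(1)=1$, $h(-1)=-1$), which the paper passes over in silence.
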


\begin{proof}
	Let $h(x)=h_1(a)x+h_2(a)$. Then it follows from the above statement that $H(a)=\frac{h_1(a)}{h_2(a)}=\frac{-a\psi(a)+ a^{p^m}}{2\psi(a)-2}$, where $\psi(a)=\left(a^2-4\right)^{\frac{p^m-1}{2}}$. According to  Condition (2) of  Theorem \ref{MainTheorem_odd_1}, it suffices to consider the necessary and sufficient conditions so that $H(a)\neq1,-1$ for any $a\in S$. In the following, we only consider the equation $H(a)=1$ and the case $H(a)=-1$ is similar. For $$\frac{-a\psi(a)+ a^{p^m}}{2\psi(a)-2}=1,$$
	i.e., 
	$$\left(a^2-4\right)^{\frac{p^m-1}{2}}=(a+2)^{p^m-1}.$$
    Then we have $$\frac{a+2}{a-2}=\epsilon,$$ where $\epsilon \neq 1$, and 
    \begin{equation}
    \label{ep_m}
    \epsilon^{\frac{p^m-1}{2}}=1.
    \end{equation} Moreover, $a=\frac{2\epsilon+2}{\epsilon-1}.$ Since $$a^2-4=\frac{4^2}{(\epsilon-1)^2}\epsilon,$$ we have  $\eta(\epsilon)=-1$, i.e., 
    \begin{equation}
    \label{ep_k}
     \epsilon^{\frac{p^k-1}{2}}=-1.
    \end{equation}
     Therefore, $H(a)=1$ has a  solution in $S$ if and only if Eq. (\ref{ep_k}) and Eq. (\ref{ep_m}) have at least one common solution  in $\gf_{q}$. On one hand, the number of solutions to Eq. (\ref{ep_m}) in $\gf_{q}$ is $\gcd\left(\frac{p^m-1}{2}, p^k-1\right)$.  On the other hand,  $\epsilon^{\frac{p^k-1}{2}}=-1$ or $1$ for any $\epsilon$ which satisfies Eq. (\ref{ep_m}). This implies that the solutions of Eq. (\ref{ep_m}) all satisfy $\epsilon^{\frac{p^k-1}{2}}=1$ if and only if $\gcd\left(\frac{p^m-1}{2}, p^k-1\right)=\gcd\left(\frac{p^m-1}{2}, \frac{p^k-1}{2}\right)$. Therefore, Eq. (\ref{ep_k}) and Eq. (\ref{ep_m}) have no common solution in $\gf_{q}$ if and only if  $\gcd\left(\frac{p^m-1}{2}, p^k-1\right)=\gcd\left(\frac{p^m-1}{2}, \frac{p^k-1}{2}\right)$.
\end{proof}

{\bfseries Case II:} $H(a)=\frac{-a\psi(a)- a^{p^m}}{2\psi(a)-2}=\frac{h_1(a)}{h_2(a)}$. We have $h_1(a)=-a\psi(a)- a^{p^m}=-a\left(a^2-4\right)^{\frac{p^m-1}{2}}-a^{p^m}$ and $h_2(a)=2\left(a^2-4\right)^{\frac{p^m-1}{2}}-2.$ Then $$h(x)=-x(x+x^q)\left(x-x^q\right)^{p^m-1}+2\left(x-x^q\right)^{p^m-1}-x\left(x+x^q\right)^{p^m}-2.$$ 

Therefore, we can construct the following class of permutation polynomials over $\gf_{q^2}$. The proof is similar to that of Theorem \ref{odd_permutation}, so we omit all the details. 

\begin{Th}
	Let $q=p^k$, $p$ be a prime and $f(x)=xh\left(x^{q-1}\right)$, where $h(x)=-x(x+x^q)\left(x-x^q\right)^{p^m-1}+2\left(x-x^q\right)^{p^m-1}-x\left(x+x^q\right)^{p^m}-2$ and $m>0$ is an integer. Then $f(x)$ is a permutation polynomial over $\gf_{q^2}$ if and only if  $\gcd\left(\frac{p^m-1}{2}, p^k-1\right)=\gcd\left(\frac{p^m-1}{2}, \frac{p^k-1}{2}\right)$.
\end{Th}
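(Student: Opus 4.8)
The plan is to mirror the proof of Theorem~\ref{odd_permutation} and verify the three conditions of Theorem~\ref{MainTheorem_odd_1} for the displayed $h$, the only change being that we now select the other square root in Case II, i.e. $H(a)=\frac{-a\psi(a)-a^{p^m}}{2\psi(a)-2}$ with $\psi(a)=\left(a^2-4\right)^{\frac{p^m-1}{2}}$. First I would dispose of condition (3): since here $l(b)=b^{\frac{p^m-1}{2}}$, we have $L(b)=b\,l(b)^2=b^{p^m}$, which permutes $T$ by Lemma~\ref{odd_mon_1}, so $R(a)$ permutes $S$ automatically. Condition (1) is a direct check: plugging $x=\pm1$ into $h$ and using that $(x-x^q)$ vanishes at $x=\pm1$ together with $2^{p^m}=2$ leaves only a monomial term, which is nonzero. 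Thus everything reduces to condition (2), that $h_1^2(a)\neq h_2^2(a)$, i.e. $h_1(a)\neq\pm h_2(a)$, for every $a\in S$.

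Second, I would reuse the discriminant computation from Theorem~\ref{odd_permutation}: the quadratic~(\ref{obtain_H(a)}) has discriminant $\Delta=\left(a^2-4\right)\psi^2(a)+4=\left(a^2-4\right)^{p^m}+4=a^{2p^m}$ by Fermat ($4^{p^m}=4$), which is what produces the clean closed form for $H(a)$. Setting $h_1(a)=h_2(a)$ and clearing denominators gives $\psi(a)(a+2)=2-a^{p^m}=(2-a)^{p^m}$; factoring $a^2-4=(a-2)(a+2)$ and cancelling (legitimate since $a\neq\pm2$ on $S$) collapses this to a single power relation in $\epsilon:=\frac{a+2}{a-2}$, and the companion equation $h_1(a)=-h_2(a)$ reduces to the same relation. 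As in Theorem~\ref{odd_permutation}, the substitution $a\mapsto\epsilon$ is a bijection from $S$ onto $\{\epsilon\in\gf_q^{*}:\eta(\epsilon)=-1\}$, because $a^2-4=\frac{16\epsilon}{(\epsilon-1)^2}$ forces $\eta(a^2-4)=\eta(\epsilon)$. So condition (2) is equivalent to asserting that the power relation has no solution $\epsilon$ with $\eta(\epsilon)=-1$.

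The main obstacle is this last counting step, and it is the one I would not merely declare ``similar to Theorem~\ref{odd_permutation}.'' Carrying out the cancellation above, the Case II root yields $\epsilon^{\frac{p^m+1}{2}}=-1$ rather than the Case I relation $\epsilon^{\frac{p^m-1}{2}}=1$, so the final equivalence cannot be copied verbatim and must be re-derived. Writing $\gf_q^{*}=\langle g\rangle$ of order $n=p^k-1$ and $\epsilon=g^j$, the constraint $\eta(\epsilon)=-1$ is exactly ``$j$ odd,'' while the power relation becomes a linear congruence for $j$ modulo $n$; whether its solution coset meets the odd residues is a $2$-adic valuation comparison. I would therefore carefully compute the relevant valuation and confirm whether it really matches $\gcd\!\left(\frac{p^m-1}{2},p^k-1\right)=\gcd\!\left(\frac{p^m-1}{2},\frac{p^k-1}{2}\right)$. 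Because the Case II root shifts the exponent from $\frac{p^m-1}{2}$ to $\frac{p^m+1}{2}$ and the target from $1$ to $-1$, this equivalence must be established directly rather than inherited from Case I, and pinning it down is the crux of the argument.
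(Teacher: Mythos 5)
Your setup is the right one and, up to the point where you stop, it is more careful than the paper itself: the paper's entire proof of this theorem is the sentence that the proof is ``similar to that of Theorem \ref{odd_permutation}'' with all details omitted. Your reduction is correct: conditions (1) and (3) of Theorem \ref{MainTheorem_odd_1} are routine, and condition (2) for the Case II root does collapse (for both $h_1=h_2$ and $h_1=-h_2$) to the single relation $\epsilon^{\frac{p^m+1}{2}}=-1$ with $\epsilon=\frac{a+2}{a-2}$ ranging over the nonsquares of $\gf_q^{*}$. You are also right that this is \emph{not} the Case I relation $\epsilon^{\frac{p^m-1}{2}}=1$, so the equivalence with the stated gcd condition cannot simply be inherited. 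But you then stop: the final step --- showing that ``no nonsquare $\epsilon$ satisfies $\epsilon^{\frac{p^m+1}{2}}=-1$'' is equivalent to $\gcd\left(\frac{p^m-1}{2},p^k-1\right)=\gcd\left(\frac{p^m-1}{2},\frac{p^k-1}{2}\right)$ --- is exactly the content of the theorem, and your proposal only promises to ``carefully compute the relevant valuation.'' As written this is not a proof; the crux is missing.

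Worse, the deferred step cannot be completed as stated, so your caution was warranted. Take $q=5$ and $m=2$. Then $\psi(a)=\left(a^2-4\right)^{12}=1$ for every $a\in S=\{1,4\}$, so $h_2(a)=2\psi(a)-2=0$ and $h_1(a)=-2a\neq0$; condition (2) holds vacuously, and indeed $h(y)=-2ay$ on $\mu_6\setminus\{\pm1\}$ gives $g(y)=yh(y)^{4}=y^{-1}$ (since $(-2a)^4=1$), a permutation of $\mu_6$, while $h(\pm1)=-4\neq0$. Hence $f$ permutes $\gf_{25}$. But $\gcd\left(\frac{p^m-1}{2},p^k-1\right)=\gcd(12,4)=4\neq2=\gcd(12,2)$, so the stated criterion fails. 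In terms of your $\epsilon$-relation: $\epsilon^{13}=\epsilon$ on $\gf_5^{*}$ and $-1=4$ is a square, so no nonsquare satisfies $\epsilon^{\frac{p^m+1}{2}}=-1$ even though the gcd condition fails. The two conditions are therefore genuinely inequivalent, and any correct completion of your argument must keep the criterion you actually derived (no nonsquare $\epsilon$ with $\epsilon^{\frac{p^m+1}{2}}=-1$) and redo the counting for that relation, rather than reproduce the Case I gcd condition; in other words, the gap you identified is a gap in the theorem statement itself, not merely in your write-up.
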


\begin{Lemma}
	\label{odd_mon_2}
	Let $q=p^k$, $p$ be an odd prime and $T:=\{b\in\gf_q: \eta(b)=-1, \eta(b+4)=1\}$. Let $l(b)=b^s$ and $s=\frac{p^k+p^m-2}{2}$, where $m>0$ is an integer. Then $L(b)=bl(b)^2$ permutes $T$.
\end{Lemma}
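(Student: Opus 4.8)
The plan is to show that, on the set $T$, the map $L$ collapses to a power of the Frobenius automorphism, which makes the verification immediate. First I would compute $L$ explicitly: since $l(b)=b^s$ with $s=\frac{p^k+p^m-2}{2}$, we get $L(b)=b\,l(b)^2=b^{2s+1}=b^{p^k+p^m-1}$. The crucial reduction is that every $b\in T$ lies in $\gf_q^{*}$ (because $\eta(b)=-1$ forces $b\neq0$), so $b^{p^k}=b^q=b$. Hence $b^{p^k+p^m-1}=b\cdot b^{p^m-1}=b^{p^m}$, and therefore $L(b)=b^{p^m}$ for all $b\in T$. This is exactly the function appearing in Lemma~\ref{odd_mon_1}, so the present lemma reduces to the same elementary argument.

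Next I would argue that $x\mapsto x^{p^m}$ is a bijection of $\gf_q$, being the $m$-fold composition of the Frobenius map $x\mapsto x^p$, which is an automorphism of $\gf_q$. In particular $L$ is injective on $T$, so it suffices to check $L(T)\subseteq T$; the equality $L(T)=T$ then follows since $T$ is finite. To verify the inclusion, take $b\in T$ and confirm the two defining conditions of $T$ for $L(b)=b^{p^m}$. For the first, $\eta(b^{p^m})=\eta(b)^{p^m}=(-1)^{p^m}=-1$, using that $p^m$ is odd. For the second, Fermat's little theorem gives $4^{p^m}=4$ in $\gf_p$, so $b^{p^m}+4=b^{p^m}+4^{p^m}=(b+4)^{p^m}$ by additivity of Frobenius; hence $\eta(b^{p^m}+4)=\eta(b+4)^{p^m}=1^{p^m}=1$. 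Thus $L(b)\in T$, which finishes the argument.

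There is no serious obstacle here; the only point that needs care is the exponent reduction $b^{p^k+p^m-1}=b^{p^m}$, which is valid precisely because $T\subseteq\gf_q^{*}$ rather than on all of $\gf_{q^2}$. Once this collapse of the exponent is noticed, the lemma becomes identical in spirit to Lemma~\ref{odd_mon_1}, with the two character conditions handled by the facts that $\eta$ commutes with the odd power $p^m$ and that $4$ is fixed by Frobenius.
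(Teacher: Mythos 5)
Your proof is correct and follows exactly the route the paper intends: the paper omits the details with ``similar to Lemma~\ref{odd_mon_1},'' and your reduction $L(b)=b^{2s+1}=b^{p^k+p^m-1}=b^{p^m}$ on $T\subseteq\gf_q^{*}$, followed by the same two character computations ($\eta(b^{p^m})=\eta(b)$ and $b^{p^m}+4=(b+4)^{p^m}$), is precisely the intended argument. No issues.
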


\begin{proof}
	The proof is similar to Lemma \ref{odd_mon_1} and we omit all the details. 
\end{proof}

Similarly, we can also obtain the following two classes of permutation polynomials over $\gf_{q^2}$ from Lemma \ref{odd_mon_2} following the same construction process. 

\begin{Th}
	\label{odd_permutation_2}
	Let $q=p^k$, $p$ be a prime, $U_n(x)=x^n+x^{n-2}+\cdots+1$, where $n$ is even, and $f(x)=xh\left(x^{q-1}\right)$, where $h(x)=x(x+x^q)\left(x-x^q\right)^{p^m-2}-2\left(x-x^q\right)^{p^m-2}+xU_{p^m-1}-x^{q(p^m-2)}U_{p^m-3}$ and $m>0$ is an integer. Then $f(x)$ is a permutation polynomial over $\gf_{q^2}$ if and only if  $\gcd\left(\frac{p^k+p^m-2}{2}, p^k-1\right)=\gcd\left(\frac{p^k+p^m-2}{2}, \frac{p^k-1}{2}\right)$.
\end{Th}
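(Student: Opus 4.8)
The plan is to verify, for the displayed $h(x)$, the three conditions of Theorem~\ref{MainTheorem_odd_1} (equivalently Theorem~\ref{MainTheorem_odd_2}), following the proof of Theorem~\ref{odd_permutation} almost verbatim; the only change is the value of the exponent. Here $l(b)=b^{s}$ with $s=\frac{p^k+p^m-2}{2}$, so that $\psi(a)=l(a^2-4)=(a^2-4)^{s}$. Since $2s+1=p^k+p^m-1$ and $b^{p^k}=b$ for every $b\in\gf_q$, the induced map is $L(b)=b\,l(b)^2=b^{2s+1}=b^{p^m}$, which permutes $T$ by Lemma~\ref{odd_mon_2}. Hence Condition~$(iii)$ of Theorem~\ref{MainTheorem_odd_2} holds for free, and all the content sits in Conditions~$(i)$ and $(ii)$.

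Next I would reconstruct $h(x)$. Substituting $\psi(a)=(a^2-4)^{s}$ into~(\ref{obtain_H(a)}), the discriminant is $\Delta=(a^2-4)\psi(a)^2+4=(a^2-4)^{p^k+p^m-1}+4$. Using $(a^2-4)^{p^k}=a^2-4$ and $4^{p^m}=4$ on $\gf_q$ gives $\Delta=(a^2-4)^{p^m}+4=a^{2p^m}$, a perfect square with root $a^{p^m}$; this is what makes $h_1,h_2$ explicit. Solving~(\ref{obtain_H(a)}) yields $H(a)=\frac{h_1(a)}{h_2(a)}=\frac{-a\psi(a)+a^{p^m}}{2\psi(a)-2}$. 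Because $(a^2-4)^{\frac{p^k-1}{2}}=-1$ on $S$, one may replace $\psi(a)$ there by $-(a^2-4)^{\frac{p^m-1}{2}}$; running Algorithm~\ref{alg} (dividing out the factor $x-x^{-1}$) then returns the stated $h(x)$, the sign flip in its first two terms relative to Theorem~\ref{odd_permutation} being precisely the effect of $\psi\mapsto-\psi$, while the $U_{p^m-1},U_{p^m-3}$ terms descend unchanged from the common summand $a^{p^m}x$. The division step of Algorithm~\ref{alg} secures Condition~$(i)$, i.e. $h(1)\neq0$ and $h(-1)\neq0$.

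The crux is Condition~$(ii)$: $h_1^2(a)\neq h_2^2(a)$ for all $a\in S$. Writing out $h_1=h_2$ gives $a^{p^m}+2=(a+2)\psi(a)$; since $a^{p^m}+2=(a+2)^{p^m}$ this reads $(a^2-4)^{s}=(a+2)^{p^m-1}$, and with $\epsilon:=\frac{a+2}{a-2}$ one computes $\epsilon^{s}=(a+2)^{2s-(p^m-1)}=(a+2)^{p^k-1}=1$; the case $h_1=-h_2$ is symmetric and returns the same equation $\epsilon^{s}=1$. Conversely $\epsilon$ determines $a=\frac{2(\epsilon+1)}{\epsilon-1}$ with $a^2-4=\frac{16\epsilon}{(\epsilon-1)^2}$, so $a\in S$ is equivalent to $\eta(\epsilon)=-1$, i.e. $\epsilon^{\frac{p^k-1}{2}}=-1$. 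Thus Condition~$(ii)$ fails if and only if $\gf_q$ possesses a nonsquare $s$-th root of unity. As the $s$-th roots of unity form the subgroup of $\gf_q^{*}$ of order $\gcd(s,p^k-1)$, a short $2$-adic valuation argument shows this subgroup lies inside the squares (the index-two subgroup of order $\frac{p^k-1}{2}$) precisely when $\gcd(s,p^k-1)=\gcd\!\left(s,\tfrac{p^k-1}{2}\right)$. This is the asserted criterion, and reversing the chain completes the equivalence.

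I expect the real obstacle to be the bookkeeping in the middle paragraph, not a single deep step. One must argue with the polynomial identities $h_1=\pm h_2$ rather than with $H=h_1/h_2=\pm1$: the construction forces $h_1(a)=h_2(a)=0$ simultaneously at exactly the offending points of $S$ (those with $\epsilon^{s}=1$), where $H$ is the indeterminate form $0/0$, so naive manipulation of $H$ loses these solutions. One must also track the exponent $s=\frac{p^k+p^m-2}{2}$ accurately through the cancellation $\epsilon^{s}=(a+2)^{p^k-1}$, so that the final group-theoretic count is stated in terms of $\gcd(s,p^k-1)$ and $\gcd(s,\frac{p^k-1}{2})$ exactly as in the theorem.
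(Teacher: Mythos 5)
Your proposal is correct and follows exactly the route the paper intends: the paper omits the proof of this theorem, referring back to the construction process and the argument of Theorem~\ref{odd_permutation}, and you reproduce that argument with the exponent $s=\frac{p^k+p^m-2}{2}$ correctly propagated (in particular the key cancellation $\epsilon^{s}=(a+2)^{2s-(p^m-1)}=(a+2)^{p^k-1}=1$ and the final subgroup/gcd count). Your remark that the offending points of $S$ are exactly those where $h_1$ and $h_2$ vanish simultaneously, so that one should argue with $h_1=\pm h_2$ rather than with $H=\pm1$, is a slight refinement of the paper's own phrasing but not a different method.
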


\begin{Th}
	Let $q=p^k$, $p$ be a prime and $f(x)=xh\left(x^{q-1}\right)$, where $h(x)=x(x+x^q)\left(x-x^q\right)^{p^m-1}-2\left(x-x^q\right)^{p^m-1}-x\left(x+x^q\right)^{p^m}-2$ and $m>0$ is an integer. Then $f(x)$ is a permutation polynomial over $\gf_{q^2}$ if and only if  $\gcd\left(\frac{p^k+p^m-2}{2}, p^k-1\right)=\gcd\left(\frac{p^k+p^m-2}{2}, \frac{p^k-1}{2}\right)$.
\end{Th}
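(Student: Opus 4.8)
The plan is to check, for the specific $h$ in the statement, the three conditions of Theorem~\ref{MainTheorem_odd_1}, following step by step the argument already used for Theorem~\ref{odd_permutation}. First I would identify $h_1,h_2$. Writing $a=x+x^{-1}$ on $\mu_{q+1}$ and using $(x-x^q)^{p^m-1}=(a^2-4)^{(p^m-1)/2}$ together with $(x+x^q)^{p^m}=a^{p^m}$, the defining expression for $h$ gives
\[
h_1(a)=a\,(a^2-4)^{\frac{p^m-1}{2}}-a^{p^m},\qquad
h_2(a)=-2\,(a^2-4)^{\frac{p^m-1}{2}}-2 .
\]
On $S$ these coincide with the Case~II data $h_1=-a\psi-a^{p^m}$, $h_2=2\psi-2$ attached to $\psi(a)=(a^2-4)^{(p^k+p^m-2)/2}$, because $(a^2-4)^{(p^k-1)/2}=-1$ on $S$ forces $\psi=-(a^2-4)^{(p^m-1)/2}$ there.

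Conditions (1) and (3) I expect to be quick. For (1), at $x=\pm1$ one has $x-x^q=0$, so the two terms carrying $(x-x^q)^{p^m-1}$ vanish, leaving $h(1)=-2^{p^m}-2=-4$ and $h(-1)=(-2)^{p^m}-2=-4$, both nonzero since $\mathrm{char}\,\gf_q$ is odd. For (3), Lemma~\ref{odd_mon_2} gives that $L(b)=b\,l(b)^2$ permutes $T$ for $l(b)=b^{(p^k+p^m-2)/2}$; moreover $h_1$ is an odd and $h_2$ an even function of $a$, whence a direct computation yields $R(-a)=-R(a)$. Thus $R$ is an odd lift, over the $2$-to-$1$ map $a\mapsto a^2-4$, of the permutation $L$ of $T$, and therefore $R$ permutes $S$, which is condition (3).

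The heart of the matter is condition (2): $h_1^2(a)\neq h_2^2(a)$, i.e. $h_1\neq\pm h_2$, for every $a\in S$. I would treat $h_1=h_2$ and $h_1=-h_2$ separately, exactly as the equation $(a^2-4)^{(p^m-1)/2}=(a\pm2)^{p^m-1}$ was handled in the proof of Theorem~\ref{odd_permutation}. Substituting the explicit $h_1,h_2$, applying the Frobenius identities $a^{p^m}-2=(a-2)^{p^m}$ and $a^{p^m}+2=(a+2)^{p^m}$ (legitimate because $2^{p^m}=2$), and clearing the surviving factor of $a\pm2$, each equation collapses, for $a\in S$, to a single power relation of the form $\epsilon^{N}=\pm1$ in the variable $\epsilon:=\tfrac{a+2}{a-2}$. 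As in Theorem~\ref{odd_permutation}, the identity $a^2-4=\tfrac{16\epsilon}{(\epsilon-1)^2}$ shows that $a\in S$ is equivalent to $\eta(\epsilon)=-1$, that is $\epsilon^{(p^k-1)/2}=-1$. Hence condition (2) holds if and only if the combined system in $\epsilon$ has no solution; counting the solutions of $\epsilon^{N}=1$ as a cyclic subgroup of $\gf_q^{*}$ of order $\gcd(N,p^k-1)$ and requiring that they all be squares then converts the statement into the gcd identity $\gcd\!\big(\tfrac{p^k+p^m-2}{2},p^k-1\big)=\gcd\!\big(\tfrac{p^k+p^m-2}{2},\tfrac{p^k-1}{2}\big)$, as at the end of that proof.

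The step I expect to be the main obstacle is precisely this last reduction. The bookkeeping of exponents and signs is subtle: the sign produced by $\psi=-(a^2-4)^{(p^m-1)/2}$ on $S$ couples with the choice of square root $\sqrt{\Delta}=\pm a^{p^m}$ of $\Delta=(a^2-4)^{p^m}+4=a^{2p^m}$, and it is this sign---absent in the Case~I computation behind Theorem~\ref{odd_permutation_2}---that governs whether the collapsed relation is $\epsilon^{N}=1$ or $\epsilon^{N}=-1$ and which $N$ appears. One must also keep track of the degenerate locus where $h_1(a)=h_2(a)=0$ simultaneously, so that $h_1/h_2$ is indeterminate and $h$ vanishes on $\mu_{q+1}$, violating condition~(iii) of Theorem~\ref{Maintheorem}; this case is absorbed into $h_1^2=h_2^2$ and must be counted as a genuine failure of (2). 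Once the power equation and the quadratic-character constraint are assembled correctly, matching the resulting solvability criterion to the stated gcd identity is the decisive---and most error-prone---computation, and it is where I would concentrate the careful verification.
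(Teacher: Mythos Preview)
Your plan is exactly what the paper intends: its ``proof'' of this theorem is only the sentence that the argument is similar to that of Theorem~\ref{odd_permutation}, so reproducing that argument with the present $h_1,h_2$ is precisely the task. Your identification of $h_1(a)=a(a^2-4)^{(p^m-1)/2}-a^{p^m}$ and $h_2(a)=-2(a^2-4)^{(p^m-1)/2}-2$, the check $h(\pm1)=-4\neq0$, and the oddness argument giving that $R$ permutes $S$ from Lemma~\ref{odd_mon_2} are all correct.

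The gap is exactly where you predicted it. If you actually carry out the factorization of $h_1\mp h_2$ using $a^{p^m}\mp2=(a\mp2)^{p^m}$, you get
\[
h_1-h_2=(a-2)^{\frac{p^m-1}{2}}\Big[(a+2)^{\frac{p^m+1}{2}}-(a-2)^{\frac{p^m+1}{2}}\Big],\qquad
h_1+h_2=(a+2)^{\frac{p^m-1}{2}}\Big[(a-2)^{\frac{p^m+1}{2}}-(a+2)^{\frac{p^m+1}{2}}\Big],
\]
so on $S$ both equations $h_1=\pm h_2$ reduce to $\epsilon^{(p^m+1)/2}=1$ with $\epsilon=\tfrac{a+2}{a-2}$, not to $\epsilon^{(p^k+p^m-2)/2}=1$. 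Combining with $\epsilon^{(p^k-1)/2}=-1$ as in Theorem~\ref{odd_permutation}, condition~(2) becomes
\[
\gcd\!\Big(\tfrac{p^m+1}{2},\,p^k-1\Big)=\gcd\!\Big(\tfrac{p^m+1}{2},\,\tfrac{p^k-1}{2}\Big),
\]
which is \emph{not} in general equivalent to the printed condition with $\tfrac{p^k+p^m-2}{2}$. For instance, take $p=5$, $k=m=1$: then $\gcd(3,4)=\gcd(3,2)=1$ so the condition above holds, whereas $\gcd(4,4)=4\neq2=\gcd(4,2)$ so the stated condition fails; and one checks directly that $g(x)=x^{-1}$ on $\mu_6$, so $f$ \emph{is} a permutation of $\gf_{25}$. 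Thus your outline is sound and matches the paper's intended method, but following it will not recover the gcd identity as printed; the discrepancy lies in the exponent that emerges from condition~(2), and you should expect to arrive at the criterion involving $\tfrac{p^m+1}{2}$ rather than $\tfrac{p^k+p^m-2}{2}$.
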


\section{Comparison with known results}

Recently, there are many new constructions of permutation trinomials over finite fields, especially with even characteristic. In this section, we consider how our method can explain these earlier results. First of all, by Algorithm \ref{alg1} and our theorems, we compute all permutation trinomials over finite fields with even characteristic  such that the degrees of corresponding fractional polynomials are low. For example, in \cite{LikangquanConstructed}, Li et al. obtained the following theorem.
\begin{Th}
	\label{Example}
	\cite{LikangquanConstructed}
	Let $q=2^k$, $k$ be even, $l$ be an integer such that $\gcd(2l+1,q-1)=1$. Then $f(x)=x^{lq+l+1}+x^{(l+3)q+l-2}+x^{(l-1)q+l+2}$ is a permutation trinomial over $\gf_{q^2}$.
\end{Th}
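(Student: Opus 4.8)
The plan is to recast $f$ as $x^{r}h(x^{q-1})$ and feed it into Theorem \ref{Maintheorem}. Writing each exponent in the form $s(q-1)+r$, all three terms are seen to share $r=2l+1$ (with $s=l,\,l+3,\,l-1$), so
\[
f(x)=x^{2l+1}\bigl(x^{l(q-1)}+x^{(l+3)(q-1)}+x^{(l-1)(q-1)}\bigr)=x^{r}h\bigl(x^{q-1}\bigr),\qquad h(x)=x^{l-1}+x^{l}+x^{l+3}.
\]
The hypothesis $\gcd(2l+1,q-1)=1$ is exactly condition (i) of Theorem \ref{Maintheorem}, so the whole problem collapses to showing that the corresponding fractional polynomial $g(x)=x^{r}h(x)^{q-1}$ permutes $\mu_{q+1}$.

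Next I would evaluate $g$ on $\mu_{q+1}$ using $x^{q}=x^{-1}$. Since $h(x)=x^{l-1}(x^{4}+x+1)$ and $h(x)^{q}=x^{-(l+3)}(x^{4}+x^{3}+1)$, all the $l$-dependent powers cancel and
\[
g(x)=\frac{x^{4}+x^{3}+1}{x\,(x^{4}+x+1)}=x^{3}\,(x^{4}+x+1)^{q-1},
\]
independent of $l$ --- precisely the phenomenon described in the Remark following Corollary \ref{Maintheorem_r=1}. It therefore suffices to treat the single representative $l=1$, i.e. $h(x)=1+x+x^{4}$, and apply Theorem \ref{MainTheorem_even_2}. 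Running Algorithm \ref{alg1} (repeatedly substituting $x^{2}=ax+1$) yields $x^{4}=a^{3}x+(a^{2}+1)$, hence $h_{1}(a)=a^{3}+1$ and $h_{2}(a)=a^{2}$. Thus $\psi(a)=\frac{a^{3}+1}{a^{3}+a^{2}+1}$, $l(b)=\psi(1/b)=\frac{b^{3}+1}{b^{3}+b+1}$, and a short simplification gives
\[
L(b)=b+l(b)+l(b)^{2}=\frac{b^{3}(b^{4}+b+1)}{(b^{3}+b+1)^{2}}.
\]

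The two side conditions of Theorem \ref{MainTheorem_even_2} are routine: $h(1)=1\neq0$, and $h_{1}(1/b)\neq h_{2}(1/b)$ on $T$ reduces to $b^{3}+b+1\neq0$ for $b\in T$, which holds because $b^{3}+b+1$ is irreducible over $\gf_{2}$ (its roots lie in $\gf_{8}$) and, $k$ being even, any root lying in $\gf_{8}\subseteq\gf_{q}$ requires $3\mid k$, hence $6\mid k$, so it has trace $0$ over $\gf_{2}$ and cannot belong to $T$. The crux --- and the step I expect to be the main obstacle --- is to prove that $L$ permutes $T=\{b\in\gf_{q}:\tr(b)=1\}$ when $k$ is even. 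Because $l(b)$ is a genuine rational function rather than a monomial, the clean ``quadratic in $\Delta$'' device of Lemma \ref{Case_Mon_s=-1} is unavailable; instead I would show that $L(b)=c$ has at most one solution in $T$ for each $c\in T$, clearing denominators to the degree-seven relation $b^{7}+cb^{6}+b^{4}+b^{3}+cb^{2}+c=0$. Here the hypothesis $k$ even enters decisively in two ways: it gives $3\mid q-1$, so $\omega\in\gf_{q}$ and the quartic splits as $b^{4}+b+1=(b^{2}+b+\omega)(b^{2}+b+\omega^{2})$, and it gives $\tr(1)=0$, so $1\notin T$, removing the degenerate branch. An equivalent and perhaps cleaner target is to prove directly that $g$ is injective on $\mu_{q+1}$, again using $x^{q}=x^{-1}$ to collapse the degree; in either formulation, this single injectivity statement carries essentially all the content, the remainder being bookkeeping already supplied by Theorem \ref{Maintheorem}.
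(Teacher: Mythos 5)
There is a genuine gap, and it occurs at the step you flag as ``bookkeeping'': the choice of representative. Your reduction to the $l$-independent fractional polynomial $g(x)=\frac{x^4+x^3+1}{x\,(x^4+x+1)}$ is correct, but Theorem \ref{MainTheorem_even_2} (and the formula $L(b)=b+l(b)+l(b)^2$ with $l(b)=\psi(1/b)$, $\psi=h_1/(h_1+h_2)$) is the $r=1$ instance of the machinery: its condition (3) tests whether $x\,h(x)^{q-1}$ permutes $\mu_{q+1}$. Taking $l=1$ gives $r=2l+1=3$ and $h(x)=1+x+x^4$, and $x\,h(x)^{q-1}=x^{-3}\frac{x^4+x^3+1}{x^4+x+1}$ on $\mu_{q+1}$, which differs from the required $g$ by the factor $x^{-2}$. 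So the quantities you compute --- $h_1(a)=a^3+1$, $h_2(a)=a^2$, $l(b)=\frac{b^3+1}{b^3+b+1}$, $L(b)=\frac{b^3(b^4+b+1)}{(b^3+b+1)^2}$ --- encode the permutation behaviour of the different trinomial $x+x^{q}+x^{4q-3}$, not of $f$. (Note also that $l=1$ is not even an instance of the theorem when $k$ is even, since then $3\mid q-1$ and $\gcd(2l+1,q-1)=3$.) The representative compatible with Theorem \ref{MainTheorem_even_2} is $l=0$: then $r=1$ and $h(x)=x^{-1}+1+x^{3}$, Algorithm \ref{alg1} gives $h_1(a)=a^2$, $h_2(a)=1$, hence $l(b)=\frac{1}{1+b^2}$ and
\begin{equation*}
L(b)=b+\frac{1}{1+b^{2}}+\frac{1}{1+b^{4}}.
\end{equation*}
This is exactly the paper's route, and it makes your anticipated ``main obstacle'' disappear: since $k$ is even, $\tr(1)=0$, so $y=1+b$ maps $T$ to $T$ and $L(1+y)=1+\bigl(y+y^{-2}+y^{-4}\bigr)$; that $y+y^{-2}+y^{-4}$ permutes $T$ is precisely Lemma \ref{s_-2}, already proved via the quartic-factorization Lemmas \ref{lem3} and \ref{lemm4}.

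Even setting aside the wrong representative, the proof is incomplete as written: the one substantive claim --- that your $L$ permutes $T$ --- is only a plan (clear denominators to a degree-seven relation, split the quartic using $\omega\in\gf_q$), not an argument, and it is not obvious it can be carried out for that $L$. The side conditions you do verify ($h(1)\neq 0$ and $h_1\neq h_2$ on $T$ via the irreducibility of $b^3+b+1$) are checked for the wrong $h$ as well; with $h(x)=x^{-1}+1+x^3$ the second condition reduces simply to $b\neq 1$ on $T$, which holds because $\tr(1)=0$ for $k$ even. To repair the proof: replace $l=1$ by $l=0$ (equivalently, divide your $h$ by $x$ so that $r$ becomes $1$), recompute $h_1,h_2$, and invoke Lemma \ref{s_-2} after the shift $b\mapsto 1+b$.
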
 
In the above theorem, $f(x)=x^rh\left(x^{q-1}\right)$, where $r=1+l(q+1)$, i.e., $1$, and $h(x)=1+x^3+x^{-1}$. Through Algorithm \ref{alg1}, we can obtain $h_1(a)=a^2$ and $h_2(a)=1$ in the case. Furthermore, in Theorem \ref{MainTheorem_even_2}, $l(b)=\frac{h_1\left(\frac{1}{b}\right)}{h_1\left(\frac{1}{b}\right)+h_2\left(\frac{1}{b}\right)}=\frac{1}{1+b^2}$. Therefore, according to Theorem \ref{MainTheorem_even_2}, it suffices to prove that $L(b)=b+\frac{1}{1+b^2}+\frac{1}{1+b^4}$ permutes $T:=\{b\in\gf_{q}: \tr(b)=1 \}$ when $k$ is even. Let $y=1+b$. Then $\tr(y)=\tr(1+b)=\tr(b)=1$ since $k$ is even, which means $y\in T$. Moreover, $L(1+y)=1+y+\frac{1}{y^2}+\frac{1}{y^4}$. Therefore, it suffices to prove that $G(y)=y+\frac{1}{y^2}+\frac{1}{y^4}$ permutes $T$, which is true by Lemma \ref{s_-2}. What amazes us is that though Theorem \ref{PP_s=-2} and Theorem \ref{Example} look very different, they can be obtained from a similar permutation over $T$ by our method. Similarly, we can derive many recent constructions which are listed in the following table. Because the explanation is similar, the details are thus omitted.  It can be seen from this table that most of these known permutation trinomials over finite fields with even characteristic can be explained by our method. In the following table, the column labelled with $``g(x)"$ refers to the corresponding fractional polynomials for the class of permutation trinomials in  references listed in the column with label ``Ref". Their corresponding $L(b)$'s over $T$ are shown in the second column. And the conditions when the corresponding $L(b)$ permutes $T$ are given in the third column,   while the source of the proof (either by a lemma in this paper or  a standard argument in Subsection B.3 of Section 3)  is in the final column.  Lastly,  the symbol $"-"$ means that these cases can not be easily explained by the new method up to now.

\begin{table}[htp]
	\centering
	\caption{The list of known results}
	\begin{tabular}{ c  c  c c  c}		
		\hline 
		\hline
		$g(x)$ & $L(b)$   & Conditions  &  Ref. &  Obtained by \\
		\hline
		$\frac{x^3+x^2+1}{x^3+x+1}$ & $b+\frac{1}{b}+\frac{1}{b^2}$    &  $k>0$ &\cite{DQ,Zhazhengbang} & Lemma \ref{Case_Mon_s=-1}\\
		\hline
		$\frac{x^4+x^3+x}{x^3+x+1}$ &  $b+b^2+b^4$  &  $\gcd(3,k)=1$ & \cite{RS,LikangquanFFa,LikangquanConstructed} & B.3\\
		\hline 
		$\frac{x^5+x^4+1}{x^5+x+1}$ &  $b+\frac{1}{b}+\frac{1}{b^2}$   & $k>0$ & \cite{Zhazhengbang,LiNian1,LikangquanConstructed}  & Lemma \ref{Case_Mon_s=-1}\\
		\hline 
		$\frac{x^5+x^2+x}{x^4+x^3+1}$ & $b+\frac{1}{1+b^2}+\frac{1}{1+b^4}$  & $k$ even   & \cite{Zhazhengbang,LiNian1,LikangquanConstructed} & Lemma \ref{s_-2}\\
		\hline 
		$\frac{x^5+x^4+x}{x^4+x+1}$ &  $b+\frac{1}{1+b}+\frac{1}{1+b^2}$  &  $k$ even & \cite{RS,LiNian1,LikangquanConstructed} & Lemma \ref{Case_Mon_s=-1}\\
		\hline 
		$\frac{x^6+x^2+x}{x^5+x^4+1}$ & $b+b^2+b^4$   &  $\gcd(3,k)=1$ & \cite{LikangquanConstructed} & B.3\\
		\hline
		$\frac{x^7+x^5+1}{x^7+x^2+1}$ & $b+\frac{1}{b^2}+\frac{1}{b^4}$ &  $k>0$  & \cite{LiNian1} & Lemma \ref{s_-2} \\
		\hline
		$\frac{x^7+x^6+x}{x^6+x+1}$ &  $b+\frac{b^2}{b^3+b+1}+\frac{b^4}{b^6+b^2+1}$   &  $\gcd(3,k)=1$ & \cite{LikangquanCCDS} & $-$ \\
		\hline
		$\frac{x^9+x^3+x}{x^8+x^6+1}$ & $b+\frac{b^3}{b^4+b^3+1}+\frac{b^6}{b^8+b^6+1}$ &  $k\not\equiv0\pmod4$  & \cite{LikangquanCCDS} & $-$\\
		\hline
		\hline
	\end{tabular}
\end{table}	


\section{Conclusion}

Motivated by several recent constructions of permutation trinomials over finite fields with even characteristic, in this paper, we present a new and general approach to constructing permutation polynomials of the form $x^rh\left(x^{q-1}\right)$ over $\gf_{q^2}$, where $q=p^k$ and $h(x)\in\gf_q[x]$ is arbitrary. We transform the problem of proving that $f(x)=x^rh\left(x^{q-1}\right)$ is a permutation polynomial over $\gf_{q^2}$ into that of verifying that the corresponding rational function $R(a)$ permutes $S$ (Theorem \ref{Maintheorem}).  This provides a general way to construct permutation polynomials over $\gf_{q^2}$ from these rational functions $R(a)$ which permutes $S$. Because the fractional polynomials $g(x)=x^rh\left(x^{q-1}\right)$ obtained from the same $R(a)$ are the same for different $r$'s, we  concentrate on constructing permutation polynomials of the form $xh\left(x^{q-1}\right)$ over $\gf_{q^2}$, i.e., the case $r=1$. 

For the case $\mathrm{char}\gf_{q^2}=2$,  we construct permutations of $\gf_{q^2}$ from certain rational functions $L(b)=b+l(b)+l(b)^2$ which permutes $T:=\{b\in\gf_{q}: \tr(b)=1  \}$.  We  demonstrate our method using  some specific cases for $l(b)$, i.e., $l(b)$ is a monomial or a  linearized polynomial. In the case where $l(b)=b^s$ is a monomial, we obtain  experimental results under the conditions where $k$ is from $3$ to $12$ and $s$ is not the power of $2$ such that $L(b)=b+l(b)+l(b)^2$ permutes $T$ by using the Magma (see TABLE I). We characterize five infinite  classes of permutation polynomials that explaining all the data except two sporadic cases in TABLE I.  What impresses us is the case where $l(b)$ is a linearized polynomial.  In the case, the original problem is reduced to verifying that $L(b) = b + l(b) + l(b)^2$ permutes the subfield $\gf_q$. In particular,  we obtain all results where $L(b)=b+l(b)+l(b)^2$ is a monomial or binomial, as well as  several results in the case where $L(b)$ is a trinomial.  We want to emphasize that it seems not  easy to prove these results directly using the fractional approach. {Moreover, our new method  can explain most of the known permutation trinomials, which are in \cite{DQ,LiNian1,LikangquanFFa,LikangquanConstructed,Zhazhengbang,RS} over finite fields with even characteristic (see TABLE V).} As for the case $q=p^k$, where $p$ is odd, we mainly construct four classes of permutation polynomials over $\gf_{q^2}$ from monomials $L(b)=bl(b)^2$ which permute $T=\{b\in\gf_q: \eta(b)=-1, \eta(b+4)=1\}$.

\end{document}